\newif\ifital  \italfalse  
\newif\ifdraft  \draftfalse  
\newlength{\augwidth}\setlength{\augwidth}{1.2cm}
\newlength{\augheight}\setlength{\augheight}{.5cm}
\newcommand{\JSLandscape}%
  {\ifital \thispagestyle{empty}\mbox{ }\clearpage%
      \addtocounter{page}{-1}\fi}
\newcounter{hours}\newcounter{minutes}
\newcommand\printtime%
\def\@evenfoot{%
\makebox[0pt][l]{\footnotesize
\ifdraft\texttt{Paper in JCSC}\fi}  
\hfil -- {\normalsize \thepage} -- \hfil 
\makebox[0pt][r]{\footnotesize
\ifdraft\texttt{started on  17/05/21}\fi}  
}
\def\@oddfoot{%
\makebox[0pt][l]{\footnotesize
\ifdraft\texttt{Compiled \today}\fi}  
\hfil -- {\normalsize \thepage} -- \hfil 
\makebox[0pt][r]{\footnotesize
\ifdraft\texttt{at \printtime}\fi}  
}
\newcommand{\NEM}[1]%
   {\ifdraft%
    \marginpar[\begin{flushright}%
               {\sl {\scriptsize #1}}%
               \end{flushright}]%
              {\begin{flushleft}%
               {\sl {\scriptsize #1}}%
               \end{flushleft}}%
	\fi}%
\newcommand{\CMT}[1]%
   {\ifdraft%
    \par
    \noindent
    XXXXXX\PushLine  
{\small\textsl{Commentaire trop grand pour rentrer dans la marge...}} 
    \PushLine XXXXXX
    \par
    \textsl{#1}
    \par
    \noindent
    XXXXX\PushLine 
{\small\textsl{...mais ce n'est pas une raison pour ne rien \'ecrire}} 
    \PushLine XXXXXX\\
    \fi}%
\newcommand{\cmt}[1]%
   {\ifdraft%
    \par
    \noindent
	XXXXXX\PushLine \textsc{#1}
	\PushLine XXXXXX\\
 	\fi}%
\newcommand{\BibDir}{./bibinputs/}%
\theoremstyle{plain}
\newtheorem{theorem}{Theorem}
\newtheorem{corollary}[theorem]{Corollary}
\newtheorem{lemma}[theorem]{Lemma}
\newtheorem{proposition}[theorem]{Proposition}
\newtheorem{claim}{Claim}
\theoremstyle{definition}
\newtheorem{definition}[theorem]{Definition}
\newtheorem{example}[theorem]{Example}
\newenvironment{proofss}[1][\proofname]{\par
  \normalfont \topsep6\p@\@plus6\p@\relax
  \trivlist
  \item[\hskip\labelsep
        \itshape
    #1\@addpunct{.}]\ignorespaces
}{%
  \endtrivlist\@endpefalse
}
\newcommand{\clmcl}[1]{Claim~\ref{cl.#1}}
\newcommand{\equat}[1]{Equation~(\ref{q.#1})}
\newcommand{\equnm}[1]{(\ref{q.#1})\xspace}
\newcommand{\figur}[1]{Figure~\ref{f.#1}}
\newcommand{\lemme}[1]{Lemma~\ref{l.#1}}
\newcommand{\propo}[1]{Proposition~\ref{p.#1}}
\newcommand{\theor}[1]{Theorem~\ref{t.#1}}
\newcommand{\secti}[1]{Section~\ref{s.#1}}
\newcounter{theor-tmp}
\newcounter{exa-exp}%
\newcommand{\idex}{1}%
\newcommand{\LatinLocution}[1]{{\itshape #1}\xspace}
\newcommand{\cf}{\LatinLocution{cf.}}
\newcommand{\etc}{\LatinLocution{etc.}} 
\newcommand{\ie}{{that is, }}
\newcommand{\e}{\text{\quad}}                 
\newcommand{\ee}{\text{\qquad}}               
\newcommand{\eee}{\text{\qquad \qquad}} 
\newcommand{\Defi}[2]%
    {\left\{#1\xmd\xmd\middle|\xmd\xmd#2\right\}}
\newlength{\retraita}\setlength{\retraita}{1.5\parindent}
\newcommand{\pointn}{\noindent \makebox[1.2em]{$\bullet$}\ignorespaces}
\newcommand{\fa}{\forall}
\newsavebox{\InterSymbolSpace}
\savebox{\InterSymbolSpace}{\hspace{0.125em}}
\newsavebox{\SideFormulaSpace}
\savebox{\SideFormulaSpace}{\hspace{0.2em}}
\savebox{\SideFormulaSpace}{\hspace{0.2em}}
\newcommand{\msp}{\usebox{\SideFormulaSpace}} 
\newcommand{\xmd}{\usebox{\InterSymbolSpace}} 
\newcommand{\eqpnt}{\makebox[0pt][l]{\: .}}
\newcommand{\eqvrg}{\makebox[0pt][l]{\: ,}}
\newcommand{\EqVrgInt}{\: , \e }
\newcommand{\quantvrg}{\, , \;}
\newcommand{\EqPnt}{\: .}
\newcommand{\quantsp}{\ee }
\newcommand{\quantsmsp}{\e }
\newcommand{\bk}{\setminus}
\newcommand{\Ac}{\mathcal{A}}
\newcommand{\Bc}{\mathcal{B}}
\newcommand{\Dc}{\mathcal{D}}
\newcommand{\Fc}{\mathcal{F}}
\newcommand{\Ic}{\mathcal{I}}
\newcommand{\Jc}{\mathcal{J}}
\newcommand{\Sc}{\mathcal{S}}
\newcommand{\Tc}{\mathcal{T}}
\newcommand{\Uc}{\mathcal{U}}
\newcommand{\mathjsu}[1]{\mathsf{#1}}
\newcommand{\redmatu}[1]{\scalebox{0.84}{#1}}
\newcommand{\matriceuu}[1]%
    {\begin{pmatrix} #1 \end{pmatrix}}
\newcommand{\matricedd}[4]%
    {\begin{pmatrix} #1 & #2 \\ #3 & #4 \end{pmatrix}}
\newcommand{\vecteurd}[2]%
    {\begin{pmatrix} #1 \\ #2 \end{pmatrix}}
\newcommand{\ligned}[2]%
    {\begin{pmatrix} #1 & #2 \end{pmatrix}}
\newcommand{\matricett}[9]%
    {\begin{pmatrix}  #1 & #2 & #3 \\
                      #4 & #5 & #6 \\
                      #7 & #8 & #9 \end{pmatrix}}
\newcommand{\vecteurt}[3]%
    {\begin{pmatrix} #1 \\ #2 \\ #3 \end{pmatrix}}
\newcommand{\lignet}[3]%
    {\begin{pmatrix} #1 & #2 & #3 \end{pmatrix}}
\newlength{\jsWidthCol}
\newlength{\blocinterligne}
\newlength{\blocinterligned}
\newlength{\temparraycolsep}
\newlength{\longueurbloc}
\newlength{\hauteurbloc}
\newlength{\centragebloc}
\newlength{\longueurblc}
\newlength{\hauteurblc}
\newlength{\centrageblc}
\newcommand{\blocligne}[1]%
    {\framebox[\longueurbloc]{$#1$}}
\newcommand{\blocmatrice}[1]%
    {\framebox[\longueurbloc]{\rule[\centragebloc]{0mm}{\hauteurbloc}$#1$}}
\newcommand{\blocvecteur}[1]%
    {\framebox{\rule[\centragebloc]{0mm}{\hauteurbloc}$#1$}}
\newcommand{\blcligne}[1]%
    {\framebox[\longueurblc]{$#1$}}
\newcommand{\blcmatrice}[1]%
    {\framebox[\longueurblc]{\rule[\centrageblc]{0mm}{\hauteurblc}$#1$}}
\newcommand{\blcvecteur}[1]%
    {\framebox{\rule[\centrageblc]{0mm}{\hauteurblc}$#1$}}
\newcommand{\matriceddblvs}[4]
   {\setlength{\temparraycolsep}{\arraycolsep}%
    \setlength{\arraycolsep}{1.3pt}%
        \left (%
    \begin{array}{cc}%
                #1  & \blcligne{#2} \\
            \blcvecteur{#3} & \blcmatrice{#4}
        \end{array}%
        \right )%
    \setlength{\arraycolsep}{\temparraycolsep}%
   }%
\newcommand{\vecteurdblvs}[2]%
   {\setlength{\temparraycolsep}{\arraycolsep}%
    \setlength{\arraycolsep}{1.5pt}%
        \left (%
    \begin{array}{c}%
                #1  \\
            \blcvecteur{#2}
        \end{array}%
        \right )%
    \setlength{\arraycolsep}{\temparraycolsep}%
   }%
\newcommand{\lignedblvs}[2]%
   {\setlength{\temparraycolsep}{\arraycolsep}%
    \setlength{\arraycolsep}{1.5pt}%
        \left (%
    \begin{array}{cc}%
                #1  & \blcligne{#2}
        \end{array}%
        \right )%
    \setlength{\arraycolsep}{\temparraycolsep}%
   }%
\newcommand{\matricettblvs}[9]
   {\setlength{\temparraycolsep}{\arraycolsep}%
    \setlength{\arraycolsep}{1.5pt}%
        \left (%
    \begin{array}{ccc}%
                #1  & \blcligne{#2} & #3\\
            \blcvecteur{#4} & \blcmatrice{#5} & \blcvecteur{#6}\\
                #7  & \blcligne{#8} & #9\\
        \end{array}%
        \right )%
    \setlength{\arraycolsep}{\temparraycolsep}%
   }%
\newcommand{\vecteurtblvs}[3]%
   {\setlength{\temparraycolsep}{\arraycolsep}%
    \setlength{\arraycolsep}{1.5pt}%
        \left (%
    \begin{array}{c}%
                #1  \\
            \blcvecteur{#2}\\
                #3
        \end{array}%
        \right )%
    \setlength{\arraycolsep}{\temparraycolsep}%
   }%
\newcommand{\lignetblvs}[3]%
   {\setlength{\temparraycolsep}{\arraycolsep}%
    \setlength{\arraycolsep}{1.5pt}%
        \left (%
    \begin{array}{ccc}%
                #1  & \blcligne{#2} & #3
        \end{array}%
        \right )%
    \setlength{\arraycolsep}{\temparraycolsep}%
   }%
\newcommand{\matricettblblvs}[9]
   {\setlength{\temparraycolsep}{\arraycolsep}%
    \setlength{\arraycolsep}{1.5pt}%
        \left (%
    \begin{array}{ccc}%
                #1  & \blcligne{#2} & \blcligne{#3}\\
            \blcvecteur{#4} & \blcmatrice{#5} & \blcmatrice{#6}\\
                \blcvecteur{#7}  & \blcmatrice{#8} & \blcmatrice{#9}\\
        \end{array}%
        \right )%
    \setlength{\arraycolsep}{\temparraycolsep}%
   }%
\newcommand{\vecteurtblblvs}[3]%
   {\setlength{\temparraycolsep}{\arraycolsep}%
    \setlength{\arraycolsep}{1.5pt}%
        \left (%
    \begin{array}{c}%
                #1  \\
            \blcvecteur{#2}\\
                \blcvecteur{#3}
        \end{array}%
        \right )%
    \setlength{\arraycolsep}{\temparraycolsep}%
   }%
\newcommand{\lignetblblvs}[3]%
   {\setlength{\temparraycolsep}{\arraycolsep}%
    \setlength{\arraycolsep}{1.5pt}%
        \left (%
    \begin{array}{ccc}%
                #1  & \blcligne{#2} & \blcligne{#3}
        \end{array}%
        \right )%
    \setlength{\arraycolsep}{\temparraycolsep}%
   }%
\newcommand{\jsAutUn}[1]%
   {\mbox{$\left\langle \thinspace #1 \thinspace \right\rangle $}}
\newcommand{\aut}[1]{\jsAutUn{#1}} 
\newcommand{\autiet}{\jsAutUn{I, E ,T}}
\newcommand{\autjfu}{\jsAutUn{J, F ,U}}
\newlength{\vbh}\newlength{\vbd}\newlength{\vbt}%
\newcommand{\CompAut}[2]%
    {%
     \settodepth{\vbd}{\mbox{$\displaystyle{#1#2}$}}%
     \settoheight{\vbh}{\mbox{$\displaystyle{#1#2}$}}%
     \setlength{\vbt}{\vbh}\addtolength{\vbt}{\vbd}%
     {}%
     \psline[linewidth=0.8pt]{c-c}(0,-.65\vbd)(0,.9\vbh)%
     \hspace*{.15em}%
     {#1}%
     \hspace*{.15em}%
     \psline[linewidth=0.8pt]{c-c}(0,-.65\vbd)(0,.9\vbh)%
     }%
\newcommand{\CompAuto}[1]{{\CompAut{#1}{\strut}}}
\newcommand{\EoP}{\hbox{}\hfill\qedsymbol\hbox{}}%
\newcommand{\STR}{\begin{center} \Large * \end{center}}
\DeclareMathOperator{\crd}{Card}
\newcommand{\jsCard}[1]{\crd\!\left(#1\right)}
\newcommand{\RatExp}[2]{{#1\mathjsu{RatE}\,#2}}
\newcommand{\KRatEM}{\RatExp{\K}{M}}
\newcommand{\x}{\! \times \!}
\newcommand{\jsStar}[1]{{{#1}^{*}}}
\newcommand{\Ae}{\jsStar{A}}
\newcommand{\Be}{\jsStar{B}}
\newcommand{\PosiElmt}[1]{#1_{\bullet}}
\newcommand{\MPos}{\PosiElmt{M}}
\newcommand{\C}{\mathbb{C}}
\newcommand{\K}{\mathbb{K}}
\newcommand{\Kmbb}{\mathbb{K}}
\newcommand{\N}{\mathbb{N}}
\newcommand{\Q}{\mathbb{Q}}
\newcommand{\R}{\mathbb{R}}
\newcommand{\Z}{\mathbb{Z}}
\newcommand{\zed}{\mathsf{0}}
\newcommand{\und}{\mathsf{1}}
\newcommand{\Ed}{\mathsf{E}}
\newcommand{\Fd}{\mathsf{F}}
\newcommand{\Gd}{\mathsf{G}}
\newcommand{\Hd}{\mathsf{H}}
\newcommand{\Kd}{\mathsf{K}}
\newcommand{\Ld}{\mathsf{L}}
\newcommand{\TrmCnst}{{\mathsf{c}}}
\newcommand{\TermCnst}[1]{{\operatorname{\TrmCnst}\!\left(#1\right)}}
\newcommand{\TermCst}[1]{\TermCnst{#1}} 
\newcommand{\PartProp}[1]{#1_{\mathsf{p}}}
\newcommand{\spr}{\PartProp{s}}
\newcommand{\ETAze}[1]{0_{#1}}
\newcommand{\ETAun}[1]{1_{#1}}
\newcommand{\zeK}{\ETAze{\Kmbb}}
\newcommand{\unK}{\ETAun{\Kmbb}}
\newcommand{\unM}{\ETAun{M}}
\newcommand{\Pfrak}{\mathfrak{P}}
\newcommand{\jsPart}[1]{{\operatorname{\Pfrak}\left(#1\right)}}
\newcommand{\bra}[1]{\hbox{}\langle#1\rangle}%
\newcommand{\SerSAnMon}[2]%
    {#1 \langle \! \langle  #2  \rangle \! \rangle }
\newcommand{\KM}{\SerSAnMon{\K}{M}}
\newcommand{\KAe}{\SerSAnMon{\K}{\Ae }}
\newcommand{\KA}{\KAe} 
\newcommand{\Rat}{\mathrm{Rat}\,}
\newcommand{\KRA}{\RatExp{\K}{\Ae}}
\newcommand{\plusopr}{\mathbin{\mathjsu{+}}}
\newcommand{\prodopr}{\mathbin{\mathjsu{\cdot}}}
\newcommand{\CompExpr}[1]{\CompAuto{#1}}
\newcommand{\matmul}{\mathbin{\cdot}}
\newcommand{\autplus}{+}
\newcommand{\autprod}{\matmul}
\newcommand{\autstarsymb}{*}
\newcommand{\plusexp}{\mathsf{+}}
\newcommand{\prodexp}{\cdot}
\newcommand{\plusK}{\oplus}
\renewcommand{\KRatEM}{\RatExp{\K}{M}}
\newcommand{\KRatM}{\K\Rat M}
\newcommand{\ConjAuto}[1]{\overset{#1}{\Longrightarrow}}
\newcommand{\mmul}{\xmd}
\DeclareMathOperator{\LttLng}{\ell}
\newcommand{\LittLeng}[1]{\LttLng_{#1}}
\newcommand{\amlg}[1]{X_{#1}}%
\newcommand{\slct}[1]{Y_{#1}}%
\newcommand{\xphi}{\amlg{\varphi}}%
\newcommand{\yphi}{\slct{\varphi}}%
\newcommand{\Stan}[1]{\Sc_{#1}}
\newcommand{\StanE}{\Stan{\Ed}}
\newcommand{\Term}[1]{\Tc_{#1}}
\newcommand{\TermE}{\Term{\Ed}}
\newcommand{\TermF}{\Term{\Fd}}
\newcommand{\TermG}{\Term{\Gd}}
\newcommand{\Anti}[1]{\Ac_{#1}}
\newcommand{\AntiE}{\Anti{\Ed}}
\DeclareMathOperator{\Ntl}{\Ic}
\newcommand{\Nitl}[1]{\Ntl\!\left(#1\right)} 
\newcommand{\NitlE}{\Nitl{\Ed}}
\newcommand{\NitlF}{\Nitl{\Fd}}
\newcommand{\NitlG}{\Nitl{\Gd}}
\DeclareMathOperator{\Dff}{\mathrm{d}}%
\newcommand{\Diff}[1]{\Dff#1}
\newcommand{\DiffE}{\Diff{\Ed}}
\newcommand{\derv}[2]{\bra{\Nitl{#1},#2}}
\newcommand{\dervEa}{\derv{\Ed}{a}}
\newcommand{\StanAuto}[4]%
   {\aut{\redmatu{\ligned{1}{0}},%
         \redmatu{\matricedd{0}{#1}{0}{#2}},%
         \redmatu{\vecteurd{#3}{#4}}}}
\newcommand{\StanBloc}[4]%
   {\aut{\!\redmatu{\lignedblvs{1}{0}},%
         \redmatu{\matriceddblvs{0}{#1}{0}{#2}},%
         \redmatu{\vecteurdblvs{#3}{#4}}\!}}
\newcommand{\StanAutoSP}[9]%
   {\aut{\redmatu{\lignet{1}{0}{0}},%
         \redmatu{\matricett{0}{#1}{#2}%
                            {0}{#3}{#4}%
                            {0}{#5}{#6}},%
         \redmatu{\vecteurt{#7}{#8}{#9}}}}
\DeclareMathOperator{\Intl}{\Jc}
\newcommand{\Init}[1]{\Intl\!\left(#1\right)}
\DeclareMathOperator{\Mtrx}{\Fc}
\newcommand{\Matr}[1]{\Mtrx\!\left(#1\right)}
\DeclareMathOperator{\Fnl}{\Uc}
\newcommand{\Ermi}[1]{\Fnl\!\left(#1\right)}
\newcommand{\BigAuto}[1]%
  {\FixStateDiameter{2.8cm}%
   \scalebox{1.272 1}{\State{(0,0)}{XQ}}%
   \ChgStateLabelScale{2}%
   \VCPutStateLabel{(0,0)}{#1}%
   \RstStateLabelScale\MediumState}
\newcommand{\ExpDer}[2][a]%
    {\operatorname{\frac{\partial}{\partial \mbox{$#1$}}}#2}
\newcommand{\ExpDerP}[2][a]%
    {\operatorname{\frac{\partial}{\partial\mbox{$#1$}}}\left(#2\right)}
\newcommand{\ExpDerE}{\ExpDer{\Ed}}%
\newcommand{\ExpDerF}{\ExpDer{\Fd}}%
\newcommand{\ExpDerG}{\ExpDer{\Gd}}%
\newcommand{\ExpDerr}[2][a]%
    {\operatorname{\frac{\partial_{\mathrm{R}}}{\partial \mbox{$#1$}}}#2}
\newcommand{\ExpDerB}[2][a]%
   {\operatorname{\frac{\partial_\mathsf{b}}{\partial \mbox{$#1$}}}#2}
\newcommand{\ExpDerBP}[2][a]%
   {\operatorname{\frac{\partial_\mathsf{b}}{\partial \mbox{$#1$}}}\left(#2\right)}
\newcommand{\TerScale}{0.9}
\newcommand{\DTer}{\mathop{\scalebox{\TerScale}{\mathrm{D}}}}
\newcommand{\TDTer}{\scalebox{\TerScale}{\mathrm{TD}}}
\newcommand{\DerTer}[1]{{\DTer}\!\left(#1\right)}
\newcommand{\DerTerE}{\DerTer{\Ed}}
\newcommand{\DerTerF}{\DerTer{\Fd}}
\newcommand{\DerTerG}{\DerTer{\Gd}}
\newcommand{\TruDerTer}[1]{{\TDTer}\left(#1\right)}
\title{%
\vspace*{-2cm}
\textbf{Derived Terms without Derivation}\footnote{%
Published in \textit{Journal of Computer Science and Cybernetics},
Vol.~37, No.\,3 (2021), Special issue dedicated to the memory of 
Professor Phan Dinh Dieu, pp.\,201--221.}\\%
\smallskip
{\Large{%
\textbf{A shifted perspective on the derived-term automaton}}}
}
\author[1]{\textit{Sylvain Lombardy}}
\author[2]{\textit{Jacques Sakarovitch}}
\affil[1]{LaBRI, Bordeaux INP -- Bordeaux University -- CNRS}
\affil[2]{IRIF, CNRS -- Paris University and Telecom Paris, IPP}
\date{}
\begin{document}
\JSLandscape
\maketitle
\begin{abstract}
We present here a construction for the derived term automaton (aka 
partial derivative, or Antimirov, automaton) of a rational (or 
regular) expression based on a sole induction on the depth of the 
expression and without making reference to an operation of derivation 
of the expression.
It is particularly well-suited to the case of weighted rational 
expressions and the case of expressions over non free monoids.


\end{abstract}


\section{{Introduction}}%
\label{s.int}%

In this paper, we address once again the laboured problem of the 
transformation of a rational (regular) expression into a finite 
automaton that accepts the language, or the series, denoted by the 
expression.

In the Handbook of Automata Theory that appeared recently 
\cite{Pin21Ed}, we have given a survey on the many aspects of the 
transformation of an automaton into an expression and vice-versa, 
together with a comprehensive bibliography \cite{Saka21}.
We explain in this chapter that the equivalence between automata and
expressions may be generalised from `classical' automata and
expressions to \emph{weighted} automata and \emph{weighted}
expressions, to automata and expressions over monoids that are not
necessarily free monoids, and even to \emph{weighted} automata and
\emph{weighted} expressions over monoids that are not necessarily free
monoids (but still \emph{graded}).
This generalisation makes on one hand-side the relationship between
automata and expressions tighter and leads on the other hand-side to
`split' Kleene Theorem into two parts: the first one is the
correspondence between automata and expressions, and the second the
equality between the family of rational (or regular) languages or
series and the family of recognisable languages or series, an equality
which holds in the case of languages or series over free monoids only.

As we describe in that survey~\cite{Saka21}, there are two main
methods for computing an automaton from an expression which yield two
distinct, even though related, automata: the \emph{position automaton}
and the \emph{derived-term automaton} of the expression.

The first method can be credited to Glushkov \cite{Glus61}.
It associates with an expression of litteral length~$n$ a 
(non-deterministic) automaton with~$n+1$ states --- often called the 
\emph{Glushkov} or \emph{position} automaton of the expression.
As we recall in \secti{sta-aut} below the position automaton may be 
inductively defined by means of operations on automata of a certain 
kind that we call \emph{standard automata}.
The definition and computation of the position automaton readily
generalises to \emph{weighted expressions} \cite{CaroFlou03} and, even
more easily as there is nothing to change, to expressions over
\emph{non free monoids}.

It takes some more lines to sketch the second method.
It is well-known that a language (subset of a free monoid) is 
accepted by a finite automaton if and only if it has a \emph{finite 
number} of (left) quotients.
The starting point of the second method is the idea, due to 
Brzozowski, to lift this property of recognisable languages at the 
symbolic level of rational (or regular) expressions.
In \cite{Brzo64}, Brzozowski defined the \emph{derivatives} of a 
rational expression and turned them into the states of a 
deterministic automaton that recognises the language denoted by the 
expression.
He then showed that modulo the axioms of associativity, commutativity,
and idempotency of the addition (on the set of languages), the
ACI-properties, the set of derivatives of an expression is finite.
Under this form, it is clear that this `derivative' method is
essentially different from the first one as it cannot be generalised
to weighted rational expressions since weighted finite automata cannot
be determinised nor to expressions over non free monoids since 
subsets accepted by finite automata over such monoids have not 
necessarily a finite number of (left) quotients.

Thirty years later, Antimirov made another fundamental contribution 
to this theory and proposed a new derivation process~\cite{Anti96}.
Antimirov's derivation breaks Brzozowski's 
derivatives into parts --- hence the name `partial derivatives' 
given to these parts, a terminology we find unfortunate and we call 
them \emph{derived terms}.
As before, derived terms are turned into the states of a finite
automaton which we call \emph{derived-term automaton} and which
accepts the language denoted by the expression.
This construction has several outcomes.
The number of derived terms 
of an expression is not only finite but also `small', smaller than, 
or equal to, the litteral length of the expression.
Derived terms are defined without the usage of ACI-properties, which 
makes them easier to compute than the derivatives.

Finally, a link between the two methods was established somewhat later
by Champarnaud and Ziadi in~\cite{ChamZiad02}, and the derived-term 
automaton of an expression~$\Ed$ was shown to be a 
\emph{morphic image}\footnote{%
   Usually, one says that an automaton~$\Ac$ is a \emph{quotient} of 
   an automaton~$\Bc$ if there exists a morphism from~$\Bc$ 
   onto~$\Ac$, that is, if~$\Ac$ is a morphic image of~$\Bc$.
   In this introduction, we prefer this latter terminology as it does 
   not collide with the (left) \emph{quotient} of a language, or
   of a series (by a word).
   }
of the position automaton of~$\Ed$, a result which we refer to as the 
\emph{morphism theorem} in the sequel.

In~\cite{LombSaka05a}, we have extended the construction of the 
derived-term automaton to \emph{weighted expressions}.
Of course, the relationship with `derivatives' has disappeared in 
this generalisation, but the link the derivation of an expression and 
the \emph{quotient} of series is as strong as in the Boolean case.
The `weighted version' of the characterisation of recognisability with
the quotients is due to Jacob in full generality and reads as follows:
\emph{a series is recognisable if and only if it belongs to a finitely
generated submodule stable by quotient} (see~\cite{BersReut11}
or~\cite{Saka09} for instance).
And the derived terms of a weighted expression are a set of generators
of a module that contains the series denoted by the expression and
that is stable by quotient.
The construction of the same automaton has also been given by Rutten
as a byproduct of his theory of conduction on series which puts the
quotient operation on series at the first place~\cite{Rutt03}.

We also showed, in the same paper, that the morphism theorem quoted
above could be generalised to the weighted case, and, with the
adequate generalisation of the notion of morphism to weighted
automata, that the derived-term automaton of a weighted
expression~$\Ed$ is a \emph{morphic image} of the position automaton
of~$\Ed$.

It must be noted however that a difficulty arose in the proof of this 
last result.
In the derivation process and, if the weight semiring contains such
elements, some terms may vanish from the set of derived terms by the
interplay of `positive' and `negative' coefficients.
In such cases, they will be `missing' and the derived-term automaton
will not be a morphic image of the position automaton but only a
sub-automaton of a morphic image of the position automaton.
Instead of contenting ourselves with this weaker statement, we proved 
that the definition of the derived terms could be decorrelated from 
the derivation itself and obtained by induction on the expression and 
that the `morphism theorem' would then hold in full generality.
The proofs of the various properties of the derived-term automaton 
however relied on the connection with the derivation of the 
expression and the quotients of the series.

\STR
\enlargethispage{1ex}%

This long presentation was necessary to set up the framework in which 
this work takes place and to state the new ideas it brings to this 
much walked subject. 

We present here a definition of the derived-term automaton of an
expression~$\Ed$ by induction on the 
formation of~$\Ed$, in parallel with the construction of the position 
automaton of~$\Ed$ and with no reference whatsoever to the quotients 
of the series denoted by~$\Ed$ nor to a derivation operation defined 
on expressions.
This new perspective shows that the derived-term automaton is indeed 
intrinsically attached to the structure, or to the syntactic tree, of 
the expression, in the same way as the position automaton is.

The first consequence, or outcome, of the decorrelation between the
construction of the derived-term automaton and the derivation, and
thus the quotient of series, is that it can be achieved on expressions
over \emph{non free} monoids in which the rational languages or
series no longer coincide with the recognisable ones, and hence are
not characterised by the Jacob's theorem quoted above any longer.

The second outcome is that the `morphism theorem' which was somewhat
tedious to establish comes for free with this new point of view as
it is an intrinsic property of the construction: at every step of the
induction, the derived-term automaton is built as a morphic image of
an automaton which is already a morphic image of the position
automaton.

Of course, the connection with the derivation of expressions, and 
the quotient of the denoted series remains when the expression are 
over free monoids, since the new construction yields the same 
automata as the old one.

\medskip

The fact that the derived-term automaton is indeed related to the 
structure of the expression is not completely new.
As we have explained above, and in our own work~\cite{LombSaka05a}, we 
have defined the derived terms by means of an induction rather than 
by the plain derivation.
In~\cite{ChamEtAl09}, and in order to describe an \emph{efficient
algorithm} for the construction of the derived-term automaton, a link
between the {\em positions} of the letters in the rational expression
and the derived terms is made through in-between objects called {\em
c-continuations}.
This allows to define the derived-term automaton as a morphic image of
the position automaton.
This construction is different from ours, since we apply
morphisms at every step of our inductive construction.

There are also been several attempts to apply the derivation 
techniques to expressions over non free monoids, namely direct 
products of free monoids, for dealing with \emph{rational relations}.
In~\cite{Dema17}, the extension is made through a new operator that
represents the direct product of two languages (or relations), while
in~\cite{KonsEtAl21} the atoms of the expression, that are letters in
the classical case, are replaced by pairs of letters or special
symbols (expressing constraints over letters or pairs).
The formalism used in both papers bears some similarities with ours, but
they both use it to define an analogue of the derivation to bring the
construction of the transducer back to the usual construction of the
derived-term automaton.
Notice also that only Boolean transducers are considered
in~\cite{KonsEtAl21}.

\medskip

The essence of the new perspective we take on the derived-term 
automaton of an expression could be described in the classical case of 
the rational expressions on a free monoid.
But we discovered this new point of view when we were dealing with 
\emph{weighted rational expressions on non free monoids}.
Even though it makes the exposition somewhat longer and burdensome, 
we have chosen to present it in its full generality.

In~\secti{pre-not}, we fix the notation for weighted expressions and 
weighted automata and define the morphisms of weighted automoata 
\textit{via} the notion of \emph{conjugacy} which proves to be 
efficient.
In~\secti{sta-aut}, we define the restricted class of \emph{standard 
automata} on which one can lift the rational operators and the 
position automaton, which we prefer to call the \emph{standard 
automaton} of the expression.

The core of the paper lays in~\secti{sta-der-ter} where the new
definition of the derived-term automaton is presented and its
consistency proved.  
Even though the definition goes purely by induction on the formation
of the expression, we have chosen to keep the old terminology which
bears the weight of history and reconnects with it in the prevalent 
case of expressions on a free monoid.
In~\secti{bac-der}, we show, via the notion of \emph{differential} of 
an expression, that the new definition of derived-term automaton 
coincides with the one given in the previous works on the subject, in 
the case of expressions on a free monoid.


\section{{Preliminaries and notation}}%
\label{s.pre-not}%

The definition of usual notions in theoretical computer science, such 
as free monoids, languages, expressions, automata, rational (or 
regular) sets, recognisable sets, etc. may be found in numerous
textbooks.

The corresponding notions of multiplicity (or weight) semirings,
(formal power) series, weighted automata, etc.  are probably less
common knowledge but are still presented in quite a few
books~\cite{BersReut88,BersReut11,DrosEtAl09Ed,SaloSoit77} to which we
refer the reader.
For the notation, we follow~\cite{Saka09,Saka21}.
Let us be more explicit for the two notions we study: the 
\emph{weighted rational expressions} and the \emph{weighted finite 
automata}.
Before, we recall the notions of \emph{graded} monoids and of 
\emph{starrable element} in a semiring.
And then, we define the notion of \emph{morphism} of weighted 
automata that will be instrumental in this work.

The purpose of the paper, is the construction of automata over a
monoid~$M$ which is \emph{not necessarily free}, for instance automata
over~$\Ae\x\Be$ which are transducers --- and this is a key feature of
this work.
At the same time we want the automata possibly be weighted with 
coefficients taken in a semiring~$\K$, thus realising maps from~$M$ 
to~$\K$, that is, \emph{series} in~$\KM$.
The required hypothesis on~$M$ for~$\KM$ to be closed under (Cauchy)
product is that~$M$ be a \emph{graded monoid} (i.e. endowed with an
additive length function) --- which is the case for~$\Ae\x\Be$ for
instance, or more generally for all trace monoids~\cite{DiekRoze95}.

If~$k$ is an element of a semiring~$\K$, $k^{*}$ is the sum of all 
powers of~$k$:
$\msp k^{*}=\sum_{n\in\N}k^{n}\msp$.
This infinite sum may be defined --- $k$~is said to be 
\emph{starrable} --- or not defined --- $k$~is said to be 
\emph{non~starrable}.
We are not interested in the problem of determining whether an element 
of~$\K$ is starrable or not.
Somehow, we consider that~$\K$ is equipped with this 
operator~$\xmd*\xmd$, which is defined on a known subset of~$\K$.
But the question arises to know if we are able, given~$\K$ and~$M$, to 
determine whether a series of~$\KM$ is starrable or not.
The answer is positive, \textit{via} the notion of strong semiring 
(see~\cite{LombSaka05a,Saka09}), and some additional notation.

The \emph{identity element} of~$M$ is denoted by~$\unM$.
We write~$\MPos$ for the set of elements of~$M$ different from~$\unM$, 
that is, the set of elements with a (stricly) positive length:
$\msp\MPos = M\bk\{\unM\}\msp$.

The \emph{constant term}~$\TermCnst{s}$ of a series~$s$ is the 
coefficient of~$\unM$ in~$s$ (that is, the image of~$\unM$ in~$s$).
A series is \emph{proper} if its constant term in~$\zeK$.
The \emph{proper part}~$\spr$ of a series~$s$  is the series obtained 
from~$s$ by zeroing the coefficient of~$\unM$ and keeping all other 
coefficients unchanged.

\begin{definition}
\label{d.str-smr}
A topological semiring is \emph{strong} if the product of two 
summable families is a summable family.
\end{definition}

The definition is taken in view of the following statement.

\begin{samepage}
\begin{theorem}
\label{t.eto-ser-qqu}
Let~$\K$ be a strong semiring and~$M$ a graded monoid.
Let~$s$ be a series of~$\KM$,~$s_{0}=\TermCnst{s}$ its constant term
and~$\spr$ its proper part.
Then~$s^*$ is defined if and only if $s_{0}$~is starrable
and in this case we have
\begin{equation}
s^*  = (s_{0}^*\xmd \spr)^*s_{0}^*
      =  s_{0}^* (\spr\xmd s_{0}^*)^* \eqpnt
\notag
\end{equation}
\end{theorem}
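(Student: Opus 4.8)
The plan is to treat $s^{*}=\sum_{n\in\N}s^{n}$ coefficient by coefficient, writing $s=s_{0}\unM+\spr$ with $s_{0}=\TermCnst{s}$ and $\spr$ proper, and to expand each power $s^{n}$ by distributivity. Throughout I would abbreviate $c=s_{0}\unM$ and $p=\spr$, so that $c$ acts on series as multiplication by the scalar $s_{0}$ (on either side) and $p$ is supported on $\MPos$.

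The \emph{only if} direction is immediate. A constant term is obtained from a product only by multiplying constant terms, since on a graded monoid length is additive and $\unM=m_{1}\cdots m_{n}$ forces every $m_{i}=\unM$; hence $\TermCnst{s^{n}}=s_{0}^{\,n}$. As the projection $t\mapsto\TermCnst{t}$ is a continuous additive morphism, summability of the family $(s^{n})_{n}$ forces summability of $(s_{0}^{\,n})_{n}$ in $\K$, that is, $s^{*}$ defined implies $s_{0}$ starrable.

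For the converse, fix $m\in M$ and expand $\langle s^{n},m\rangle$. A word in $\{c,p\}$ of length $n$ carrying exactly $k$ occurrences of $p$ has the shape $c^{i_{0}}p\,c^{i_{1}}p\cdots p\,c^{i_{k}}$ with $i_{0}+\cdots+i_{k}=n-k$, and its coefficient at $m$ is $\sum s_{0}^{\,i_{0}}\langle p,m_{1}\rangle s_{0}^{\,i_{1}}\cdots\langle p,m_{k}\rangle s_{0}^{\,i_{k}}$, the inner sum ranging over the factorisations $m=m_{1}\cdots m_{k}$ into proper factors. Collecting all $n$, the number $\langle s^{*},m\rangle$ is the sum of the $\K$-family indexed by triples $(k,(m_{1},\dots,m_{k}),(i_{0},\dots,i_{k}))$. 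Because $M$ is graded this family is a \emph{finite} union (only $k\le\ell(m)$ occur, and each admits finitely many factorisations into proper factors) of the subfamilies obtained by fixing $k$ and the factorisation. This is where the hypotheses bite: each such subfamily is, up to the fixed scalars $\langle p,m_{j}\rangle$, the product of the $k+1$ families $(s_{0}^{\,i})_{i\in\N}$, which are summable precisely because $s_{0}$ is starrable; the \emph{strong} hypothesis (\defin{str-smr}), applied iteratively, then guarantees that their product is summable, with sum $s_{0}^{*}\langle p,m_{1}\rangle s_{0}^{*}\cdots\langle p,m_{k}\rangle s_{0}^{*}$. A finite union of summable families being summable, the whole $\K$-family is summable; grouping it by $n$ shows $(\langle s^{n},m\rangle)_{n}$ summable, and, this holding for every $m$, the family $(s^{n})_{n}$ is summable, i.e.\ $s^{*}$ is defined. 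Regrouping instead by $k$ gives
\[
\langle s^{*},m\rangle=\sum_{k\ge0}\ \sum_{m=m_{1}\cdots m_{k}} s_{0}^{*}\langle p,m_{1}\rangle s_{0}^{*}\cdots\langle p,m_{k}\rangle s_{0}^{*}=\langle\, s_{0}^{*}(\spr\, s_{0}^{*})^{*},\,m\rangle ,
\]
which is the second equality of the statement.

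The first equality then follows formally, since $s_{0}^{*}(\spr\,s_{0}^{*})^{*}$ and $(s_{0}^{*}\spr)^{*}s_{0}^{*}$ are two associations of the same alternating product $s_{0}^{*}\spr\,s_{0}^{*}\spr\cdots s_{0}^{*}$ and so agree term by term. I expect the only delicate point to be the summability bookkeeping of the previous paragraph: one must check that the reindexing of $\sum_{n}\langle s^{n},m\rangle$ by $(k,\text{factorisation},(i_{j}))$ is a genuine partition, so that associativity of summation is invoked only in its elementary direction (grouping a family already known to be summable). The summability itself comes entirely from the strong hypothesis applied to the innermost $(i_{j})$-sums together with gradedness bounding $k$; everything else reduces to the elementary identity $s_{0}^{*}=\unM\cdot s_{0}^{*}=1+s_{0}s_{0}^{*}$ valid once $s_{0}$ is starrable.
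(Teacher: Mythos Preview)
The paper does not actually prove this theorem: immediately after stating it, the authors write ``The details are not of interest here'' and refer the reader to~\cite{LombSaka05a,Saka09}. So there is no in-paper proof to compare against.

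Your argument is the standard one and is correct. The key points --- that gradedness bounds the number $k$ of proper factors of~$m$ and makes each~$\langle s^{n},m\rangle$ a finite sum, that the strong hypothesis lets you multiply the $k{+}1$ summable families $(s_{0}^{\,i})_{i}$ (interleaved with the singleton families $\{\langle p,m_{j}\rangle\}$, which is where non-commutativity of~$\K$ is accommodated), and that the reindexing by $(k,\text{factorisation},(i_{j}))$ is a genuine partition of the terms contributing to $\sum_{n}\langle s^{n},m\rangle$ --- are all handled. The one place I would tighten the wording is the ``only if'' direction: rather than invoking continuity of $t\mapsto\TermCnst{t}$ abstractly, it is cleaner to note that summability of $(s^{n})_{n}$ in~$\KM$ \emph{means} coefficientwise summability, so in particular $(\TermCnst{s^{n}})_{n}=(s_{0}^{\,n})_{n}$ is summable in~$\K$. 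Otherwise this is exactly the proof one finds in the references the paper cites.
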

\end{samepage}

The details are not of interest here. 
It is enough for us to know that all usual semirings
such as~$\N$, $\Z$, $\Q$, $\R$, $\C$, $(\Z,\min,+)$, 
\etc are strong topological semirings.
And may be that not all topological semirings are strong (\cf 
\cite{MadoSakaxx}).
In the sequel, the semirings are supposed to be strong, and the 
monoids to be graded, without always stating it explicitely.

\subsection{Weighted rational expressions}

\begin{definition}
\label{d.rat-exp}%
A \emph{rational expression over a monoid~$M$ with weight in a
semiring~$\K$} is a well-formed formula built inductively from the
\emph{constants}~$\zed$ and~$\und$ and the elements~$m$ in~$\MPos$ as
\emph{atomic formulas}, using two binary operators~$\autplus$
and~$\autprod$, one unary operator~$\autstarsymb$ and two operators
for every~$k$ in~$\K$: if~$\Ed$ and~$\Fd$ are expressions, so are
$(k\xmd \Ed)$,
$(\Ed\xmd k)$,
$(\Ed\plusopr\Fd)$,
$(\Ed\prodopr\Fd)$, and
$(\Ed^{*})$.
We denote by~$\KRatEM$ the set of rational expressions over~$M$ with 
weight in~$\K$ and often call them \emph{$\K$-expressions} or even 
simply \emph{expressions}.

Expressions are thus given by the following grammar 
\begin{equation}
\Ed\rightarrow 
\mathsf{0}        \mid
\mathsf{1}        \mid 
m                 \mid
(k\xmd\Ed)\       \mid 
(\Ed\xmd k)\      \mid
(\Ed\plusopr\Ed)  \mid
(\Ed\prodopr\Ed)  \mid 
(\Ed^*)  
\eee
\forall m\in\MPos\quantvrg
\forall k\in\K
\eqpnt
\notag
\end{equation}
\end{definition}

\begin{samepage}
\begin{definition}
\label{d.con-ter-exp}%
The \emph{constant term} of an expression~$\Ed$ in~$\KRatEM$ --- if 
it exists --- is the element of~$\K$, written $\TermCst{\Ed}$, and 
inductively computed using the following equations
\begin{gather}
\TermCst{\zed} = 0 \EqVrgInt
\TermCst{\und} = 1 \EqVrgInt  
\TermCst{m} =0 \quantsmsp \forall m \in \MPos \EqVrgInt 
\notag
\\
\TermCst{k\xmd\Ed} = k\xmd\TermCst{\Ed} \EqVrgInt 
\TermCst{\Ed\xmd k} = \TermCst{\Ed}\xmd k  \quantsmsp \forall k \in \K \EqVrgInt 
\notag
\\
\TermCst{\Fd \plusopr \Gd}= \TermCst{\Fd} + \TermCst{\Gd}\EqVrgInt
\TermCst{\Fd \prodopr \Gd} = \TermCst{\Fd} \xmd \TermCst{\Gd}\EqVrgInt
\notag
\\
\TermCst{\Fd^{*}}= (\TermCst{\Fd})^{*}
\eee \text{\textit{if~$\TermCst{\Fd}$ is starrable}.}
\notag
\end{gather}

If the constant term of a subexpression~$\Fd$ of~$\Ed$ is not
starrable, $\TermCst{\Ed}$ is \emph{undefined}, and~$\Ed$ is said to
be \emph{non valid}; otherwise,~$\Ed$ is a \emph{valid} expression.
\end{definition}
\end{samepage}

\begin{samepage}
\begin{definition}
\label{d.den-ser}%
With every \emph{valid} expression~$\Ed$ in~$\KRatEM$ is associated a 
series of~$\KM$, which is called \emph{the series denoted by~$\Ed$},
and which we write~$\CompExpr{\Ed}\msp$.\\
\ee The series~$\CompExpr{\Ed}$ is inductively defined by
\begin{gather}
\CompExpr{\zed}=\zeK \EqVrgInt
\CompExpr{\und}= \unM \EqVrgInt  
\CompExpr{m}= m\quantsmsp \forall m \in \MPos \EqVrgInt 
\CompExpr{k\xmd\Ed} = k~\CompExpr{\Ed} \EqVrgInt 
\CompExpr{\Ed\xmd k} = \CompExpr{\Ed}~ k  \quantsmsp \forall k \in \K \EqVrgInt 
\notag
\\
\CompExpr{\Fd \plusopr \Gd}= \CompExpr{\Fd} + \CompExpr{\Gd}\EqVrgInt
\CompExpr{\Fd \prodopr \Gd} = \CompExpr{\Fd} ~ \CompExpr{\Gd}\EqVrgInt
\e\text{and}
\notag
\\
\CompExpr{\Fd^{*}}= (\CompExpr{\Fd})^{*}
\ee \text{($\CompExpr{\Fd}$ is starrable by the validity of~$\Ed$ 
and \theor{eto-ser-qqu}).}
\notag
\end{gather}
Two expressions are \emph{equivalent} if they denote the same series.
\end{definition}
\end{samepage}

It directly follows from Definitions~\ref{d.con-ter-exp} 
and~\ref{d.den-ser} that the constant term of an expression is 
equal to the constant term of the series denoted by the expression.

\begin{proposition}
\label{p.cst-trm}
\ee
$\msp \TermCnst{\Ed} = \TermCnst{\CompExpr{\Ed}}\msp$.
\end{proposition}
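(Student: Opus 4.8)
The plan is to prove the identity by structural induction on the valid expression~$\Ed$, running the two inductive definitions in parallel: the one for~$\TermCnst{\Ed}$ from \defin{con-ter-exp} and the one for~$\CompExpr{\Ed}$ from \defin{den-ser}. At each node of the syntactic tree I would check that the operation ``take the coefficient of~$\unM$'', applied to the series~$\CompExpr{\Ed}$, commutes with the operator sitting at that node, so that it reproduces exactly the clause defining~$\TermCnst{\Ed}$. Since~$\TermCnst{\Ed}$ is defined precisely when~$\Ed$ is valid, and~$\CompExpr{\Ed}$ is defined under the same condition, the induction runs over valid expressions and their (necessarily valid) subexpressions.

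For the atoms~$\zed$, $\und$ and~$m\in\MPos$, one simply reads off the coefficient of~$\unM$ in the series~$\zeK$, $\unM$ and~$m$ and compares with the prescribed values~$0$, $1$ and~$0$; the only point worth recording is that~$m\neq\unM$ because~$m$ has positive length, so its~$\unM$-coefficient is~$0$. For the unary nodes~$k\xmd\Ed$ and~$\Ed\xmd k$ and for the sum~$\Fd\plusopr\Gd$, the verification is immediate: the map sending a series~$s$ to its~$\unM$-coefficient~$\TermCnst{s}$ is additive and commutes with left and right multiplication by a scalar of~$\K$, so combined with the induction hypothesis it matches the corresponding clauses of \defin{con-ter-exp} verbatim.

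The real content, and what I expect to be the only genuine obstacle, is the product node~$\Fd\prodopr\Gd$, and this is where the grading of~$M$ is used. I want~$\TermCnst{\CompExpr{\Fd}\,\CompExpr{\Gd}}=\TermCnst{\CompExpr{\Fd}}\,\TermCnst{\CompExpr{\Gd}}$. Writing out the Cauchy product, the~$\unM$-coefficient of a product~$s\,t$ is~$\sum_{uv=\unM}s(u)\,t(v)$; since~$M$ is graded, the additive length function forces~$u=v=\unM$ in any factorisation~$uv=\unM$, so the sum collapses to the single term~$s(\unM)\,t(\unM)=\TermCnst{s}\,\TermCnst{t}$. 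With the induction hypothesis this gives~$\TermCnst{\Fd\prodopr\Gd}=\TermCnst{\Fd}\,\TermCnst{\Gd}$. Everything else is bookkeeping; this is the one step that really needs the hypothesis on~$M$.

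For the star node~$\Fd^{*}$, validity of~$\Ed$ guarantees that~$\TermCnst{\Fd}$ is starrable, hence by \theor{eto-ser-qqu} that~$\CompExpr{\Fd}^{*}$ is defined. I would apply \theor{eto-ser-qqu} to~$s=\CompExpr{\Fd}$ with~$s_{0}=\TermCnst{s}$ and proper part~$\spr$, obtaining~$s^{*}=s_{0}^{*}\,(\spr\,s_{0}^{*})^{*}$. The series~$t:=\spr\,s_{0}^{*}$ is proper, and since a product of proper series is again proper (the same grading argument shows~$t^{n}$ is proper for~$n\ge1$), the only contribution to the~$\unM$-coefficient of~$t^{*}=\sum_{n}t^{n}$ comes from~$t^{0}=\unM$, so~$\TermCnst{t^{*}}=1$. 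As~$s_{0}^{*}\in\K$ is a scalar, the product rule just established yields~$\TermCnst{s^{*}}=s_{0}^{*}\,\TermCnst{t^{*}}=s_{0}^{*}$, which by the induction hypothesis equals~$(\TermCnst{\Fd})^{*}=\TermCnst{\Fd^{*}}$. This completes the induction and hence the proof.
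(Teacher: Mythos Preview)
Your proof is correct and is precisely the structural induction that the paper has in mind; the paper itself does not spell it out, stating only that the proposition ``directly follows from Definitions~\ref{d.con-ter-exp} and~\ref{d.den-ser}''. Your write-up simply makes explicit the two places where something has to be checked --- the product node, where the grading of~$M$ collapses the Cauchy sum to a single term, and the star node, where \theor{eto-ser-qqu} reduces the question to the star of a proper series --- and these are exactly the verifications implicit in the paper's one-line justification.
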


\setcounter{exa-exp}{\value{theorem}}
\begin{example}
\label{e.exa-exp}
The $\Z$-expression~$\Ed_{\idex}$ over the monoid~$\{a,b\}^{*}$, 
$\msp\Ed_{\idex}= a^*\matmul(a^* + (-1)b^*)^*\msp$, is valid:
$\msp\TermCnst{\Ed_{\idex}}=1\msp$.
\end{example}

Even though they do not play a role in this work, the 
definition of the set of \emph{rational series} and its 
characterisation with expressions build its background.

\begin{definition}
\label{d.rat-ser}
The set of \emph{$\K$-rational series over~$M$} is the smallest
subalgebra of~$\KM$ which contains the polynomials and is closed under
star.
It is denoted by~$\KRatM$.
\end{definition}

\begin{proposition}
\label{p.rat-ser}
A series of~$\KM$ is rational if and only if it is denoted by a valid 
expression in~$\KRatEM$.
\end{proposition}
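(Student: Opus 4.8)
The statement is the equality of two families of series: the set $\KRatM$ of \defin{rat-ser}, defined abstractly as the smallest subalgebra of $\KM$ containing the polynomials and closed under star, and the set $D$ of series denoted by valid expressions, $D = \{\CompExpr{\Ed} \mid \Ed \in \KRatEM \text{ valid}\}$. The plan is to prove the two inclusions separately.

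First I would establish $D \subseteq \KRatM$ by structural induction on the formation of a valid expression $\Ed$, using that every subexpression of a valid expression is again valid so that the induction hypothesis applies. The atomic cases denote $\zeK$, $\unM$ and $m$ (for $m\in\MPos$) by \defin{den-ser}, all polynomials and hence in $\KRatM$. In the inductive step, if the valid subexpressions $\Fd,\Gd$ already denote series of $\KRatM$, then $\CompExpr{k\xmd\Fd}$, $\CompExpr{\Fd\xmd k}$, $\CompExpr{\Fd\plusopr\Gd}$ and $\CompExpr{\Fd\prodopr\Gd}$ lie in $\KRatM$ because it is a subalgebra, stable under left and right scalar multiplication, sum and product. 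For $\Ed=\Fd^{*}$, validity together with \theor{eto-ser-qqu} ensures that $\CompExpr{\Fd}$ is starrable, so $\CompExpr{\Fd^{*}}=(\CompExpr{\Fd})^{*}$ is defined and lies in $\KRatM$ by closure under star.

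Conversely, to get $\KRatM\subseteq D$ I would check that $D$ is itself a subalgebra of $\KM$ containing the polynomials and closed under star; the minimality clause of \defin{rat-ser} then yields the inclusion. That $D$ contains every polynomial is immediate: a polynomial $\sum_i k_i m_i$ is denoted by an expression assembled from the atoms $\und$ and $m_i$, the scalar operators and $\plusopr$ (with $\zed$ for the zero polynomial), and such star-free expressions are valid by \defin{con-ter-exp}. Stability under scalar multiplication, sum and product is equally direct: if $\Fd,\Gd$ are valid, then $k\xmd\Fd$, $\Fd\xmd k$, $\Fd\plusopr\Gd$ and $\Fd\prodopr\Gd$ are valid as well, since forming them introduces no new starred subexpression and hence no new starrability requirement, while the matching clauses of \defin{den-ser} give the expected denoted series.

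The only delicate point, and the crux of the argument, is closure of $D$ under star: given $s=\CompExpr{\Ed}\in D$ with $s^{*}$ defined, I must produce a valid expression denoting $s^{*}$, the obvious candidate being $\Ed^{*}$. Here the syntactic validity condition must be reconciled with the semantic existence of the star. By \propo{cst-trm} the constant term $\TermCnst{\Ed}$ equals the constant term $s_{0}$ of $s$, and by \theor{eto-ser-qqu} the star $s^{*}$ is defined exactly when $s_{0}$ is starrable; hence the hypothesis that $s^{*}$ exists is precisely the condition under which \defin{con-ter-exp} declares $\Ed^{*}$ valid, whereupon $\CompExpr{\Ed^{*}}=(\CompExpr{\Ed})^{*}=s^{*}$ and $s^{*}\in D$. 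This is the step where \theor{eto-ser-qqu} and \propo{cst-trm} are indispensable; everything else is routine bookkeeping over the grammar of \defin{rat-exp}.
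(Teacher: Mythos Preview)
Your argument is correct. The paper itself states \propo{rat-ser} without proof, as a standard background fact in the preliminaries, so there is nothing to compare against; your write-up supplies a complete and clean justification, and in particular your handling of the star case---using \propo{cst-trm} and \theor{eto-ser-qqu} to match the semantic starrability of $\CompExpr{\Ed}$ with the syntactic validity of $\Ed^{*}$---is exactly the point that makes the equivalence non-trivial in the weighted setting.
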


\subsection{Weighted finite automata}

An automaton~$\Ac$ over a monoid~$M$ with weights in a semiring~$\K$ 
is a \emph{labelled directed graph}~$(Q,E)$, together with two 
functions~$I$ and~$T$ from the set~$Q$ of vertices --- called 
\emph{states} --- into~$\K$.
The set~$E$ of edges --- called \emph{transitions} --- is contained 
in~$Q\x\K\x\MPos\x Q$, that is, every transition is labelled with a 
monomial~$k\xmd m$ --- the \emph{weighted label} of the transition 
--- where~$k$ is the \emph{weight} of the transition and~$m$ its 
\emph{label}.
The automaton~$\Ac$ is \emph{finite} if~$E$ is finite.

The weighted label of a path in~$\Ac$ is the 
product of the weighted labels of the transitions that form the path, 
hence a monomial~$h\xmd x$, where~$h$ is the product of the weights 
of the transitions and~$x$ the product of their labels.

The automaton~$\Ac$ determines a map from~$M$ to~$\K$, that is a 
series in~$\KM$, called the \emph{behaviour} of~$\Ac$ and denoted 
by~$\CompAuto{\Ac}\msp$.
The series~$\CompAuto{\Ac}$ maps every~$x$ in~$M$ to the \emph{sum} of
all elements~$\msp I(p)\xmd h\xmd T(q)\msp$ where~$h$ is the weight 
of a path~$\pi$ with label~$x$, for all such paths~$\pi$ from~$p$ to~$q$, 
and all pairs of states~$(p,q)$.
The definition of~$\CompAuto{\Ac}$ takes a handier form in an algebraic 
setting.
The set~$E$ of transitions of~$\Ac$ is conveniently described by the 
\emph{transition matrix} of~$\Ac$, which we also denote by~$E$ (as it 
will be indeed the unique way we deal with this set in the sequel), 
and which is thus a matrix of dimension~$Q\x Q$ whose $(p,q)$-entry 
is the sum of the weighted labels of the transitions that go from~$p$ 
to~$q$, a linear combination of elements of~$\MPos$ when~$\Ac$ is 
finite.
We write $\msp\Ac=\autiet\msp$ where the function~$I$ is written as a
row-vector of dimension~$Q$ whose $p$th entry is~$I\!(p)$ and the
function~$T$ is written as a column-vector of dimension~$Q$ whose
$q$th entry is~$T\!(q)$.
Since the formation of paths corresponds to the multiplication of the 
transition matrix, the behaviour of~$\Ac$ may then be written as
\begin{equation}
\CompAuto{\Ac} = I\matmul E^{*}\matmul T
\eqpnt
\notag
\end{equation}

Two automata are \emph{equivalent} if they have the same behaviour.
Finite automata and rational expressions have the same computational 
power, as expressed by the following statement.

\begin{theorem}
\label{t.fta-wgt-mon}%
Let~$M$ be a graded monoid.
A series of~\/$\KM$ is rational if and only if it is 
the behaviour of a finite $\K$-automaton over~$M$.
\end{theorem}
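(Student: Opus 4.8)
The plan is to prove the two implications separately, using \propo{rat-ser} to identify the rational series of~$\KM$ with those denoted by a valid expression of~$\KRatEM$. Throughout, the standing hypothesis that~$M$ is graded plays a double role: it guarantees that all the Cauchy products and infinite sums written below are summable, and it forces every transition label to lie in~$\MPos$, so that the automata involved have \emph{no spontaneous transitions}. This last point is what keeps all matrices \emph{proper} and makes every star that occurs well defined through \theor{eto-ser-qqu}.

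For the implication \emph{behaviour $\Rightarrow$ rational}, I would start from the matrix expression $\CompAuto{\Ac} = I \matmul E^{*} \matmul T$. Here $I$ and~$T$ are vectors over~$\K$, while~$E$ is a square matrix whose entries are polynomials supported in~$\MPos$, hence proper rational series. Since~$\KRatM$ is a subalgebra closed under star (\defin{rat-ser}), it suffices to show that~$E^{*}$ has all its entries in~$\KRatM$. I would prove, by induction on the dimension~$n$ of a square matrix~$X$ with proper rational entries, that~$X^{*}$ is defined and has entries in~$\KRatM$. For~$n=1$ this is closure of~$\KRatM$ under star, the single entry being starrable because it is proper. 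For the inductive step I would split~$X$ into blocks and invoke the classical identity
\[
\begin{pmatrix} A & B \\ C & D \end{pmatrix}^{*}
= \begin{pmatrix}
(A + B D^{*} C)^{*} & (A + B D^{*} C)^{*} B D^{*}\\
D^{*} C (A + B D^{*} C)^{*} & D^{*} + D^{*} C (A + B D^{*} C)^{*} B D^{*}
\end{pmatrix},
\]
checking that~$D^{*}$ exists by induction, that $A + B D^{*} C$ is again proper (because~$B$ and~$C$ are), so that its star exists by induction as well, and that every entry on the right lies in~$\KRatM$ by closure under sum and product. The identity itself holds because, by gradedness, both sides compute the summable family $\sum_{k\geq 0} X^{k}$. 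Hence $\CompAuto{\Ac}\in\KRatM$.

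For the converse, \emph{rational $\Rightarrow$ behaviour}, I would argue by structural induction on a valid expression~$\Ed$, exhibiting for each~$\Ed$ a finite automaton~$\Ac$ with $\CompAuto{\Ac}=\CompExpr{\Ed}$. The atoms~$\zed$,~$\und$ and~$m\in\MPos$ are realised by one- or two-state automata; the scalar products $k\,\Ed$ and $\Ed\,k$ by rescaling the vectors~$I$ and~$T$; and a sum $\Fd\plusopr\Gd$ by the disjoint union of the two automata. For a product $\Fd\prodopr\Gd$ I would take the state set $Q_{\Fd}\sqcup Q_{\Gd}$, keep the two transition blocks, add transitions with weighted labels $T_{\Fd}\,I_{\Gd}$ joining the final states of the first automaton to the initial states of the second, and correct the initial and final vectors with the constant terms $\TermCnst{\Fd}$ and $\TermCnst{\Gd}$ so that the paths crossing the junction, together with the degenerate paths accounting for a purely constant factor, reproduce $\CompExpr{\Fd}\,\CompExpr{\Gd}$; the absence of $\unM$-labelled transitions is exactly what makes this bookkeeping finite and correct. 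This is precisely the content of the operations on \emph{standard automata} developed in \secti{sta-aut}.

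The delicate case, and the one I expect to be the main obstacle, is the star $\Fd^{*}$. Writing $s=\CompExpr{\Fd}$, $s_{0}=\TermCnst{s}$ and~$\spr$ for the proper part of~$s$, validity of $\Fd^{*}$ means exactly that~$s_{0}$ is starrable, and \theor{eto-ser-qqu} then gives $s^{*}=s_{0}^{*}\,(\spr\, s_{0}^{*})^{*}$. The automaton for~$\Fd^{*}$ must therefore realise two coupled effects: the scalar star~$s_{0}^{*}$, which can only be carried by weights on the initial and final vectors and on the new return transitions, and the iteration of the proper part~$\spr$, carried by the loops created when the final states are connected back to the initial states. Getting these two contributions to combine into exactly $s_{0}^{*}(\spr\, s_{0}^{*})^{*}$, rather than an ill-defined $\sum_{n} s^{n}$ with a non-starrable constant term, is where the validity hypothesis and \theor{eto-ser-qqu} are indispensable, and where gradedness is used once more to guarantee summability of the resulting behaviour. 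Once the star construction is in place the induction closes, and the two implications together yield \theor{fta-wgt-mon}.
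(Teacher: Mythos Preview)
The paper does not give a self-contained proof of \theor{fta-wgt-mon}; it is stated as background, with the remark that ``the subject of this work is the study of a particular proof of the sufficient condition of this statement.'' Concretely, the paper only treats the direction \emph{rational $\Rightarrow$ behaviour}: \secti{sta-aut} builds, by structural induction on a valid expression~$\Ed$, the standard automaton~$\StanE$ and establishes $\CompAuto{\StanE}=\CompExpr{\Ed}$ via Propositions~\ref{p.sta-aut-cmp}, \ref{p.sta-aut-str} and~\ref{p.sta-aut-exp}. The converse direction \emph{behaviour $\Rightarrow$ rational} is not proved anywhere in the paper.

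Your proposal is correct and supplies both halves. For \emph{rational $\Rightarrow$ behaviour} you follow exactly the route of \secti{sta-aut}, and you say so explicitly; your discussion of the star case is precisely what the paper encapsulates in \propo{sta-aut-str} together with \theor{eto-ser-qqu}. For \emph{behaviour $\Rightarrow$ rational} you give the standard Conway-style argument by induction on the dimension of the transition matrix, using the block-star identity; this is correct --- properness of $A + B D^{*} C$ indeed follows from properness of~$B$ (or~$C$), so the recursive stars are all well defined --- but it is genuinely additional material, not present in the paper. In short: where the paper has a proof your approach coincides with it, and you fill in the direction the paper takes for granted.
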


The subject of this work is the study of a particular proof of the 
sufficient condition of this statement.

\subsection{Morphisms and quotient of weighted automata}

Automata are structures; one can thus define \emph{morphisms} between 
them.
We choose to define the morphisms of weighted automata
via the notion of \emph{conjugacy},
borrowed from the theory of symbolic dynamical systems.
It is the most concise way, and ideally suited for the sequel.

\begin{definition}
A $\K$-automaton 
$\msp\Ac=\autiet\msp$ \emph{is conjugate to}  
a $\K$-automaton 
$\msp\Bc=\autjfu\msp$
if there exists a matrix $X$ with entries in $\K$ such that
\begin{equation}
I\xmd X=J, \ee 
E\xmd X=X\xmd F, \e \text{and} \e
T=X\xmd U.
\label{q.con-aut}
\end{equation}
The matrix $X$ is the \emph{transfer matrix} of the conjugacy
and we write $\msp\Ac\ConjAuto{X}\Bc\msp$.
\end{definition}

If~$\Ac$ is conjugate to~$\Bc$, 
then, for every~$n$, the series of equalities holds:  
\begin{equation*}
I\mmul E^{n}\mmul T =
I\mmul E^{n}\mmul X\mmul U =
I\mmul E^{n-1}\mmul X\mmul F \mmul U = \ldots =
I\mmul X\mmul F^{n} \mmul U = J \mmul F^{n} \mmul U
\eqvrg 
\end{equation*}
from which 
$\msp I\mmul E^{*}\mmul T = J \mmul F^{*} \mmul U\msp$
directly follows.

\begin{proposition}
\label{p.cnj-equ}%
If
$\msp\Ac\msp$ {is conjugate to} 
$\msp\Bc\msp$, then $\msp\Ac\msp$ and  $\msp\Bc\msp$
are equivalent.
\end{proposition}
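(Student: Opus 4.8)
The plan is to reduce the equivalence of the two automata to a single identity for each power of the transition matrices, and then to recombine these identities through the star. Recall that, by definition, the behaviours are $\CompAuto{\Ac}=I\matmul E^{*}\matmul T$ and $\CompAuto{\Bc}=J\matmul F^{*}\matmul U$, and that $E^{*}=\sum_{n\in\N}E^{n}$, and likewise for $F^{*}$. It therefore suffices to prove, for every integer $n\geq 0$, the term-wise equality
\begin{equation}
I\mmul E^{n}\mmul T = J\mmul F^{n}\mmul U
\eqpnt
\notag
\end{equation}

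First I would transport the transfer matrix $X$ from the right-hand end of the product to the left-hand end, using in turn the three conjugacy relations of \equnm{con-aut}. Substituting $T=X\mmul U$ gives $I\mmul E^{n}\mmul T = I\mmul E^{n}\mmul X\mmul U$; an immediate induction on $n$, whose inductive step is the single commutation $E\mmul X=X\mmul F$, yields $E^{n}\mmul X=X\mmul F^{n}$, hence $I\mmul E^{n}\mmul X\mmul U = I\mmul X\mmul F^{n}\mmul U$; and a final use of $I\mmul X=J$ produces $J\mmul F^{n}\mmul U$. This chain is exactly the one displayed just before the statement, and it establishes the per-power identity.

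The last step is to pass from equality of the individual terms to equality of their sums, that is, to deduce $I\matmul E^{*}\matmul T=J\matmul F^{*}\matmul U$ and thus $\CompAuto{\Ac}=\CompAuto{\Bc}$. This is the only delicate point: one must know that the families $(I\mmul E^{n}\mmul T)_{n\in\N}$ and $(J\mmul F^{n}\mmul U)_{n\in\N}$ are summable and that summation is compatible with the term-wise equality just proved. Both are guaranteed by the standing hypotheses that $M$ is graded and $\K$ is a strong topological semiring, which is precisely what makes $E^{*}$ and $F^{*}$ well defined and legitimises the term-by-term comparison. With this, the two behaviours coincide and the automata are equivalent.
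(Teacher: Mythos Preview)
Your proof is correct and follows precisely the paper's own argument, which is the chain of equalities displayed immediately before the proposition: shift $X$ from $T=X\mmul U$ through $E^{n}$ one factor at a time using $E\mmul X=X\mmul F$, then absorb it into $I\mmul X=J$. Your added remark on summability is more cautious than strictly necessary---since the entries of $E$ and $F$ lie in $\MPos$, the matrices are proper and $E^{*}$, $F^{*}$ are well defined in $\KM$ by the gradedness of $M$ alone, without invoking strongness---but it does no harm.
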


Let $\msp\varphi\colon Q\rightarrow R \msp$ be a \emph{surjective} map 
and~$\xphi$ the $Q\x R$-matrix where 
the $(q,r)$-th entry is~$1$ if $\varphi(q)=r$, and~$0$ otherwise.
Since~$\varphi$ is a map, every row of~$\xphi$ contains exactly 
one~$1$ and
since~$\varphi$ is surjective, every column of~$\xphi$ contains
at least one~$1$.
Such a matrix is called an \emph{amalgamation matrix} in the setting 
of symbolic dynamics.
By convention, if we deal with $\K$-automata, an amalgamation matrix 
is silently assumed to be a $\K$-matrix, that is, the null entries 
are equal to~$\zeK$ and the non zero entries to~$\unK$.

\begin{definition}
\label{d.out-mor}%
Let~$\Ac$ and~$\Bc$\/ be 
two $\K$-automata of dimension~$Q$ and~$R$ respectively.
We say that a surjective map 
$\msp\varphi\colon Q\rightarrow R \msp$
is 
a \emph{morphism}\footnote{%
   The morphisms of weighted automata are `more constrained' than 
   those of Boolean automata.
   They correspond to what is often called \emph{simulation}.
   Moreover, they are \emph{directed} and for a same map~$\varphi$ 
   one should distinguish between \emph{Out}-morphism and 
   \emph{In}-morphism.
   But in this work we only deal with Out-morphisms, which we simply 
   call morphisms.} 
(from~$\Ac$ onto~$\Bc$)
if~$\Ac$ is conjugate to~$\Bc$ by~$\xphi$, \ie
if $ \msp\Ac\ConjAuto{\xphi}\Bc\msp$, and we write
$\msp\varphi\colon\Ac\rightarrow\Bc\msp$.

We also say that~$\Bc$ is a \emph{quotient} of~$\Ac$, if there exists
a morphism~$\msp\varphi\colon\Ac\rightarrow\Bc\msp$.
\end{definition}

The composition of two morphisms is a morphism.
From \propo{cnj-equ} follows that any quotient of~$\Ac$ is equivalent 
to~$\Ac$.

On the other hand, we can determine whether a surjective map 
$\msp\varphi\colon Q\rightarrow R \msp$
is a morphism or not, without reference to any automaton~$\Bc$.
From~$\xphi$ we construct a \emph{selection matrix}~$\yphi$ by 
transposing~$\xphi$ and by zeroing some of its non zero entries in 
such a way that~$\yphi$ is row-monomial, with \emph{exactly one}~$1$ 
per row.
A matrix~$\yphi$ is not uniquely determined by~$\varphi$ but also 
depends on the choice of a `representative' in each class of the map 
equivalence of~$\varphi$.

\begin{proposition}
\label{p.out-mor}%
Let~$\msp\Ac=\autiet\msp$ be a $\K$-automaton of dimension~$Q$.
Let 
$\varphi\colon Q\rightarrow R $
be a surjective map, $\xphi$ its amalgamation matrix, and~$\yphi$ a 
selection matrix.
Then~$\varphi$ is a \emph{morphism} if~$\Ac$ is conjugate 
by~$\xphi$ to the automaton~$\varphi(\Ac)$ of dimension~$R$:
$\varphi(\Ac) = 
\aut{I\matmul\xphi, \msp\yphi\matmul E\matmul\xphi, \msp\yphi\matmul T}$ 
(in which case~$\varphi(\Ac)$ does not depend on the 
choice of~$\yphi$).
\end{proposition}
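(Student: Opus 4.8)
The plan is to reduce the whole statement to a single algebraic identity satisfied by the amalgamation and selection matrices, namely $\yphi\matmul\xphi=\mathrm{Id}_{R}$. First I would verify this identity directly from the combinatorial description: row~$r$ of~$\yphi$ carries its unique~$1$ in the column of the representative~$q_r$ chosen for the class~$\varphi^{-1}(r)$, and row~$q$ of~$\xphi$ carries its unique~$1$ in column~$\varphi(q)$; the $(r,r')$-entry of the product is then $[\varphi(q_r)=r']=[r=r']$, so the product is the identity of dimension~$R$. I would stress that one must use only $\yphi\matmul\xphi=\mathrm{Id}_{R}$ and never $\xphi\matmul\yphi$, which is merely the idempotent projection onto representatives and is not the identity.

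The ``if'' half of the statement is immediate: if~$\Ac$ is conjugate by~$\xphi$ to~$\varphi(\Ac)$, then taking $\Bc=\varphi(\Ac)$ in \defin{out-mor} exhibits~$\varphi$ as a morphism. The substance is the reverse implication together with the canonicity of the formula, which I would handle at once. Suppose~$\varphi$ is a morphism, so that $\Ac\ConjAuto{\xphi}\Bc$ for some $\Bc=\autjfu$; by \equnm{con-aut} this means $I\matmul\xphi=J$, $E\matmul\xphi=\xphi\matmul F$ and $T=\xphi\matmul U$. Left-multiplying the last two equations by~$\yphi$ and inserting $\yphi\matmul\xphi=\mathrm{Id}_{R}$ gives $F=\yphi\matmul E\matmul\xphi$ and $U=\yphi\matmul T$, while the first equation already reads $J=I\matmul\xphi$. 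Hence $\Bc=\aut{I\matmul\xphi,\ \yphi\matmul E\matmul\xphi,\ \yphi\matmul T}=\varphi(\Ac)$, so~$\Ac$ is conjugate by~$\xphi$ to~$\varphi(\Ac)$, and the target of any morphism is forced to be exactly this automaton.

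The independence of~$\varphi(\Ac)$ from the choice of~$\yphi$ then falls out of the same computation. The selection matrix~$\yphi$ above was arbitrary, so the equalities $F=\yphi\matmul E\matmul\xphi$ and $U=\yphi\matmul T$ hold for \emph{every} selection matrix; since~$F$ and~$U$ are the entries of the one fixed automaton~$\Bc$, the matrices $\yphi\matmul E\matmul\xphi$ and $\yphi\matmul T$ take the same value for all choices of~$\yphi$. Concretely, whenever the conjugacy holds the equation $E\matmul\xphi=\xphi\matmul F$ forces the rows of~$E\matmul\xphi$ indexed by states of one class to coincide, and $T=\xphi\matmul U$ forces the entries of~$T$ to be constant on classes, so the representative picked out by~$\yphi$ is immaterial.

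I expect the only real care to be bookkeeping: keeping the matrix shapes straight ($\xphi$ is $Q\x R$, $\yphi$ is $R\x Q$), multiplying on the correct side, and resisting the temptation to invoke $\xphi\matmul\yphi=\mathrm{Id}$. There is no genuine obstacle beyond this; the entire argument is powered by $\yphi\matmul\xphi=\mathrm{Id}_{R}$ together with the observation that a morphism target, once it exists, is uniquely determined by~$\Ac$ and~$\varphi$.
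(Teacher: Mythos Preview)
Your argument is correct. The paper actually states this proposition without proof, treating it as routine; the only hint it gives is the sentence that follows, namely that $\varphi$ is a morphism if and only if the rows of $E\matmul\xphi$ with $\varphi$-equivalent indices coincide and the corresponding entries of~$T$ coincide. Your proof recovers exactly this, and more: you identify the single algebraic fact that drives everything, $\yphi\matmul\xphi=\mathrm{Id}_{R}$, and from it you extract both the biconditional and the independence of $\varphi(\Ac)$ from the choice of~$\yphi$ in one stroke. Your observation that any morphism target~$\Bc$ is forced to equal $\aut{I\matmul\xphi,\ \yphi\matmul E\matmul\xphi,\ \yphi\matmul T}$ is precisely the uniqueness statement the paper alludes to when it says the image of~$\varphi$ is ``immaterial'' and only the map equivalence matters. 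The caution about not confusing $\yphi\matmul\xphi$ with $\xphi\matmul\yphi$ is well placed and is the only point where one could slip.
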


This proposition points out that the image of~$\varphi$ is indeed
immaterial and what only counts, and makes it a morphism of automata
or not, is the map equivalence of~$\varphi$.

In the proofs at \secti{sta-der-ter}, we use indeed a more intuitive
description of morphisms.
With the same notation as above, the map~$\varphi$ is a morphism if
and only if the rows of~$E\matmul\xphi$ whose indices are equivalent
modulo~$\varphi$ are equal and the entries of~$T$ whose indices are
equivalent modulo~$\varphi$ are equal.


\section{{Standard automata}}%
\label{s.sta-aut}%

We define a restricted class of automata, and then show that rational
operations on series can be lifted on the automata of that class.
They are thus well-suited for the constructions we build by induction
on the formation of the expressions.

An automaton is \emph{standard} if it has only one initial state,
which is the end of no transition.
\figur{sta-aut} shows a standard automaton, both as a sketch, and 
under the matrix form.
The definition does not forbid the initial state~$i$ from also being
final and the scalar~$c$ in~$\K$, is the \emph{constant term} of
$\msp\CompAuto{\Ac}\msp$.

\begin{figure}[ht]
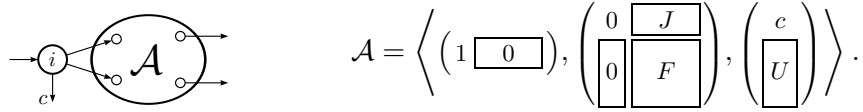

\centering
\TinyPicture%
\VCDraw{%
\begin{VCPicture}{(-4.5,-1.3)(2.5,1.3)}%
\VCPut{(0,0)}%
  {%
   \BigAuto{\Ac}%
   \State[i]{(-3,0)}{A}%
   \VSState{(-1,0.646)}{B1}\VSState{(-1,-0.646)}{B2}%
   \VSState{(1,.736)}{C2}\VSState{(1,-0.736)}{C3}%
  }%
\Initial[w]{A}\FinalR{s}{A}{c}%
\Final[e]{C2}\Final[e]{C3}%
\Edge{A}{B1}\Edge{A}{B2}%
\end{VCPicture}%
}%
\SmallPicture
\eee
$\msp
\begin{displaystyle}
\Ac =
\StanBloc{J}{F}{c}{U}
\end{displaystyle}
\msp$.
\caption{A standard automaton}
\label{f.sta-aut}
\end{figure}

\noindent
Elementary matrix computations show
\begin{equation}
    \CompAuto{\Ac} = c + J\mmul F^{*}\mmul U
	\eqvrg
\label{q.sta-aut-beh}
\end{equation}
where
$\msp c=\TermCnst{\CompAuto{\Ac}}\msp$
is the \emph{constant term} of~$\CompAuto{\Ac}$ and
$\msp J\mmul F^{*}\mmul U\msp$
is the proper part of $\CompAuto{\Ac}\xmd$.

It is convenient to say --- when there is no ambiguity --- that the 
\emph{dimension} of the standard automaton~$\Ac$ is is the dimension 
of the vector~$J$ (or of the matrix~$F$).

It is rather obvious that every automaton is
equivalent to a standard one, but this will not be used here.

\subsection{Operations on standard automata}
\label{s.ope-sta-aut}%

Their special form allows to define \emph{operations} on standard
automata that are parallel to the \emph{rational operations}.
Let~$\Ac$ (as in~\figur{sta-aut}) and~$\Bc$ (with obvious 
notation) be two standard ($\K$-)automata; let~$k$ be in~$\K$.
Then we define the following standard automata
{\allowdisplaybreaks
\begin{alignat}{2}
\pointn&\e &
k\xmd{\Ac} &=
\StanBloc{k\xmd J}{F}{k\xmd c}{U}\eqvrg
\label{q.sta-aut-lml}
\\[1ex]
\pointn& &
{\Ac}\xmd k &=
\StanBloc{J}{F}{c\xmd k}{U\xmd k}\eqvrg
\label{q.sta-aut-rml}
\\[1ex]
\pointn& &
    {\Ac}+{\Bc} &=
\aut{\redmatu{\lignetblblvs{1}{0}{0}},
     \redmatu{\matricettblblvs{0}{J}{K}%
                              {0}{F}{0}%
                              {0}{0}{G}},
     \redmatu{\vecteurtblblvs{c+d}{U}{V}}}\eqvrg
\eee\ee
\label{q.sta-aut-sum}
\\[1ex]
\pointn& &
    {\Ac}\matmul{\Bc} &= 
\aut{\redmatu{\lignetblblvs{1}{0}{0}},
     \redmatu{\matricettblblvs{0}{J}{c\xmd K}%
                              {0}{F}{U\matmul K}%
                              {0}{0}{G}},
     \redmatu{\vecteurtblblvs{c\xmd d}{U\xmd d}{V}}}\eqvrg
\label{q.sta-aut-pro}
\\
\intertext{%
and finally ${\Ac}^{*}$, which is defined when~$c^{*}$ is defined, 
}
\pointn& &
{\Ac}^{*} &=
\StanBloc{c^{*}\xmd J}{H}{c^{*}}{U\xmd c^{*}}
\e \text{with $\msp H= U\matmul c^{*}\xmd J + F\msp$.}
\label{q.sta-aut-sta}
\end{alignat}
}

After these definitions, it is rather natural to say that the two 
exterior multiplications and the star are the \emph{dimension 
invariant} operations.
Elementary matrix computations then establish the following.

\begin{proposition}
\label{p.sta-aut-cmp}
Let~$\Ac$ and~$\Bc$ be two standard $\K$-automata, and let~$k$ be an 
element in~$\K$.
It then holds:
$\msp\CompAuto{k\xmd\Ac} = k\xmd\CompAuto{\Ac}\msp$,
$\msp\CompAuto{\Ac\xmd k} =\CompAuto{\Ac}\xmd k\msp$,
$\msp\CompAuto{\Ac + \Bc} =\CompAuto{\Ac} + \CompAuto{\Bc}\msp$,
and $\msp\CompAuto{\Ac \matmul \Bc} 
=\CompAuto{\Ac}~\CompAuto{\Bc}\msp$.
\end{proposition}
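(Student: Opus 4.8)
The plan is to verify each of the four claimed equalities by direct matrix computation, using the behaviour formula \equnmnosp{sta-aut-beh}, namely $\CompAuto{\Ac}=c+J\matmul F^*\matmul U$, which turns each assertion into an identity about stars of block matrices. For the two exterior products this is immediate: applying the formula to \equnmnosp{sta-aut-lml} gives $\CompAuto{k\xmd\Ac}=k\xmd c+(k\xmd J)\matmul F^*\matmul U=k\xmd(c+J\matmul F^*\matmul U)$, and symmetrically for $\Ac\xmd k$ one uses $\CompAuto{\Ac\xmd k}=c\xmd k+J\matmul F^*\matmul(U\xmd k)$. The only point to note is that the transition matrix $F$ is left untouched by both exterior multiplications, so $F^*$ is unchanged and the scalar simply factors through; this requires no hypothesis beyond those already carried by $\Ac$.

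For the sum \equnmnosp{sta-aut-sum}, I would exploit the block-diagonal structure of the transition matrix, which is $\matriceuu{F & 0 \\ 0 & G}$ on the non-initial states. Since the two blocks do not interact, its star is block-diagonal, $\matriceuu{F^* & 0 \\ 0 & G^*}$, and plugging $I$, the matrix, and $T$ into \equnmnosp{sta-aut-beh} yields $(c+d)+(J\ K)\matmul\matriceuu{F^* & 0 \\ 0 & G^*}\matmul\vecteurd{U}{V}=(c+d)+J\matmul F^*\matmul U+K\matmul G^*\matmul V$, which is exactly $\CompAuto{\Ac}+\CompAuto{\Bc}$. The key mechanical fact here is the star of a block-diagonal matrix, which follows termwise from $M^n=\matriceuu{F^n & 0 \\ 0 & G^n}$.

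The product \equnmnosp{sta-aut-pro} is the substantive case and the main obstacle, because its transition matrix is block \emph{upper-triangular}, $N=\matriceuu{F & U\matmul K \\ 0 & G}$, and the off-diagonal block couples the two automata. The plan is to use the standard identity for the star of an upper-triangular block matrix, $N^*=\matriceuu{F^* & F^*\matmul(U\matmul K)\matmul G^* \\ 0 & G^*}$, which one justifies either by the geometric-series expansion $N^*=\sum_n N^n$ (tracking how powers distribute over the blocks) or by verifying $N^*=\und+N\matmul N^*$ directly. Substituting into \equnmnosp{sta-aut-beh}, with initial vector $(J\ c\xmd K)$ and final vector $\vecteurd{U\xmd d}{V}$ and constant term $c\xmd d$, the behaviour becomes
\begin{equation}
c\xmd d + J\matmul F^*\matmul U\xmd d + \bigl(c\xmd K + J\matmul F^*\matmul U\matmul K\bigr)\matmul G^*\matmul V\eqpnt
\notag
\end{equation}
The final step is to recognise this as a factored product: collecting the terms shows it equals $\bigl(c+J\matmul F^*\matmul U\bigr)\bigl(d+K\matmul G^*\matmul V\bigr)=\CompAuto{\Ac}\,\CompAuto{\Bc}$, using \equnmnosp{sta-aut-beh} for each factor. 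I expect the only delicate bookkeeping to be keeping the constant terms $c,d$ aligned with the correct blocks of the initial and final vectors; once the triangular star formula is in hand the rest is routine rearrangement.
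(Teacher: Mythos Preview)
Your proposal is correct and is precisely what the paper has in mind: the paper does not give an explicit proof but merely states that ``elementary matrix computations then establish'' the proposition, and your argument spells out exactly those computations --- the direct scalar factoring for the exterior products, the block-diagonal star for the sum, and the upper-triangular block star identity for the product. There is nothing to add.
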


The case of the star operation is significantly more involved and 
requires that \theor{eto-ser-qqu} is first established.
Then, the following statement holds.

\begin{proposition}
\label{p.sta-aut-str}
Let~$\K$ be a strong semiring.
If~$\Ac$ is a standard $\K$-automaton, it then holds:
$\msp\CompAuto{{\Ac}^{*}} = \left(\CompAuto{\Ac}\right)^{*}\msp$.
\end{proposition}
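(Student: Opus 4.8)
\begin{proofss}[Proof plan]
The plan is to compute $\CompAuto{\Ac^*}$ directly from the matrix form of~$\Ac^*$ in~\equat{sta-aut-sta}, and then to massage it, using the classical star identities for proper series, into exactly the closed form that~\theor{eto-ser-qqu} provides for~$\left(\CompAuto{\Ac}\right)^*$. Write $\Ac=\StanBloc{J}{F}{c}{U}$; by~\equat{sta-aut-beh} we have $\CompAuto{\Ac}=c+J\mmul F^*\mmul U$, where $c=\TermCnst{\CompAuto{\Ac}}$ and $P=J\mmul F^*\mmul U$ is the proper part of~$\CompAuto{\Ac}$. Since $\Ac^*=\StanBloc{c^{*}\xmd J}{H}{c^{*}}{U\xmd c^{*}}$ with $H=U\mmul c^*\mmul J+F$, a second use of~\equat{sta-aut-beh} yields
\begin{equation}
\CompAuto{\Ac^*}=c^*+c^*\mmul J\mmul H^*\mmul U\mmul c^*\eqpnt
\notag
\end{equation}

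First I would record that all series occurring below are proper, so that every star is unconditionally defined: the transition entries of a standard automaton are linear combinations of elements of~$\MPos$, hence $F$, $P$ and $U\mmul c^*\mmul J$ are proper, and therefore $F^*$ and $H^*$ exist. I may then invoke two identities valid for proper series and for their matrix analogues: the \emph{sum rule} $(X+Y)^*=X^*\mmul(Y\mmul X^*)^*$ and the \emph{sliding rule} $(X\mmul Y)^*\mmul X=X\mmul(Y\mmul X)^*$. Applying the sum rule to $H=F+U\mmul c^*\mmul J$ gives $H^*=F^*\mmul(U\mmul c^*\mmul J\mmul F^*)^*$, so that
\begin{equation}
c^*\mmul J\mmul H^*\mmul U\mmul c^*
=c^*\mmul J\mmul F^*\mmul(U\mmul c^*\mmul J\mmul F^*)^*\mmul U\mmul c^*\eqpnt
\notag
\end{equation}

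Next I would apply the sliding rule with $X=U$ and $Y=c^*\mmul J\mmul F^*$, which turns $(U\mmul c^*\mmul J\mmul F^*)^*\mmul U$ into $U\mmul(c^*\mmul J\mmul F^*\mmul U)^*$; recalling $P=J\mmul F^*\mmul U$, the central block collapses to
\begin{equation}
c^*\mmul J\mmul H^*\mmul U\mmul c^*
=c^*\mmul P\mmul(c^*\mmul P)^*\mmul c^*\eqpnt
\notag
\end{equation}
Substituting back and using $(c^*\mmul P)^*\mmul c^*=c^*+c^*\mmul P\mmul(c^*\mmul P)^*\mmul c^*$, I obtain $\CompAuto{\Ac^*}=(c^*\mmul P)^*\mmul c^*$. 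On the other hand, applying~\theor{eto-ser-qqu} to $s=\CompAuto{\Ac}$, whose constant term is~$c$ and whose proper part is~$P$, gives $\left(\CompAuto{\Ac}\right)^*=(c^*\mmul P)^*\mmul c^*$; the two agree, which is the assertion.

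The main obstacle I anticipate is not the algebra but its legitimacy: I must make sure the scalar-and-matrix star identities I use genuinely hold for the objects at hand in~$\KM$, and this is where the hypotheses enter. Concretely, I would check that $F$, $U\mmul c^*\mmul J$ and the intermediate products are proper, so that the sum and sliding rules apply with no side condition and the rearrangements of the underlying summable families are justified by the strongness of~$\K$ (as in \defin{str-smr} and~\theor{eto-ser-qqu}). I would also isolate the unique place where the possibly only partially defined scalar star~$c^*$ intervenes: it is defined precisely under the validity hypothesis that makes~$\Ac^*$ itself meaningful in~\equat{sta-aut-sta}, and \theor{eto-ser-qqu} is exactly what guarantees that $\left(\CompAuto{\Ac}\right)^*$ is defined under that same condition, so the equality is an equality of series that are everywhere or nowhere defined together.
\end{proofss}
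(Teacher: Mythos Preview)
Your argument is correct and follows exactly the route the paper indicates: the paper does not spell out a proof of \propo{sta-aut-str} but only says that it ``requires that \theor{eto-ser-qqu} is first established'' and otherwise relies on matrix computations, which is precisely what you do --- compute $\CompAuto{\Ac^{*}}$ from \equnm{sta-aut-sta}, reduce it via the sum and sliding identities for proper series to $(c^{*}\mmul P)^{*}\mmul c^{*}$, and match this with the expression for $\left(\CompAuto{\Ac}\right)^{*}$ given by \theor{eto-ser-qqu}. Your discussion of where strongness and the definedness of~$c^{*}$ enter is also on point.
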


\subsection{The standard automaton of an expression}
\label{s.sta-aut-exp}

The definition of the `rational' operations on standard automata 
immediately induces the definition of a standard automaton 
canonically associated with every rational expression.
It coincides with the automaton first defined by Glushkov 
in~\cite{Glus61}.

\begin{proposition}
\label{p.sta-aut-exp}
For every valid rational $\K$-expression~$\Ed$, there exists a 
canonical standard $\K$-automaton~$\Stan{\Ed}$ that realises the 
series denoted by~$\Ed$, that is, 
$\msp \CompAuto{\Stan{\Ed}} = \CompAuto{\Ed}\msp$.
\end{proposition}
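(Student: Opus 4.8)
The plan is to proceed by induction on the formation (structure) of the valid expression~$\Ed$, following the grammar of \defin{rat-exp}. At each node of the syntactic tree I build a standard automaton and verify that its behaviour matches the series denoted by the corresponding subexpression. The base cases are the atomic formulas~$\zed$, $\und$, and~$m\in\MPos$, for which I exhibit explicit one- or two-state standard automata: for~$\zed$ the single initial state with final weight~$\zeK$; for~$\und$ the single initial state with final weight~$\unK$; and for~$m$ a two-state automaton with a single transition labelled~$m$ from the initial state to a final state of weight~$\unK$. In each case the behaviour is read off immediately from~\equnm{sta-aut-beh} and equals~$\CompExpr{\zed}$, $\CompExpr{\und}$, $\CompExpr{m}$ respectively.

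For the inductive step I invoke the operations on standard automata defined in~\equnm{sta-aut-lml}--\equnm{sta-aut-sta}. Suppose~$\Fd$ and~$\Gd$ are valid subexpressions for which standard automata~$\StanF$ and~$\StanG$ have already been constructed with $\CompAuto{\StanF}=\CompExpr{\Fd}$ and $\CompAuto{\StanG}=\CompExpr{\Gd}$. I then \emph{define} $\Stan{\Ed}$ by applying the matching operation: $\Stan{k\xmd\Fd}=k\xmd\StanF$, $\Stan{\Fd\xmd k}=\StanF\xmd k$, $\Stan{\Fd\plusopr\Gd}=\StanF+\StanG$, $\Stan{\Fd\prodopr\Gd}=\StanF\matmul\StanG$, and $\Stan{\Fd^{*}}=(\StanF)^{*}$. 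That each of these is again a standard automaton is immediate from the shapes of the right-hand sides in~\equnm{sta-aut-lml}--\equnm{sta-aut-sta} (single initial state, which is the end of no transition). The behaviour identities are then exactly \propo{sta-aut-cmp} for the two exterior multiplications, the sum, and the product, and \propo{sta-aut-str} for the star; combining these with the inductive hypotheses and \defin{den-ser} gives $\CompAuto{\Stan{\Ed}}=\CompExpr{\Ed}$ in every case.

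\textbf{The main point requiring care} is the star case, and it is here that validity must be used. The operation~$(\StanF)^{*}$ in~\equnm{sta-aut-sta} is defined only when~$c^{*}$ is defined, where~$c=\TermCnst{\CompAuto{\StanF}}$ is the constant term of~$\StanF$. By the inductive hypothesis and \propo{cst-trm} this constant term equals~$\TermCnst{\Fd}$, and the validity of~$\Ed=(\Fd^{*})$ (in the sense of \defin{con-ter-exp}) is precisely the assertion that~$\TermCnst{\Fd}$ is starrable. Hence the construction is well-defined exactly when~$\Ed$ is valid, and \theor{eto-ser-qqu} together with \propo{sta-aut-str} guarantees $\CompAuto{(\StanF)^{*}}=(\CompExpr{\Fd})^{*}=\CompExpr{\Fd^{*}}$. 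Finally, \emph{canonicity} (independence from any choices) is automatic: since the operations~\equnm{sta-aut-lml}--\equnm{sta-aut-sta} are fully determined by their operands, the recursion assigns to each~$\Ed$ a unique standard automaton~$\Stan{\Ed}$, and I would remark that the resulting automaton is the one introduced by Glushkov. This completes the induction and the proof.
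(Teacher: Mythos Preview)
Your proposal is correct and follows essentially the same approach as the paper's own proof: structural induction on~$\Ed$, with the same explicit base automata for~$\zed$, $\und$, $m$, and the same recursive definitions $\Stan{k\xmd\Fd}=k\xmd\StanF$, \ldots, $\Stan{\Fd^{*}}=(\StanF)^{*}$, concluding via Propositions~\ref{p.sta-aut-cmp} and~\ref{p.sta-aut-str}. Your treatment is in fact slightly more explicit than the paper's in spelling out why validity is needed for the star case (the paper leaves this implicit in the hypothesis ``valid'').
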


\begin{proof}
The definition of~$\Stan{\Ed}$ starts with the definitions of 
standard automata for the atomic formulas.

\pointn
$\msp \Ed = \zed \msp$
then
$\msp 
\begin{displaystyle}
\Stan{\zed} = \aut{\redmatu{\matriceuu{1}},
                   \redmatu{\matriceuu{0}},
                   \redmatu{\matriceuu{0}}} 
\end{displaystyle}
\msp$.

\pointn
$\msp \Ed = \und \msp$
then
$\msp 
\begin{displaystyle}
\Stan{\und} = \aut{\redmatu{\matriceuu{1}},
                   \redmatu{\matriceuu{0}},
                   \redmatu{\matriceuu{1}}} 
\end{displaystyle}
\msp$.

\pointn
$\msp \Ed = m \in M \msp$
then
$\msp 
\begin{displaystyle}
\Stan{m} = 
\aut{\redmatu{\ligned{1}{0}},
          \redmatu{\matricedd{0}{m}{0}{0}},
          \redmatu{\vecteurd{0}{1}}} 
\end{displaystyle}
\msp$.\\[1ex]

\noindent
It is clear that 
$\msp \CompAuto{\Stan{\zed}} = 0 = \CompAuto{0}\msp$,
$\msp \CompAuto{\Stan{\und}} = 1 = \CompAuto{\und}\msp$, and
$\msp \CompAuto{\Stan{m}} = m = \CompAuto{m}\msp$.

The natural definitions:
$\msp \Stan{k\xmd\Ed} = k\xmd\Stan{\Ed}\msp$,
$\msp \Stan{\Ed\xmd k} = \Stan{\Ed}\xmd k\msp$,
$\msp \Stan{\Fd +\Gd} = \Stan{\Fd} + \Stan{\Gd}\msp$,
$\msp \Stan{\Fd \matmul\Gd} = \Stan{\Fd}\matmul\Stan{\Gd}\msp$, and
$\msp \Stan{\Fd^{*}} = \left(\Stan{\Fd}\right)^{*}\msp$,
allow the construction of~$\Stan{\Ed}$ for every valid $\K$-rational 
expression~$\Ed$, by induction on the formation of the expression, 
whereas Propositions~\ref{p.sta-aut-cmp} and~\ref{p.sta-aut-str} insure that 
$\msp\CompAuto{\Stan{\Ed}}=\CompAuto{\Ed}\msp$.
\end{proof}

This automaton~$\StanE$ will be indifferently called `\emph{the} 
standard automaton' or `the position automaton' of~$\Ed$.

\setcounter{theor-tmp}{\value{theorem}}
\setcounter{theorem}{\value{exa-exp}}
\begin{example}[continued]
Let us write~$\Ed_{\idex}=\Fd_{\idex}\matmul\Gd_{\idex}$ with
$\Fd_{\idex}=a^*$ and $\Gd_{\idex}=(a^* + (-1)b^*)^*$.
It comes
\begin{gather}
\Stan{\Fd_{\idex}}=
   \aut{\ligned{1}{0},\matricedd{0}{a}{0}{a},\vecteurd{1}{1}}\EqVrgInt
\e
\Stan{\Gd_{\idex}}=
\aut{\lignet{1}{0}{0},
     \matricett{0}{a}{-b}{0}{2a}{-b}{0}{a}{0},
	 \vecteurt{1}{1}{1}}\EqVrgInt
\notag
\\[2ex]
\text{and}\ee
\Stan{\Ed_{\idex}}=\Stan{\Fd_{\idex}}\matmul\Stan{\Gd_{\idex}}=
\aut{%
\begin{pmatrix} 1 & 0 & 0 & 0\end{pmatrix},%
\begin{pmatrix}%
 0 & a & a & -b\\%
 0 & a & a & -b\\%
 0 & 0 & 2a & -b\\%
 0 & 0 & a & 0%
 \end{pmatrix},%
\begin{pmatrix} 1\\ 1 \\ 1 \\ 1\end{pmatrix}}
\eqpnt
\notag
\end{gather}
\end{example}
\setcounter{theorem}{\value{theor-tmp}}

\subsection{Morphisms of standard automata}
\label{s.mor-sta-aut}%

Let~$\Ac$ and~$\Ac'$ be two standard automata,
\begin{equation}
\Ac =
\StanBloc{J}{F}{c}{U}
\EqVrgInt
\text{and}
\msp
\Ac' =
\StanBloc{J'}{F'}{c'}{U'}
\!\eqvrg
\notag
\end{equation}
and
$\msp \varphi\colon\Ac\rightarrow\Ac' \msp$
a morphism of automata.
The image by~$\varphi$ of the inital state~$i$ of~$\Ac$ is 
necessarily the initial state~$i'$ of~$\Ac'$ and no other state~$q$ 
of~$\Ac$ is mapped onto~$i'$ for otherwise~$q$ would not be 
accessible. 

Hence the transfer matrix of~$\varphi$ is of the form
\begin{equation}
\redmatu{\matriceddblvs{1}{0}{0}{X_{\varphi}}}
\eqvrg
\notag
\end{equation}
and the conjugacy relation \equnm{con-aut} passes to the `core' of the
automata
\begin{equation}
c = c' \EqVrgInt
J\xmd X_{\varphi}=J'\EqVrgInt
F\xmd X_{\varphi}=X_{\varphi}\xmd F'\EqVrgInt \text{and} \e
U=X_{\varphi}\xmd U'
\eqpnt 
\notag
\end{equation}

The operations on standard automata that we have defined above are 
consistant with morphisms.
 
\begin{proposition}
\label{p.mor-sta}
Let~$\Ac$ and~$\Ac'$, $\Bc$ and~$\Bc'$ be four standard automata, and 
let 
$\msp \varphi\colon\Ac\rightarrow\Ac' \msp$ and
$\msp \psi\colon\Bc\rightarrow\Bc' \msp$
be two automata morphisms.
Then
$\msp \varphi\x\psi\colon\Ac+\Bc\rightarrow\Ac'+\Bc'\msp$,
$\msp\varphi\x\psi\colon\Ac\matmul\Bc\rightarrow\Ac'\matmul\Bc'\msp$, 
and $\msp \varphi\colon\Ac^{*}\rightarrow(\Ac')^{*} \msp$
are automata morphisms.
\end{proposition}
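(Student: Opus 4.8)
The plan is to treat all three cases uniformly: in each I would exhibit the amalgamation (transfer) matrix of the candidate map and check the conjugacy relations \equnm{con-aut} directly, which by the definition of a morphism is exactly what has to be proved. First I would record the content of the hypotheses. Writing the data of $\Ac,\Bc,\Ac',\Bc'$ as $(J,F,c,U)$, $(K,G,d,V)$, $(J',F',c',U')$, $(K',G',d',V')$ as in \equat{sta-aut-sum}, the discussion preceding the statement shows that $\varphi$ and $\psi$, being morphisms of standard automata, have transfer matrices reducing to their cores $X_\varphi$, $X_\psi$, subject to
\[
c=c',\quad J\,X_\varphi=J',\quad F\,X_\varphi=X_\varphi\,F',\quad U=X_\varphi\,U',
\]
together with the four analogous identities $d=d'$, $K\,X_\psi=K'$, $G\,X_\psi=X_\psi\,G'$, $V=X_\psi\,V'$ for $\psi$. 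These eight equalities are the only facts about $\varphi$ and $\psi$ that I will use.

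For the sum and the product, the map $\varphi\x\psi$ fixes the single initial state and acts as $\varphi$ and $\psi$ on the two blocks of non-initial states; hence its amalgamation matrix is the block-diagonal matrix $Y$ with diagonal blocks $1$, $X_\varphi$, $X_\psi$. Reading $\Ac+\Bc$ and $\Ac'+\Bc'$ off \equat{sta-aut-sum}, I would compute $E\,Y$ and $Y\,E'$ blockwise and compare: every block collapses immediately to one of the recorded identities, and the initial- and terminal-vector relations are equally short (using $c=c'$, $d=d'$). Reading $\Ac\matmul\Bc$ and $\Ac'\matmul\Bc'$ off \equat{sta-aut-pro}, the same $Y$ works; here the only block that is not instantaneous is the off-diagonal outer-product term, where one must verify $U\,K\,X_\psi=X_\varphi\,U'\,K'$. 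This follows by substituting $K\,X_\psi=K'$ and then $U=X_\varphi\,U'$; the terminal entry $U\,d=X_\varphi\,U'\,d'$ uses $U=X_\varphi\,U'$ and $d=d'$, and the scalar entry is simply $c\,d=c'\,d'$.

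The star is the case carrying the only genuine content, so I expect its core transition relation to be the main (if modest) obstacle. Since starring is dimension-invariant the map is still $\varphi$, with core $X_\varphi$; reading $\Ac^{*}$ off \equat{sta-aut-sta}, its core transition matrix is $H=U\,c^{*}\,J+F$ and that of $(\Ac')^{*}$ is $H'=U'\,(c')^{*}\,J'+F'$. Because $c=c'$, the scalar $c^{*}$ is defined exactly when $(c')^{*}$ is, and $c^{*}=(c')^{*}$; the initial-row relation $c^{*}\,J\,X_\varphi=(c')^{*}\,J'$ and the terminal relation $U\,c^{*}=X_\varphi\,U'\,(c')^{*}$ then reduce at once to the recorded identities. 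The one computation needing all four $\varphi$-relations simultaneously is the core conjugacy $H\,X_\varphi=X_\varphi\,H'$: expanding $H\,X_\varphi=U\,c^{*}\,(J\,X_\varphi)+F\,X_\varphi$, substituting $J\,X_\varphi=J'$ and $F\,X_\varphi=X_\varphi\,F'$, and then rewriting the first summand via $U=X_\varphi\,U'$ and $c^{*}=(c')^{*}$, yields $X_\varphi\,(U'\,(c')^{*}\,J'+F')=X_\varphi\,H'$. Once this identity is secured all three maps satisfy \equnm{con-aut} and are therefore morphisms, and no separate argument for well-definedness is needed, since $Y$ (resp.\ the core $X_\varphi$) is by construction the amalgamation matrix of the stated map.
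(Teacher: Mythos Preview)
Your proposal is correct and follows essentially the same approach as the paper: exhibit the block-diagonal transfer matrix $Y=\mathrm{diag}(1,X_\varphi,X_\psi)$ (or just the core $X_\varphi$ for the star) and verify the conjugacy relations blockwise, with the only nontrivial checks being the off-diagonal product block $U\,K\,X_\psi=X_\varphi\,U'\,K'$ and the star core identity $(U\,c^{*}J+F)X_\varphi=X_\varphi(U'\,c^{*}J'+F')$. You are in fact more thorough than the paper, which displays only the transition-matrix conjugacies and leaves the initial and terminal conditions implicit.
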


\begin{proof}
The statement for the addition is obvious. 
The computations for the other two operations are hardly more complex.
\setlength{\longueurblc}{8.5ex}
\setlength{\hauteurblc}{6.5ex}
\setlength{\centrageblc}{-2.5ex}
\renewcommand{\redmatu}[1]{\scalebox{0.76}{#1}}

For the product we have
\begin{align}
\redmatu{\matricettblblvs{0}{J}{c\mmul K}%
                {0}{F}{U\mmul K}%
                {0}{0}{G}}
\matmul
\redmatu{\matricettblblvs{1}{0}{0}%
                {0}{X_{\varphi}}{0}%
                {0}{0}{X_{\psi}}}
& =
\redmatu{\matricettblblvs{0}{J\mmul X_{\varphi}}{c\mmul K\mmul X_{\psi}}%
                {0}{F\mmul X_{\varphi}}{U\mmul K\mmul X_{\psi}}%
                {0}{0}{G\mmul X_{\psi}}}
=
\notag
\\[2ex]
\redmatu{\matricettblblvs{0}{J'}{c\mmul K'}%
                {0}{X_{\varphi}\mmul F}{X_{\varphi}\mmul U'\mmul K'}%
                {0}{0}{X_{\psi}\mmul G'}}
& =
\redmatu{\matricettblblvs{1}{0}{0}%
                {0}{X_{\varphi}}{0}%
                {0}{0}{X_{\psi}}}
\matmul
\redmatu{\matricettblblvs{0}{J'}{c\mmul K'}%
                {0}{F'}{U'\mmul K'}%
                {0}{0}{G'}}
\EqPnt
\notag
\end{align}

\setlength{\longueurblc}{6.5ex}%
\setlength{\hauteurblc}{5ex}%
\setlength{\centrageblc}{-2ex}%
\renewcommand{\redmatu}[1]{\scalebox{0.84}{#1}}%

\noindent
And for the star, the sequence of equalities
\begin{equation}
(U\mmul c^{*} J + F)\mmul X_{\varphi} =
(U\mmul c^{*} J)\mmul X_{\varphi} + F \mmul X_{\varphi} =
X_{\varphi}\mmul(U'\mmul c^{*} J') + X_{\varphi}\mmul F =
X_{\varphi}\mmul(U'\mmul c^{*} J' + F') 
\notag
\end{equation}
yields the result.
\end{proof}


\section{{The standard derived-term automaton of an expression}}%
\label{s.sta-der-ter}

By a process similar to the construction of~$\Stan{\Ed}$, though more
involved, we associate now with every $\K$-expression~$\Ed$
\emph{another} standard automaton, the \emph{standard derived-term
automaton}~$\TermE$.
We begin with the definition of the set~$\DerTer{\Ed}$ of 
\emph{derived terms} of~$\Ed$.

\subsection{The derived terms of an expression}
\label{s.der-ter-exp}

The set of derived terms is defined by induction on the formation of
the expression.

\begin{definition}
\label{d.der-ter}
The set~$\DerTerE$ of derived terms\footnote{%
   The definition is the same as in~\cite{LombSaka05a} and all 
   subsequent works of ours.
   We have changed the name from \emph{true derived term} to 
   \emph{derived term} and the
   notation from~$\TruDerTer{\Ed}$ to~$\DerTerE$ for 
   simplification, as the new presentation allows it.}
of a $\K$-expression~$\Ed$ over~$M$ is a set of $\K$-expressions 
defined inductively by:\\
{\allowdisplaybreaks
\noindent
\textbf{Base cases}
\begin{alignat}{3}
\pointn&\ee&
\Ed &= \zed \e\text{or}\e \Ed = \und &\ee
\DerTerE &=\emptyset\eqpnt
\eee\eee\eee\ee\e
\label{q.dtr-0-1}
\\
\pointn&\ee&
\Ed &= m \e m\in M\bk\unM\e  &
\DerTerE &=\{\und\}\eqpnt
\label{q.dtr-m}
\end{alignat}
\textbf{Induction}
\begin{alignat}{3}
\pointn&\ee&
\Ed &= k\xmd \Fd &
\DerTerE &= \DerTerF\eqpnt
\label{q.dtr-k-lft}
\\
\pointn&\ee&
\Ed &= \Fd\xmd k &
\DerTerE &= \DerTerF\xmd k = \Defi{\Kd\xmd k}{\Kd\in\DerTerF} \eqpnt
\label{q.dtr-k-rgt}
\\
\pointn&\ee&
\Ed &= \Fd + \Gd &\ee
\DerTerE &= \DerTerF \cup \DerTerG\eqpnt
\label{q.dtr-sum}
\\
\pointn&\ee&
\Ed &= \Fd \matmul \Gd &
\DerTerE &= \DerTerF\matmul\Gd \cup \DerTerG 
          = \Defi{\Kd\matmul\Gd}{\Kd\in\DerTerF} \cup \DerTerG \eqpnt
\ee
\label{q.dtr-prd}
\\
\pointn&\ee&
\Ed &= \Fd^{*} &
\DerTerE &= \DerTerF\matmul\Fd^{*} 
          = \Defi{\Kd\matmul\Fd^{*}}{\Kd\in\DerTerF} \eqpnt
\label{q.dtr-str}
\end{alignat}
}
\end{definition}

\begin{lemma}
\label{l.dtr-car}
Let~$\Ed$ be a $\K$-expression over~$M$.
Then $\msp\jsCard{\DerTerE}\leq\LittLeng{\Ed}\msp$.
\end{lemma}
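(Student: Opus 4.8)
The plan is to prove the bound by induction on the formation of the expression $\Ed$, following exactly the inductive definition of $\DerTerE$ in \defin{der-ter}. The only external ingredient is the behaviour of the literal length under the constructors: $\LittLeng{\Ed}$ counts the occurrences of atoms $m\in\MPos$ in~$\Ed$, hence it is additive on sums and products, $\msp\LittLeng{\Fd+\Gd}=\LittLeng{\Fd\matmul\Gd}=\LittLeng{\Fd}+\LittLeng{\Gd}\msp$, and it is left unchanged by the two exterior multiplications and by the star, $\msp\LittLeng{k\xmd\Fd}=\LittLeng{\Fd\xmd k}=\LittLeng{\Fd^{*}}=\LittLeng{\Fd}\msp$. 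For the base cases the inequality is in fact an equality: when $\Ed=\zed$ or $\Ed=\und$ both sides vanish, and when $\Ed=m$ with $m\in\MPos$ one has $\DerTerE=\{\und\}$ and $\LittLeng{\Ed}=1$.

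For the three unary inductive steps the argument is uniform. In each of the cases $\Ed=k\xmd\Fd$, $\Ed=\Fd\xmd k$, and $\Ed=\Fd^{*}$, the set $\DerTerE$ is obtained from $\DerTerF$ either unchanged, or by applying the map $\Kd\mapsto\Kd\xmd k$, respectively $\Kd\mapsto\Kd\matmul\Fd^{*}$. Since derived terms are \emph{syntactic} expressions, appending one and the same fixed factor to the right of every element of a set is an injective operation, so the image has the same cardinality as $\DerTerF$. As the literal length is invariant under these constructors, the induction hypothesis gives $\msp\jsCard{\DerTerE}=\jsCard{\DerTerF}\leq\LittLeng{\Fd}=\LittLeng{\Ed}\msp$.

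The two binary steps rest on the union bound $\jsCard{S\cup S'}\leq\jsCard{S}+\jsCard{S'}$ combined with additivity of the literal length. For $\Ed=\Fd+\Gd$ we have $\DerTerE=\DerTerF\cup\DerTerG$, whence $\msp\jsCard{\DerTerE}\leq\jsCard{\DerTerF}+\jsCard{\DerTerG}\leq\LittLeng{\Fd}+\LittLeng{\Gd}=\LittLeng{\Ed}\msp$; for $\Ed=\Fd\matmul\Gd$ we have $\DerTerE=\DerTerF\matmul\Gd\cup\DerTerG$, and $\DerTerF\matmul\Gd$ is the image of $\DerTerF$ under the injective map $\Kd\mapsto\Kd\matmul\Gd$, so $\jsCard{\DerTerF\matmul\Gd}\leq\jsCard{\DerTerF}$ and the same chain of inequalities closes the case. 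I do not expect any genuine obstacle: the whole content is the bookkeeping that right-concatenation by a fixed expression never increases cardinality and that the literal length is additive precisely where $\DerTerE$ is formed as a union. The only point worth stating explicitly is that the inequality can be strict in the sum and product cases, since the two sets being united need not be disjoint, which is exactly why the statement is an inequality and not an equality.
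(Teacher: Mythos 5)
Your proof is correct and takes essentially the same route as the paper's: induction on the formation of the expression, with equality in the base cases, invariance of both the literal length and the number of derived terms under the dimension-invariant operations, and the union bound combined with additivity of the literal length for sum and product (including the observation that the union need not be disjoint, which is why only an inequality holds). The only difference is that you spell out the injectivity of right-concatenation by a fixed factor, a point the paper leaves implicit in asserting the invariance of the cardinality.
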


\begin{proof}
The equality holds for the base cases.
Both litteral length and number of derived terms are invariant for 
dimension invariant operations.
For the addition and product operations, 
$\msp\LittLeng{\Fd+\Gd}=\LittLeng{\Fd\matmul\Gd}
=\LittLeng{\Fd}+\LittLeng{\Gd}\msp$ 
and $\DerTer{\Fd+\Gd}$ and $\DerTer{\Fd\matmul\Gd}$ are the union of 
sets each of which satisfies the inequality: they also satisfy the 
inequality, \emph{all the more that the union may not be disjoint}.
\end{proof}

Indeed, the interest, the subtility, and the difficulty, of the 
construction to come arise from the fact that the union in the 
definition of $\DerTer{\Fd+\Gd}$ and $\DerTer{\Fd\matmul\Gd}$ happens 
not to be disjoint.

\setcounter{theor-tmp}{\value{theorem}}
\setcounter{theorem}{\value{exa-exp}}
\begin{example}[Continued]
Let 
$\msp\Fd_{\idex}=a^*\msp$,
$\msp\Gd_{\idex}=(a^* + (-1)b^*)^*\msp$ and
$\msp\Ed_{\idex}=\Fd_{\idex}\matmul\Gd_{\idex}\msp$.
It holds: 
$\msp\DerTer{\Fd_{\idex}}=\{a^*\}\msp$,
$\msp\DerTer{\Gd_{\idex}}=\{a^*\matmul\Gd_{\idex},b^*\matmul\Gd_{\idex}\}\msp$
and $\msp\DerTer{\Ed_{\idex}}=\DerTer{\Gd_{\idex}}\msp$.
\end{example}
\setcounter{theorem}{\value{theor-tmp}}

\subsection{The inductive definition of the standard derived-term automaton}
\label{s.ind-con-sta}

With every $\K$-expression~$\Ed$, and by induction on the formation 
of~$\Ed$, we associate a standard automaton~$\TermE$ of 
dimension~$\DerTerE$, which we call the \emph{standard derived-term 
automaton} of~$\Ed$.

\noindent
\textbf{Base cases}

$\msp\Term{\zed}= \Stan{\zed}\msp$,
$\msp\Term{\und}= \Stan{\und}\msp$,
and $\msp\Term{m}= \Stan{m}\msp$.
for every~$m$ in~$M$.

\noindent
\textbf{Dimension invariant operations}

$\msp\Term{k\xmd\Fd}= k\xmd\TermF\msp$,
$\msp\Term{\Fd\xmd k}= \TermF\xmd k\msp$,
and $\msp\Term{\Fd^{*}}= \left(\TermF\right)^{*}\msp$.

\noindent
\textbf{Addition and product}

\pointn
$\msp\TermF+\TermG\msp$
is 
a standard automaton of dimension
$\msp\DerTerF\sqcup\DerTerG\msp$.
Let~$\varphi$ be the `natural' map
\begin{equation}
\varphi \colon \DerTerF\sqcup\DerTerG \rightarrow
              \DerTerF\cup\DerTerG
\eqvrg
\notag
\end{equation}
that is, $\varphi$ maps two terms~$\Kd$ and~$\Kd'$ 
of~$\DerTerF\sqcup\DerTerG$ onto one if they are \emph{equal}, hence 
$\msp\Kd\in\DerTerF\msp$,
$\msp\Kd'\in\DerTerG\msp$ and 
$\msp\Kd=\Kd'\msp$, or, to state it otherwise, if
$\msp\Kd\in\DerTerF\cap\DerTerG\msp$.

\begin{proposition}
\label{p.add-trm}%
The map~$\varphi$ is a morphism of automata.
\end{proposition}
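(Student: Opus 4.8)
The plan is to verify the morphism property directly from the concrete criterion recorded at the end of \secti{pre-not} (following \propo{out-mor}): writing $\TermF+\TermG$ as $\aut{I,E,T}$ and letting $\xphi$ denote the amalgamation matrix of $\varphi$, it suffices to check that the entries of $T$ indexed by $\varphi$-equivalent states coincide, and that the rows of $E\matmul\xphi$ indexed by $\varphi$-equivalent states coincide. Since $\varphi$ is defined only on the core $\DerTerF\sqcup\DerTerG$, I first extend it to the full state set by fixing the initial state $i$ of the sum; as recalled in \secti{mor-sta-aut}, the transfer matrix of any morphism of standard automata has exactly this block form, so nothing is lost. By \equnm{sta-aut-sum} the core of $\TermF+\TermG$ is indexed by the disjoint union $\DerTerF\sqcup\DerTerG$ and $\varphi$ merges a term of $\DerTerF$ with a term of $\DerTerG$ precisely when they are equal as expressions; hence the only non-trivial classes are the pairs $\{\Kd,\Kd'\}$ with $\Kd\in\DerTerF$, $\Kd'\in\DerTerG$ and $\Kd=\Kd'$, that is, $\Kd\in\DerTerF\cap\DerTerG$. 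The initial state and the terms lying in only one of $\DerTerF,\DerTerG$ form singleton classes and impose no condition.

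For the condition on $T$, I would invoke the invariant that in any standard derived-term automaton the final weight attached to the state labelled by a derived term $\Kd$ equals the constant term $\TermCnst{\Kd}$. This holds for the base cases and is preserved throughout \secti{ind-con-sta}: the dimension-invariant operations multiply the core final vector by $k$ on the left, by $k$ on the right, or by $c^{*}$, in agreement with $\TermCnst{\Kd\,k}=\TermCnst{\Kd}\,k$ and $\TermCnst{\Kd\matmul\Fd^{*}}=\TermCnst{\Kd}\,c^{*}$, while the product rescales the $\DerTerF$-part of the final vector by $d=\TermCnst{\Gd}$, in agreement with $\TermCnst{\Kd\matmul\Gd}=\TermCnst{\Kd}\,\TermCnst{\Gd}$. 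Granting this, the two entries of $T$ at a shared term $\Kd\in\DerTerF\cap\DerTerG$ are the $U$-entry $\TermCnst{\Kd}$ and the $V$-entry $\TermCnst{\Kd}$, so they agree and the $T$-condition is settled.

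The condition on the rows of $E\matmul\xphi$ carries the real content. For $\Kd\in\DerTerF$ the corresponding row of $E$ is supported on the $\DerTerF$-block and is the transition row of $\Kd$ in $F$; for $\Kd'\in\DerTerG$ it is supported on the $\DerTerG$-block and is the row of $\Kd'$ in $G$; multiplying by $\xphi$ merges, for each $\Ld\in\DerTerF\cap\DerTerG$, its $\DerTerF$-column with its $\DerTerG$-column. I would obtain the required equality from a second invariant, proved by induction on the formation of the expression in parallel with \secti{ind-con-sta}: viewing the core transition row of a derived term $\Kd$ as the formal sum $\sum_{\Ld} w_{\Kd\to\Ld}\,\Ld$ of its outgoing weighted labels tagged by the derived terms $\Ld$ they reach, this sum is \emph{intrinsic} to the expression $\Kd$ and does not depend on the ambient expression. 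Granting it, for a shared term $\Kd\in\DerTerF\cap\DerTerG$ the two rows are equal as formal sums; in particular every target carrying a non-zero weight already lies in $\DerTerF\cap\DerTerG$, so there is no leakage towards a term present in only one summand, and the weights on common targets match. After amalgamation of columns by $\xphi$ the two rows of $E\matmul\xphi$ therefore coincide.

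The main obstacle is exactly the proof of this intrinsicness, and it is where I expect the work to lie. I would establish it by tracking, at each step of \secti{ind-con-sta}, how the relabelling of derived terms ($\Kd\mapsto\Kd\,k$, $\Kd\mapsto\Kd\matmul\Gd$, $\Kd\mapsto\Kd\matmul\Fd^{*}$) transports the transition rows, and checking that each relabelling is compatible with the corresponding transformation of the core transition matrix in \equnm{sta-aut-rml}, \equnm{sta-aut-pro} and \equnm{sta-aut-sta}: right multiplication by $k$ leaves the core matrix unchanged while renaming targets $\Ld\mapsto\Ld\,k$, the product sends the $\DerTerF$-rows to rows over $\DerTerF\matmul\Gd\cup\DerTerG$ via $\Ld\mapsto\Ld\matmul\Gd$ together with the coupling term $U\matmul K$, and the star produces the rows of $H=U\matmul c^{*}J+F$ renamed by $\Ld\mapsto\Ld\matmul\Fd^{*}$. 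The delicate point throughout is that the row produced for a given expression is independent of the derivation history, so that whenever a single expression reappears as a derived term of two different subexpressions it is assigned the same outgoing data. Once both invariants are in place, the two criterion checks close and $\varphi$ is a morphism of automata.
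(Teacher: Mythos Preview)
Your approach is essentially the paper's: the two invariants you isolate are exactly Claims~2 and~3 of \secti{pro-ofs}, and your ``intrinsic formal sum'' of outgoing weighted labels is what the paper makes precise by defining the vector~$\Nitl{\Kd}$ explicitly by induction on the expression~$\Kd$ (Equations~\equnm{ntl-0-1}--\equnm{ntl-prd}). The only real difference is packaging: by naming~$\Nitl{\cdot}$ up front and carrying Claim~1 ($J=\NitlE$) alongside, the paper turns your ``delicate point'' into a routine bookkeeping check at each inductive step rather than an \textit{a posteriori} verification of history-independence.
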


\noindent
And we define
\begin{equation}
  \Term{\Fd+\Gd}=\varphi(\TermF+\TermG)
\eqpnt
\notag
\end{equation}

\pointn
$\msp\TermF\matmul\TermG\msp$
is 
a standard automaton of dimension
$\msp\DerTerF\sqcup\DerTerG\msp$ in bijection with
$\msp\DerTerF\matmul\Gd\sqcup\DerTerG\msp$.
Let~$\psi$ be the `natural' map
\begin{equation}
\psi \colon \DerTerF\matmul\Gd\sqcup\DerTerG \rightarrow
              \DerTerF\matmul\Gd\cup\DerTerG
\eqvrg
\notag
\end{equation}
that is, $\varphi$ maps two terms~$\Kd$ and~$\Kd'$ 
of~$\DerTerF\matmul\Gd\sqcup\DerTerG$ onto one if they are \emph{equal}, hence 
$\msp\Kd\in\DerTerF\matmul\Gd\msp$,
$\msp\Kd'\in\DerTerG\msp$ and 
$\msp\Kd=\Kd'\msp$, or, to state it otherwise,
$\msp\Kd\in\DerTerF\matmul\Gd\cap\DerTerG\msp$.

\begin{proposition}
\label{p.pro-trm}%
The map~$\psi$ is a morphism of automata.
\end{proposition}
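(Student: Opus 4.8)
The plan is to check that $\psi$ meets the intuitive morphism criterion stated just after \propo{out-mor}: with $E$ and $T$ the transition matrix and the final vector of $\TermF\matmul\TermG$ and $\xpsi$ the amalgamation matrix of $\psi$, the map $\psi$ is a morphism as soon as, for any two core states identified by $\psi$, the corresponding rows of $E\matmul\xpsi$ coincide and the corresponding entries of $T$ coincide. Writing the core data of $\TermF$ as $(J,F,c,U)$ and that of $\TermG$ as $(K,G,d,V)$ (with the notation of \figur{sta-aut}), \equnm{sta-aut-pro} gives
\[
E=\begin{pmatrix}0 & J & c\xmd K\\ 0 & F & U\matmul K\\ 0 & 0 & G\end{pmatrix},
\qquad
T=\begin{pmatrix}c\xmd d\\ U\xmd d\\ V\end{pmatrix},
\]
the blocks being indexed by the initial state, by $\DerTerF$ (whose state $\Kd$ carries the label $\Kd\matmul\Gd$), and by $\DerTerG$. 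The map $\psi$ fixes the initial state and merges a state $\Kd$ of the $\DerTerF$-block with a state $\Kd'$ of the $\DerTerG$-block exactly when $\Kd\matmul\Gd=\Kd'$ as expressions; no two states inside the same block are merged.

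The next step is a structural reduction. I would first show, by induction on $\Gd$ and inspection of \defin{der-ter}, that an element of $\DerTerG$ is a product whose right factor is $\Gd$ itself only in case \equnm{dtr-str}, that is, only when $\Gd=\Gd_0^{*}$ is a star. Consequently, if $\Gd$ is not a star, then $\DerTerF\matmul\Gd$ and $\DerTerG$ are disjoint, $\psi$ is a bijection, and there is nothing to prove. So I would reduce to $\Gd=\Gd_0^{*}$. In that case $\TermG=\left(\Term{\Gd_0}\right)^{*}$ by construction; writing $(K_0,G_0,d_0,V_0)$ for the core data of $\Term{\Gd_0}$ (indexed by $\DerTer{\Gd_0}$), \equnm{sta-aut-sta} gives $K=d_0^{*}\xmd K_0$, $G=V_0\matmul d_0^{*}\xmd K_0+G_0$, $d=d_0^{*}$ and $V=V_0\xmd d_0^{*}$, while \equnm{dtr-str} relabels the state $\Ld\in\DerTer{\Gd_0}$ as $\Ld\matmul\Gd$. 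Here $\Kd\matmul\Gd=\Ld\matmul\Gd$ forces $\Kd=\Ld$, so the merged pairs are exactly the $\Kd$ lying in $\DerTerF\cap\DerTer{\Gd_0}$, each merged with its namesake in the $\DerTerG$-block.

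It then remains to evaluate the criterion on such a merged pair. Feeding the block forms of $E$ and $T$ into $E\matmul\xpsi$ and sorting the target states according to whether their label lies in $\DerTerF\setminus\DerTer{\Gd_0}$, in $\DerTer{\Gd_0}\setminus\DerTerF$, or in the intersection, the two rows (one read off $F$ and $U\matmul K$, the other off $G=V_0\matmul d_0^{*}\xmd K_0+G_0$) are seen to agree precisely when, for every $\Kd\in\DerTerF\cap\DerTer{\Gd_0}$, the outgoing transitions of the state $\Kd$ are the same in $\TermF$ as in $\Term{\Gd_0}$; the common $d_0^{*}\xmd K_0$ contributions cancel out of the comparison. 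Likewise, the two final entries $U(\Kd)\xmd d_0^{*}$ and $V_0(\Kd)\xmd d_0^{*}$ agree precisely when the final weight of $\Kd$ is the same in $\TermF$ as in $\Term{\Gd_0}$.

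The crux is therefore a coherence property, and it is here that the non-disjointness already flagged after \lemme{dtr-car} must be mastered: in the standard derived-term automaton of \emph{any} expression, the final weight and the family of outgoing transitions attached to a state depend only on the expression labelling that state, not on the ambient expression. I would establish this as an invariant carried through a single induction on the formation of expressions, run jointly with \propo{add-trm} and \propo{pro-trm}. The final-weight half is the routine check that each of the operations \equnm{sta-aut-lml}--\equnm{sta-aut-sta} preserves the identity ``the final weight of $\Kd$ equals $\TermCnst{\Kd}$'', whence in particular $U(\Kd)=\TermCnst{\Kd}=V_0(\Kd)$ for a shared $\Kd$. The transition half --- that the outgoing transitions are intrinsic to the labelling expression --- is the genuine difficulty; granting it for $\Fd$ and $\Gd_0$, which are strictly smaller than $\Ed=\Fd\matmul\Gd_0^{*}$, the computation above closes and yields \propo{pro-trm}.
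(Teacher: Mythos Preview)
Your outline is correct and converges on the same invariant the paper uses, but the route is more roundabout than necessary. Two remarks.

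First, the reduction to $\Gd=\Gd_0^{*}$ is valid --- your syntactic observation that the right child of a top-level product in any element of $\DerTerG$ is a subexpression of $\Gd$, equal to $\Gd$ itself only in the star case, is correct --- but it is not needed. The paper handles arbitrary $\Gd$ in one stroke: for a merged state $\Kd=\Hd\matmul\Gd$ with $\Hd\in\DerTerF$ and $\Kd\in\DerTerG$, \clmcl{row-mat} gives the $\DerTerG$-row of $\Kd$ as $\Nitl{\Kd}$, and the product rule \equnm{ntl-prd} expands this as $\Nitl{\Hd}+\TermCnst{\Hd}\xmd\Nitl{\Gd}$. The two summands are then recognised, via Claims~\ref{cl.int-vec}--\ref{cl.row-mat}, as the $F$-row of $\Hd$ and the row $U_{\Hd}\matmul K$ respectively. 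After amalgamation the two rows match directly; no unfolding of the star is required, and the computation is two lines instead of a three-way case split on $\DerTerF\setminus\DerTer{\Gd_0}$, $\DerTer{\Gd_0}\setminus\DerTerF$, and the intersection.

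Second, the ``coherence property'' you isolate is exactly the content of the paper's Claims~\ref{cl.trm-cst} and~\ref{cl.row-mat}, and the joint induction you propose is precisely how the paper proceeds. What you are missing, and what makes the induction go through cleanly, is a \emph{concrete name} for the intrinsic transition row: the paper introduces the vector $\Nitl{\Kd}$ with an explicit inductive definition \equnm{ntl-0-1}--\equnm{ntl-prd}, so that ``the outgoing transitions depend only on $\Kd$'' becomes the checkable identity $\Matr{\Ed}_{\Kd,\cdot}=\Nitl{\Kd}$. Without such a witness, your induction step is hard to formulate (what exactly is being preserved?); with it, each operator is a short verification. Your star-unfolding lets you sidestep naming $\Nitl{\cdot}$ at this one place by borrowing the rows from $\Term{\Gd_0}$, but you will still need an explicit invariant to push the coherence through the remaining operators --- so you gain nothing by the detour and might as well introduce $\Nitl{\cdot}$ from the outset and drop the case analysis on $\Gd$.
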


\noindent
And we define
\begin{equation}
  \Term{\Fd\matmul\Gd}=\psi(\TermF\matmul\TermG)
\eqpnt
\notag
\end{equation}

This ends the inductive definition of~$\TermE$.
Modulo the proof of Propositions~\ref{p.add-trm} and~\ref{p.pro-trm}
which is given below, this definition directly implies the following
key statement of the paper, by induction on the formation of the
expression~$\Ed$ and as a consequence of 
Propositions~\ref{p.sta-aut-exp}, \ref{p.mor-sta},
and~\ref{p.cnj-equ}.

\begin{theorem}
\label{t.trm-aut-exp}%
For every valid $\K$-rational expression~$\Ed$, the
standard $\K$-automaton~$\TermE$ realises the 
series denoted by~$\Ed$ and is a quotient of~$\StanE$.
\EoP
\end{theorem}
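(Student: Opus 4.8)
The plan is to prove the two assertions by a single induction on the formation of~$\Ed$, taking as the governing invariant the stronger, structural claim that there is a morphism $\msp\StanE\rightarrow\TermE\msp$, \ie that~$\TermE$ is a quotient of~$\StanE$. Once this is in hand, the behavioural statement costs nothing: a quotient is equivalent to the automaton it comes from (a consequence of \propo{cnj-equ}), so $\msp\CompAuto{\TermE}=\CompAuto{\StanE}\msp$, and $\msp\CompAuto{\StanE}=\CompAuto{\Ed}\msp$ by \propo{sta-aut-exp}; hence $\msp\CompAuto{\TermE}=\CompAuto{\Ed}\msp$. Thus the whole content reduces to producing, inductively, a morphism from the standard automaton onto the standard derived-term automaton.

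For the base cases $\msp\Ed=\zed,\und,m\msp$ the two automata coincide, $\msp\TermE=\StanE\msp$, and the identity is the desired morphism. For the dimension invariant operations, the inductive morphism $\msp\StanF\rightarrow\TermF\msp$ transports to $\msp\StanE\rightarrow\TermE\msp$: for the star this is exactly the star clause of \propo{mor-sta}, and for the two exterior multiplications the same transfer matrix still realises the conjugacy, since multiplying the initial vector, the constant term, or the final vector by~$k$ on the appropriate side preserves each of the three conjugacy equations.

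The two genuinely combined cases are the sum and the product, and both run along the same lines. Suppose $\msp\Ed=\Fd+\Gd\msp$. By definition $\msp\StanE=\StanF+\StanG\msp$, and the inductive morphisms $\msp\StanF\rightarrow\TermF\msp$, $\msp\StanG\rightarrow\TermG\msp$ yield, by \propo{mor-sta}, a morphism $\msp\StanF+\StanG\rightarrow\TermF+\TermG\msp$. On the other hand $\msp\TermE=\varphi(\TermF+\TermG)\msp$ with~$\varphi$ a morphism by \propo{add-trm}; composing the two (the composite of morphisms is again a morphism) produces the required $\msp\StanE\rightarrow\TermE\msp$. The product is identical word for word, with $\msp\StanE=\StanF\matmul\StanG\msp$, the product clause of \propo{mor-sta}, and the morphism~$\psi$ of \propo{pro-trm} in place of~$\varphi$. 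This closes the induction.

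The induction itself is therefore purely formal bookkeeping; all the substance is pushed into \propo{add-trm} and \propo{pro-trm}, which assert that the two `identify equal derived terms' maps~$\varphi$ and~$\psi$ are morphisms. I expect that to be the only real obstacle: one must verify the conjugacy equations for the amalgamation matrices of~$\varphi$ and~$\psi$, and the point is delicate precisely because (as stressed after \lemme{dtr-car}) the unions defining $\msp\DerTer{\Fd+\Gd}\msp$ and $\msp\DerTer{\Fd\matmul\Gd}\msp$ need not be disjoint, so these maps genuinely collapse states rather than merely relabel them. Since those two propositions are stated before the theorem, they may be invoked here as black boxes, and the argument above is complete modulo their proofs.
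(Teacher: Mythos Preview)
Your proof is correct and follows essentially the same route as the paper, which likewise derives the theorem by induction on~$\Ed$ from Propositions~\ref{p.sta-aut-exp}, \ref{p.mor-sta}, \ref{p.cnj-equ}, \ref{p.add-trm} and~\ref{p.pro-trm}, treating the last two as black boxes to be discharged separately. Your only addition is to spell out that the exterior multiplications~$k\xmd\Ac$ and~$\Ac\xmd k$ preserve morphisms (a case not literally covered by \propo{mor-sta} as stated), which is indeed immediate from the conjugacy equations.
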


\setcounter{theor-tmp}{\value{theorem}}
\setcounter{theorem}{\value{exa-exp}}
\begin{example}[continued]
Let 
$\msp\Fd_{\idex}=a^*\msp$,
$\msp\Gd_{\idex}=(a^* + (-1)b^*)^*\msp$ and
$\msp\Ed_{\idex}=\Fd_{\idex}\matmul\Gd_{\idex}\msp$.
We have seen that 
$\msp\DerTer{\Fd_{\idex}}=\{a^*\}\msp$,
$\msp\DerTer{\Gd_{\idex}}=\{a^*\matmul\Gd_{\idex},b^*\matmul\Gd_{\idex}\}\msp$
and $\msp\DerTer{\Ed_{\idex}}=\DerTer{\Gd_{\idex}}\msp$.

It then comes:
$\msp\Term{\Fd_{\idex}}=\Stan{\Fd_{\idex}}\msp$,
$\msp\Term{\Gd_{\idex}}=\Stan{\Gd_{\idex}}\msp$ and
\begin{equation}
\Term{\Fd_{\idex}}\matmul\Term{\Gd_{\idex}}=
\aut{%
\begin{pmatrix} 1 & 0 & 0 & 0\end{pmatrix},%
\begin{pmatrix}%
 0 & a & a & -b\\%
 0 & a & a & -b\\%
 0 & 0 & 2a & -b\\%
 0 & 0 & a & 0%
 \end{pmatrix},%
\begin{pmatrix} 1\\ 1 \\ 1 \\ 1\end{pmatrix}}
\eqpnt
\notag
\end{equation}

The derived term of~$\Fd_{\idex}$, $\msp a^{*}\msp$, multiplied 
by~$\Gd_{\idex}$, is equal to the first derived term 
of~$\Gd_{\idex}$, $\msp a^{*}\matmul\Gd_{\idex}\msp$.
They index respectively the second and third rows and columns of the 
matrix above.
If we add the second and third columns, we get a matrix whose second 
and third rows are equal, and the second and third entries of the 
final vector are also equal (instance of \propo{pro-trm}).
These two states may then be merged to build the quotient and we get
\begin{equation}
\Term{\Ed_{\idex}}=
\aut{\lignet{1}{0}{0},
     \matricett{0}{2a}{-b}{0}{2a}{-b}{0}{a}{0},
	 \vecteurt{1}{1}{1}}
\eqpnt
\notag
\end{equation}
\end{example}
\setcounter{theorem}{\value{theor-tmp}}

\subsection{Proof of Propositions~\ref{p.add-trm} and~\ref{p.pro-trm}}
\label{s.pro-ofs}

The construction of~$\TermE$ starts with the same automata 
as~$\StanE$ for the base cases.
At every step, it uses an operation on standard automata, and 
possibly a morphism.
Let us be more precise in the definition and notation for the 
standard derived-term automaton.

\subsubsection{Definitions and notation}

Let~$\Fd$ be a $\K$-expression over~$M$.
As we have seen, the standard automaton~$\TermF$ has 
dimension~$\DerTerF$ and we write:
\begin{equation}
\TermF = \StanBloc{J}{F}{x}{U}
\eqpnt 
\label{q.trm-F}
\end{equation}

The mere equation~\equnm{trm-F} implies that the scalar~$x$, the 
vectors~$J$ and~$U$ of dimension~$\DerTerF$, as well as the matrix~$F$ 
of dimension~$\DerTerF\x\DerTerF$, are also \emph{associated 
with}~$\Fd$ even though it does appear explicitely in the writing.
When we need to make it more explicit, we write
\begin{equation}
J = \Init{\Fd}\EqVrgInt
F = \Matr{\Fd}\EqVrgInt \e\text{and}\e
U = \Ermi{\Fd}
\eqpnt
\label{q.trm-F-exp}
\end{equation}

By~\equnm{sta-aut-beh} and \propo{cst-trm}, the scalar~$x$ is the 
constant term of~$\Fd$.
By convention, we consider that the vectors~$\Init{\Fd}$
and~$\Ermi{\Fd}$ of dimension~$\DerTerF$ are also of
dimension~$\mathrm{D}$, for any finite
$\msp \mathrm{D}\subset\KRatEM\msp$
that contains~$\DerTerF$, or that~$\Matr{\Fd}$ is a matrix of 
dimension~$\mathrm{D}\x\mathrm{D}$, the `missing' entries being set 
to~$\zeK$.

\subsubsection{The running claims and the preparatory lemmas}

In order to be able to establish Propositions~\ref{p.add-trm} 
and~\ref{p.pro-trm}, that is, to settle the cases of addition and 
product operators, we have to maintain properties, `the claims', all 
along the inductive process, hence for \emph{all} operators.
To this end, we also introduce another function~$\NitlF$, which is 
a vector of dimension~$\DerTerF$ defined inductively as follow.

\medskip

{\allowdisplaybreaks
\noindent
\textbf{Base cases}
\begin{alignat}{3}
\pointn&\e&
\Ed &= \zed \e\text{or}\e \Ed = \und &\ee
\NitlE &=\emptyset \ee \text{vector of dimension~$0$.}
\hskip6.1em
\label{q.ntl-0-1}
\\
\pointn& &
\Ed &= m \e m\in M\bk\unM\e  &
\NitlE  &= (m) \eqpnt
\label{q.ntl-m}
\end{alignat}
\textbf{Dimension invariant operators}
\begin{alignat}{3}
\pointn& &
\Ed &= k\xmd \Fd &\eee
\NitlE &= k\xmd \NitlF \eqpnt
\hskip16.7em
\label{q.ntl-k-lft}
\\
\pointn& &
\Ed &= \Fd\xmd k &
\NitlE &= \NitlF \eqvrg
\label{q.ntl-k-rgt}
\\
 & & 
&\text{\textit{more precisely}} &
\NitlE_{\Kd\xmd k} &= \NitlF_{\Kd}\quantsp
\forall \Kd\in\DerTerF \eqpnt
\notag
\\[1ex]
\pointn&\e&
\Ed &= \Fd^{*} &
\NitlE &= (\TermCnst{\Fd})^{*}\xmd\NitlF \eqvrg
\label{q.ntl-str}
\\
 & & 
&\text{\textit{more precisely}} &
\NitlE_{\Kd\xmd \Fd^{*}} &= (\TermCnst{\Fd})^{*}\xmd\NitlF_{\Kd}\quantsp
\forall \Kd\in\DerTerF \eqpnt
\notag
\end{alignat}
\textbf{Addition and product}
\begin{alignat}{3}
\pointn&\e&
\Ed &= \Fd + \Gd &\ee
\NitlE &= \NitlF + \NitlG \eqvrg
\hskip17em
\label{q.ntl-sum}
\\
 & & 
&\eee \text{\textit{i.e.}} &
\NitlE_{\Kd} &= \NitlF_{\Kd} + \NitlG_{\Kd}\quantsp
\forall \Kd\in\DerTerF\cup\DerTerG \eqpnt
\notag
\\[3ex]
\pointn& &
\Ed &= \Fd \matmul \Gd &
\NitlE &= \NitlF + \TermCnst{\Fd}\xmd\NitlG \eqvrg
\label{q.ntl-prd}
\\
 & &
&\eee\text{\textit{i.e.}}& 
\NitlE_{\Kd} &= \NitlF_{\Kd} + \TermCnst{\Fd}\xmd\NitlG_{\Kd}\quantsp
\forall \Kd\in\DerTerF\matmul\Gd\cup\DerTerG \eqpnt
\notag
\end{alignat}
}

The construction of~$\TermE$ goes with the verification, at every 
step of the induction, of the following properties.

\begin{claim}
\label{cl.int-vec}
$\msp J = \Init{\Ed} = \NitlE\msp$, that is, 
$\msp \forall \Kd\in\DerTerE\quantsmsp J_{\Kd} = \NitlE_{\Kd}\msp$.
\end{claim}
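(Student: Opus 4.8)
The plan is to argue by induction on the formation of $\Ed$, in lockstep with both the inductive construction of the standard derived-term automaton $\TermE$ and the inductive definition of the vector $\NitlE$. At each node of the syntactic tree I read the core initial vector $J = \Init{\Ed}$ off the operation used to build $\TermE$ and match it against the matching clause defining $\NitlE$; the corner entry $1$ of the standard automaton plays no role here and is left untouched throughout.

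The base cases are immediate: for $\Ed = \zed$ and $\Ed = \und$ both $\Init{\Ed}$ and $\NitlE$ are the empty vector of dimension $0$, and for $\Ed = m$ the automaton $\Stan{m}$ has core initial vector $(m)$, which is exactly $\NitlE = (m)$ from \equnm{ntl-m}. For the three dimension-invariant operators I invoke the closed forms \equnm{sta-aut-lml}, \equnm{sta-aut-rml} and \equnm{sta-aut-sta}: they show that passing from $\TermF$ to $\Term{k\xmd\Fd}$, to $\Term{\Fd\xmd k}$ and to $\Term{\Fd^{*}}$ turns the core initial vector $J$ into $k\xmd J$, into $J$ (merely relabelled along $\Kd\mapsto\Kd\xmd k$) and into $c^{*}\xmd J$ respectively. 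Using the induction hypothesis $\Init{\Fd}=\NitlF$, and recalling from \equnm{sta-aut-beh} and \propo{cst-trm} that the scalar $c$ of $\TermF$ equals $\TermCnst{\Fd}$, these coincide entry by entry with \equnm{ntl-k-lft}, \equnm{ntl-k-rgt} and \equnm{ntl-str}.

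The only genuine work lies in the two operators that are not dimension-invariant. By \equnm{sta-aut-sum} the core initial vector of $\TermF+\TermG$ is the concatenation $(\Init{\Fd},\Init{\Gd})$ over $\DerTerF\sqcup\DerTerG$, and by \equnm{sta-aut-pro} the core initial vector of $\TermF\matmul\TermG$ is $(\Init{\Fd},\,c\xmd\Init{\Gd})$ with $c=\TermCnst{\Fd}$. Forming $\Term{\Fd+\Gd}$ and $\Term{\Fd\matmul\Gd}$ amounts to right-multiplying these row vectors by the amalgamation matrices $\xphi$ and $\xpsi$; since each column of an amalgamation matrix selects exactly the indices of one class of the underlying map, this right multiplication sums the vector entries over each such class. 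With the induction hypotheses $\Init{\Fd}=\NitlF$ and $\Init{\Gd}=\NitlG$, the $\Kd$-entry of the outcome is therefore $\NitlF_{\Kd}+\NitlG_{\Kd}$ in the additive case and $\NitlF_{\Kd}+\TermCnst{\Fd}\xmd\NitlG_{\Kd}$ in the multiplicative case, which are precisely \equnm{ntl-sum} and \equnm{ntl-prd}.

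I expect the only real obstacle to be the index bookkeeping in these last two cases: one must keep straight the relabelling $\Kd\mapsto\Kd\matmul\Gd$ sending $\DerTerF$ into $\DerTerF\matmul\Gd$ in the product, together with the convention that a vector indexed by $\DerTerF$ or $\DerTerG$ is silently extended by $\zeK$ to the index set $\DerTerE$, so that the fibre sums land on the correct entries. I would also stress that this argument uses only the description of $\varphi(\Ac)$ through its amalgamation matrix (\propo{out-mor}) and not the as-yet-unproved fact that $\varphi$ and $\psi$ are morphisms; hence establishing \clmcl{int-vec} is not circular with \propo{add-trm} and \propo{pro-trm}, and it is available for use in their proofs.
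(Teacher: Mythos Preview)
Your proposal is correct and follows essentially the same route as the paper: an induction on the formation of~$\Ed$ that reads off the core initial vector from the closed forms \equnm{sta-aut-lml}--\equnm{sta-aut-sta} and \equnm{sta-aut-sum}--\equnm{sta-aut-pro}, then matches it against the clauses \equnm{ntl-0-1}--\equnm{ntl-prd}, with the sum and product cases handled via the amalgamation matrices of~$\varphi$ and~$\psi$. Your explicit remark that the initial vector of~$\varphi(\Ac)$ is $I\matmul\xphi$ and hence independent of the selection matrix~$\yphi$---so that \clmcl{int-vec} does not presuppose \propo{add-trm} or \propo{pro-trm}---is a useful clarification; the paper leaves this implicit by simply running Claims~1--3 and the two propositions as a single simultaneous induction.
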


\begin{claim}
\label{cl.trm-cst}
$\msp U = \Ermi{\Ed} = \TermCnst{\DerTerE}\msp$, that is, 
$\msp \forall \Kd\in\DerTerE\quantsmsp U_{\Kd} = \TermCnst{\Kd}\msp$.
\end{claim}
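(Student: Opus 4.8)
The plan is to argue by induction on the formation of~$\Ed$, run simultaneously with \clmcl{int-vec}, tracking at each node the terminal vector $U=\Ermi{\Ed}$ that the relevant operation on standard automata (\secti{ope-sta-aut}) attaches to~$\TermE$, and comparing its entries---indexed by the derived terms of $\DerTerE$ fixed in \defin{der-ter}---with the constant terms supplied by \defin{con-ter-exp}. For the base cases $\Ed=\zed$ and $\Ed=\und$ the set $\DerTerE$ is empty, so $U$ is the empty vector and there is nothing to prove; for $\Ed=m$ with $m\in\MPos$ one has $\DerTer{m}=\{\und\}$, the core terminal vector of $\Stan{m}$ is $(1)$, and $\TermCnst{\und}=1$, so $U_{\und}=\TermCnst{\und}$.

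For the dimension invariant operations $\DerTerE$ is in bijection with $\DerTerF$ and $U$ is read off directly from the formulas \equnm{sta-aut-lml}, \equnm{sta-aut-rml} and \equnm{sta-aut-sta}. When $\Ed=k\xmd\Fd$ the vector $U$, the derived terms, and their constant terms are all unchanged, so the induction hypothesis gives the claim verbatim. When $\Ed=\Fd\xmd k$ the terminal vector becomes $U\xmd k$ while the term $\Kd$ is relabelled $\Kd\xmd k$; combining $U_{\Kd}=\TermCnst{\Kd}$ with $\TermCnst{\Kd\xmd k}=\TermCnst{\Kd}\xmd k$ (from \defin{con-ter-exp}) yields $(U\xmd k)_{\Kd\xmd k}=\TermCnst{\Kd\xmd k}$. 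When $\Ed=\Fd^{*}$ the terminal vector becomes $U\xmd c^{*}$ with $c=\TermCnst{\Fd}$ and $\Kd$ is relabelled $\Kd\matmul\Fd^{*}$; here $\TermCnst{\Kd\matmul\Fd^{*}}=\TermCnst{\Kd}\xmd(\TermCnst{\Fd})^{*}$ closes the case.

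The substantive cases are the two where a morphism is applied. For $\Ed=\Fd+\Gd$, \equnm{sta-aut-sum} shows that the terminal vector of $\TermF+\TermG$, carried by the disjoint union $\DerTerF\sqcup\DerTerG$, is the concatenation of $\Ermi{\Fd}$ and $\Ermi{\Gd}$, so by induction each entry already equals the constant term of its derived term. The map $\varphi$ identifies a term $\Kd\in\DerTerF$ with an equal term $\Kd'\in\DerTerG$, and since $\Kd=\Kd'$ forces $\TermCnst{\Kd}=\TermCnst{\Kd'}$ the two identified terminal entries coincide; hence the terminal vector descends through $\varphi$ with value $\TermCnst{\Kd}$ on the merged state, which is at the same time the terminal condition needed for $\varphi$ to be a morphism in \propo{add-trm}. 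The product is identical in spirit: by \equnm{sta-aut-pro} the terminal vector of $\TermF\matmul\TermG$ is the concatenation of $\Ermi{\Fd}\xmd d$, with $d=\TermCnst{\Gd}$, and $\Ermi{\Gd}$, relabelled along $\DerTerF\matmul\Gd\sqcup\DerTerG$; the identity $\TermCnst{\Kd\matmul\Gd}=\TermCnst{\Kd}\xmd\TermCnst{\Gd}$ turns the hypothesis into $(\Ermi{\Fd}\xmd d)_{\Kd\matmul\Gd}=\TermCnst{\Kd\matmul\Gd}$, and when $\psi$ identifies $\Kd\matmul\Gd$ with an equal $\Kd'\in\DerTerG$ the entries again agree, giving both the claim for $\Fd\matmul\Gd$ and the terminal condition of \propo{pro-trm}.

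The hard part will be exactly this well-definedness in the addition and product steps: the entry of $U$ on a merged state only makes sense once one knows the terminal values of the states being identified agree, and this agreement is the very content of the claim rather than a formal consequence of the automaton operations. The induction must therefore be organised so that $U_{\Kd}=\TermCnst{\Kd}$ is in hand for the immediate subexpressions before the morphism is applied, and one must use that equal expressions have equal constant terms. Everything else is the routine bookkeeping of reading $U$ off the matrix formulas of \secti{ope-sta-aut} and of applying the inductive clauses of \defin{con-ter-exp}.
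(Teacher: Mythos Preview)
Your proof is correct and follows essentially the same route as the paper: an induction on the formation of~$\Ed$ that checks the terminal vector against the constant terms of the derived terms case by case, with the addition and product cases hinging on the observation that equal expressions have equal constant terms, so the entries of identified states coincide---which is simultaneously the terminal-vector half of the morphism condition in Propositions~\ref{p.add-trm} and~\ref{p.pro-trm}. The only cosmetic difference is that the paper runs Claims~\ref{cl.int-vec}, \ref{cl.trm-cst}, and~\ref{cl.row-mat} and the two morphism propositions as a single intertwined induction (since well-definedness of~$\TermE$ at the sum and product nodes requires the full morphism property, hence also \clmcl{row-mat}), whereas you isolate \clmcl{trm-cst}; your parenthetical remark that the induction ``must be organised'' so the hypothesis is available before the morphism is applied is exactly this point, and your mention of running ``simultaneously with \clmcl{int-vec}'' is harmless but not actually needed for this claim.
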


\begin{claim}
\label{cl.row-mat}
For any~$\Kd$ in~$\DerTerE$, the \emph{row} of index~$\Kd$ 
of~$F=\Matr{\Ed}$ is equal to~$\Nitl{\Kd}$, that is,
\begin{equation}
\forall \Kd,\Hd \in \DerTerE \quantsp
\Matr{\Ed}_{\Kd,\Hd} = \Nitl{K}_{\Hd}
\eqpnt 
\label{q.cla-3}
\end{equation}
\end{claim}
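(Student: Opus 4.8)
The plan is to prove Claims~\ref{cl.int-vec}, \ref{cl.trm-cst} and~\ref{cl.row-mat} \emph{simultaneously}, by a single induction on the formation of~$\Ed$, carrying along one extra invariant: the \emph{closure} property that $\DerTer{\Kd}\subseteq\DerTerE$ for every $\Kd\in\DerTerE$. This invariant is what makes the right-hand side of~\equnm{cla-3} meaningful, as it guarantees that each $\Nitl{\Kd}$ already lives in dimension~$\DerTerE$ under the zero-extension convention, so that no truncation is involved. The three clauses are inseparable: Claim~\ref{cl.int-vec} reads the initial row $\Init{\Ed}$ as $\Nitl{\Ed}$, Claim~\ref{cl.row-mat} reads the row of a core state~$\Kd$ as $\Nitl{\Kd}$, and Claim~\ref{cl.trm-cst} reads the finality of~$\Kd$ as $\TermCnst{\Kd}$; uniformly, the transitions out of, and the finality of, a state labelled by an expression~$\Hd$ are governed by $\Nitl{\Hd}$ and $\TermCnst{\Hd}$ alone. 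Finally, the induction has to be run \emph{together} with Propositions~\ref{p.add-trm} and~\ref{p.pro-trm}, since $\Term{\Fd+\Gd}$ and $\Term{\Fd\matmul\Gd}$ are only defined once the merging maps~$\varphi$ and~$\psi$ are known to be morphisms, and that knowledge is exactly what the claims at the subexpressions supply.

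First I would clear the base cases and the dimension-invariant operators. For $\Ed=\zed$ and $\Ed=\und$ the core is empty, so all three claims and closure hold vacuously; for $\Ed=m$ one reads off from~$\Stan{m}$ that $\Init{m}=(m)=\Nitl{m}$, that the terminal entry equals $\TermCnst{\und}=1$, and that $\Matr{m}=(0)=\Nitl{\und}$. For $k\xmd\Fd$, $\Fd\xmd k$ and $\Fd^{*}$ the operations~\equnm{sta-aut-lml}, \equnm{sta-aut-rml} and~\equnm{sta-aut-sta} merely rescale the scalar data and relabel the states by the bijections $\Kd\mapsto\Kd\xmd k$ and $\Kd\mapsto\Kd\matmul\Fd^{*}$; matching these against the clauses~\equnm{ntl-k-lft}, \equnm{ntl-k-rgt} and~\equnm{ntl-str} for~$\Ntl$, and invoking the rules for the constant term of a product and of a star from Definition~\ref{d.con-ter-exp}, gives the three claims by transport along the relabelling. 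The one substantial point here is the matrix in the star case: by~\equnm{sta-aut-sta} its core is $H=U\matmul c^{*}\xmd J+F$ with $c=\TermCnst{\Fd}$, so its row of index $\Kd\matmul\Fd^{*}$ is $\Nitl{\Kd}+\TermCnst{\Kd}\,c^{*}\,\Nitl{\Fd}$ by Claims~\ref{cl.int-vec}--\ref{cl.row-mat} for~$\Fd$, and this is precisely $\Nitl{\Kd\matmul\Fd^{*}}$ once the product rule~\equnm{ntl-prd} and the star rule~\equnm{ntl-str} are expanded.

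The heart of the matter is addition and product. The decisive consequence of Claims~\ref{cl.trm-cst} and~\ref{cl.row-mat} at the subexpressions is that the outgoing row and the finality of a state depend \emph{only on the expression labelling it}. Hence, if two states of $\TermF+\TermG$, resp. of $\TermF\matmul\TermG$, carry one and the same expression, their rows and their terminal entries coincide after amalgamation --- which is exactly the criterion behind Proposition~\ref{p.out-mor} for $\varphi$, resp.~$\psi$, to be a morphism; this yields Propositions~\ref{p.add-trm} and~\ref{p.pro-trm}. For the sum the two cores sit side by side and there is nothing further to check. For the product I would compute, from~\equnm{sta-aut-pro} and the induction hypothesis, the amalgamated row of a state $\Kd\matmul\Gd$ issued from the $\Fd$-block: its internal part is $\Matr{\Fd}_{\Kd}=\Nitl{\Kd}$ and its cross-block part, row~$\Kd$ of $U\matmul K$, is $\TermCnst{\Kd}\,\Nitl{\Gd}$ (using $K=\Init{\Gd}=\Nitl{\Gd}$), so the amalgamated row is $\Nitl{\Kd}+\TermCnst{\Kd}\,\Nitl{\Gd}=\Nitl{\Kd\matmul\Gd}$ by the product rule~\equnm{ntl-prd}; a state $\Kd'\in\DerTerG$ has amalgamated row $\Nitl{\Kd'}$ and finality $\TermCnst{\Kd'}$. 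When $\Kd\matmul\Gd=\Kd'$ as expressions, these agree because $\Ntl$ and the constant term depend only on the expression, and the same computation reads off Claim~\ref{cl.row-mat} for $\Ed=\Fd\matmul\Gd$ on the quotient $\psi(\TermF\matmul\TermG)$, Claims~\ref{cl.int-vec} and~\ref{cl.trm-cst} and closure following likewise.

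The step I expect to be the real obstacle is this product case, for two intertwined reasons. First, the bookkeeping of the relabelling $\Kd\mapsto\Kd\matmul\Gd$ together with the zero-extension has to be arranged so that the support of every $\Nitl{\Kd}$ stays inside $\DerTerE$; this is the one place where the closure invariant is genuinely used, and it is what turns~\equnm{cla-3} into an honest equality rather than an equality of truncations. Second, the cross-block $U\matmul K$ of~\equnm{sta-aut-pro} produces, in its row~$\Kd$, exactly the second summand $\TermCnst{\Kd}\,\Nitl{\Gd}$ of the product rule~\equnm{ntl-prd} applied to $\Kd\matmul\Gd$: the very combination of entries that forces two equally-labelled states to share an amalgamated row --- hence makes~$\psi$ a morphism --- is the one that reconstitutes $\Nitl{\Kd\matmul\Gd}$ on the quotient. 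Keeping these two roles of the single block apart, while the relabelling remains consistent, is where the care must go; once that is done, Claim~\ref{cl.row-mat} falls out as one component of the single induction, the addition being the same argument with an empty cross-block.
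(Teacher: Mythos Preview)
Your proposal is correct and follows essentially the same approach as the paper: a simultaneous induction on the formation of~$\Ed$ establishing Claims~\ref{cl.int-vec}, \ref{cl.trm-cst} and~\ref{cl.row-mat} together with Propositions~\ref{p.add-trm} and~\ref{p.pro-trm}, with the same case analysis and the same key computation for the star and product cases. The only cosmetic difference is that the paper factors out the closure property $\DerTer{\Kd}\subseteq\DerTerE$ as a separate preparatory lemma (\lemme{dtr-inc} and its specialisation \lemme{dtr-inc-str}), whereas you carry it as an additional invariant inside the main induction; both packagings work equally well.
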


The idea behind the definition of~$\NitlE$ and the claims is that we
have, at every step of the induction, the knowledge on~$\TermE$
necessary to prove that the maps~$\varphi$ or~$\psi$ are morphisms
when the operators addition or product come into play.
Before getting to the induction itself, we state some preparatory 
lemmas.

\begin{lemma}
\label{l.dtr-inc}
Let~$\Ed$ be a $\K$-expression over~$M$.
If $\msp\Kd\in\DerTerE\msp$, then $\msp\DerTer{\Kd}\subseteq\DerTerE$.
\end{lemma}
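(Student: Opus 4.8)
The plan is to argue by induction on the formation of the expression~$\Ed$, mirroring exactly the clauses of \defin{der-ter}; the induction hypothesis is the lemma itself, applied to every strict subexpression of~$\Ed$. The base cases are immediate: if $\Ed=\zed$ or $\Ed=\und$ then $\DerTerE=\emptyset$ and the implication is vacuous, while if $\Ed=m$ with $m\in\MPos$ then $\DerTerE=\{\und\}$, and the only candidate $\Kd=\und$ satisfies $\DerTer{\und}=\emptyset\subseteq\DerTerE$.

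The dimension-invariant and additive cases follow directly from the definitions. For $\Ed=k\xmd\Fd$ and for $\Ed=\Fd+\Gd$ the set $\DerTerE$ equals $\DerTerF$, respectively $\DerTerF\cup\DerTerG$, so any $\Kd\in\DerTerE$ already belongs to~$\DerTerF$ or to~$\DerTerG$, and the induction hypothesis on that subexpression gives $\DerTer{\Kd}\subseteq\DerTerF\subseteq\DerTerE$ (resp.\ $\subseteq\DerTerG\subseteq\DerTerE$). For the right exterior product $\Ed=\Fd\xmd k$, a derived term has the form $\Kd=\Hd\xmd k$ with $\Hd\in\DerTerF$; unfolding with clause~\equnm{dtr-k-rgt} gives $\DerTer{\Hd\xmd k}=\DerTer{\Hd}\xmd k$, and since the induction hypothesis yields $\DerTer{\Hd}\subseteq\DerTerF$ one obtains $\DerTer{\Kd}\subseteq\DerTerF\xmd k=\DerTerE$.

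The substance of the argument is in the product and the star, where a derived term of~$\Ed$ is itself a compound expression whose own derived terms must be recomputed. For $\Ed=\Fd\matmul\Gd$, take $\Kd\in\DerTerE=\DerTerF\matmul\Gd\cup\DerTerG$. If $\Kd\in\DerTerG$, the induction hypothesis on~$\Gd$ gives $\DerTer{\Kd}\subseteq\DerTerG\subseteq\DerTerE$. Otherwise $\Kd=\Hd\matmul\Gd$ with $\Hd\in\DerTerF$, and applying clause~\equnm{dtr-prd} to the expression $\Hd\matmul\Gd$ gives $\DerTer{\Hd\matmul\Gd}=\DerTer{\Hd}\matmul\Gd\cup\DerTerG$. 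Now $\DerTerG$ is already contained in~$\DerTerE$, while the induction hypothesis on~$\Fd$ gives $\DerTer{\Hd}\subseteq\DerTerF$, whence $\DerTer{\Hd}\matmul\Gd\subseteq\DerTerF\matmul\Gd\subseteq\DerTerE$; the two pieces together yield $\DerTer{\Kd}\subseteq\DerTerE$. The star case $\Ed=\Fd^{*}$ runs in the same way: a derived term is $\Kd=\Hd\matmul\Fd^{*}$ with $\Hd\in\DerTerF$, and clause~\equnm{dtr-prd} gives $\DerTer{\Hd\matmul\Fd^{*}}=\DerTer{\Hd}\matmul\Fd^{*}\cup\DerTer{\Fd^{*}}$, where $\DerTer{\Fd^{*}}=\DerTerE$ itself and $\DerTer{\Hd}\matmul\Fd^{*}\subseteq\DerTerF\matmul\Fd^{*}=\DerTerE$ by the induction hypothesis, so once more $\DerTer{\Kd}\subseteq\DerTerE$.

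The only delicate point --- the main obstacle, such as it is --- lies in these last two cases and is purely a matter of bookkeeping: one must observe that although the derived term $\Kd$ is not a subexpression of~$\Ed$, it is a product in which one factor ($\Gd$, resp.\ $\Fd^{*}$) reproduces precisely the part ($\DerTerG$, resp.\ all of~$\DerTerE$) that already occurs in the defining union, so that unfolding $\DerTer{\Kd}$ creates no genuinely new terms. In particular the induction hypothesis is invoked only on the strict subexpressions $\Fd$ and~$\Gd$, never on the compound~$\Kd$, so there is no circularity.
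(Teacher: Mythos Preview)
Your proof is correct. The paper actually states \lemme{dtr-inc} (and its corollary \lemme{dtr-inc-str}) without proof, treating them as preparatory lemmas whose verification is left to the reader; your structural induction on~$\Ed$, unfolding $\DerTer{\Kd}$ via the same clauses~\equnm{dtr-k-rgt}, \equnm{dtr-prd}, and~\equnm{dtr-str} that define~$\DerTerE$, is exactly the intended argument, and your closing remark about non-circularity correctly identifies the only point that deserves care.
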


\lemme{dtr-inc} will be used under the following form.

\begin{lemma}
\label{l.dtr-inc-str}
Let~$\Fd$ be a $\K$-expression over~$M$.
If $\msp\Hd\in\DerTerF\msp$, then 
$\msp\DerTer{\Hd\matmul\Fd^{*}}\subseteq\DerTer{\Fd^{*}}$.
\end{lemma}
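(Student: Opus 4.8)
The plan is to obtain Lemma~\ref{l.dtr-inc-str} as a direct specialisation of Lemma~\ref{l.dtr-inc}, exactly as the sentence preceding the statement announces. The whole content reduces to recognising that $\Hd\matmul\Fd^{*}$ is itself a derived term of $\Fd^{*}$, so that the general closure property of Lemma~\ref{l.dtr-inc} applies to it.

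First I would invoke the defining equation~\equnm{dtr-str} for the star operator, namely $\DerTer{\Fd^{*}} = \DerTerF\matmul\Fd^{*} = \Defi{\Kd\matmul\Fd^{*}}{\Kd\in\DerTerF}$. The hypothesis of the lemma is that $\Hd\in\DerTerF$, so $\Hd\matmul\Fd^{*}$ occurs among the terms listed on the right-hand side; that is, $\Hd\matmul\Fd^{*}\in\DerTer{\Fd^{*}}$. This is the only verification the proof really requires, and it is immediate from the explicit description of $\DerTer{\Fd^{*}}$.

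Then I would apply Lemma~\ref{l.dtr-inc} with the expression $\Ed$ taken to be $\Fd^{*}$ and the derived term $\Kd$ taken to be $\Hd\matmul\Fd^{*}$. Lemma~\ref{l.dtr-inc} states that any derived term of an expression has its own set of derived terms contained in that of the expression; applied here it gives at once $\DerTer{\Hd\matmul\Fd^{*}}\subseteq\DerTer{\Fd^{*}}$, which is precisely the desired inclusion.

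There is no genuine obstacle at this stage, since all the work has been front-loaded into Lemma~\ref{l.dtr-inc} (whose proof proceeds by induction on the formation of the expression). The present statement is merely the single instance obtained by putting $\Ed=\Fd^{*}$, and the only point needing a moment's care is the distinction between $\Hd\matmul\Fd^{*}$ being \emph{a} derived term of $\Fd^{*}$ and its being an expression merely \emph{built from} a derived term of $\Fd$ --- a distinction settled by reading off~\equnm{dtr-str} directly.
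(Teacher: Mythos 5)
Your proposal is correct and matches exactly what the paper intends: the paper introduces Lemma~\ref{l.dtr-inc-str} with the words ``Lemma~\ref{l.dtr-inc} will be used under the following form'', i.e.\ it is precisely the specialisation of Lemma~\ref{l.dtr-inc} to $\Ed=\Fd^{*}$, using the membership $\Hd\matmul\Fd^{*}\in\DerTer{\Fd^{*}}$ read off from~\equnm{dtr-str}. Your one point of care --- distinguishing membership of $\Hd\matmul\Fd^{*}$ in $\DerTer{\Fd^{*}}$ from its merely being built from a derived term of $\Fd$ --- is exactly the right observation, and the rest follows as you say.
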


\subsubsection{The induction: the base cases}

\pointn
$\msp \Ed = \zed \msp$
then
$\msp 
\begin{displaystyle}
\Term{\zed} = \aut{\redmatu{\matriceuu{1}},
                   \redmatu{\matriceuu{0}},
                   \redmatu{\matriceuu{0}}} 
\end{displaystyle}
\msp$.

\pointn
$\msp \Ed = \zed \msp$
then
$\msp 
\begin{displaystyle}
\Term{\und} = \aut{\redmatu{\matriceuu{1}},
                   \redmatu{\matriceuu{0}},
                   \redmatu{\matriceuu{1}}} 
\end{displaystyle}
\msp$.

In both cases, Claims~1, 2, and~3 are obvious by the emptyness 
of~$\DerTer{\zed}$ and~$\DerTer{\und}$.

\pointn
$\msp \Ed = m \in M \msp$
then
$\msp 
\begin{displaystyle}
\Term{m} = 
\aut{\redmatu{\ligned{1}{0}},
          \redmatu{\matricedd{0}{m}{0}{0}},
          \redmatu{\vecteurd{0}{1}}} 
\end{displaystyle}
\msp$.

\clmcl{int-vec} holds by~\equnm{ntl-m}, \clmcl{trm-cst} 
since~$\DerTer{m}=\und$.
Since~$\DerTer{\und}$ is empty, it follows from our convention 
that~$\Nitl{\und}$ is the null vector of any dimension and we have 
here 
$\msp \Matr{\und}_{\und,\und} = 0 = \Nitl{\und}_{\und}\msp$.

\subsubsection{The induction: the dimension invariant operations}

\pointn
$\msp \Ed = k\xmd\Fd \msp$
then
$\msp 
\begin{displaystyle}
\TermE = k\xmd\TermF = \StanBloc{k\xmd J}{F}{k\xmd x}{U} 
\end{displaystyle}
\msp$.\\

\noindent
\clmcl{int-vec} holds by~\equnm{ntl-k-rgt}.
Since \clmcl{trm-cst} and \clmcl{row-mat} hold for~$\TermF$, they also
hold for~$\TermE$ as~$\DerTerE=\DerTerF$, $\Ermi{\Ed}=\Ermi{\Fd}$,
and~$\Matr{\Ed}=\Matr{\Fd}$.\\

\pointn
$\msp \Ed = \Fd\xmd k \msp$
then
$\msp 
\begin{displaystyle}
\TermE = \TermF\xmd k = \StanBloc{J}{F}{x\xmd k}{U\xmd k} 
\end{displaystyle}
\msp$.\\

\noindent
\clmcl{int-vec} holds by~\equnm{ntl-k-lft}.
\clmcl{trm-cst} follows from:
\begin{equation}
\forall \Kd\in\DerTerE \quantsmsp
\Kd=\Hd\xmd k 
\e\text{with}\e
\Hd\in\DerTerF \quantsp
\Ermi{\Ed}_{K} = \Ermi{\Fd}_{H}\xmd k 
               = \TermCnst{\Hd}\xmd k = \TermCnst{\Kd}
\eqpnt
\notag
\end{equation}
\clmcl{row-mat} follows from the fact that by~\equnm{ntl-k-rgt}
$\msp \Nitl{\Hd\xmd k} = \Nitl{\Hd}\msp$ and by~\equnm{sta-aut-rml}, 
and the adequate renaming of row- and column-indices,
$\msp \Matr{\Fd\xmd k} = \Matr{\Fd}\msp$.\\

\pointn
$\msp \Ed = \Fd^{*} \msp$
then
\begin{equation}
\TermE = \left(\TermF\right)^{*} = \StanBloc{x^{*} J}{H}{x^{*}}{U\xmd x^{*}} 
\eqvrg
\notag
\end{equation}
with 
$\msp H = U\matmul x^{*}J + F\msp$.

\noindent
\clmcl{int-vec} holds by~\equnm{ntl-str}.
\clmcl{trm-cst} follows from:
\begin{equation}
\forall \Kd\in\DerTerE\e
\Kd=\Hd\matmul\Fd^{*} 
\e\text{with}\e
\Hd\in\DerTerF 
\e\text{and}\e
\Ermi{\Ed}_{\Kd} = \Ermi{\Fd}_{\Hd}\xmd x^{*} 
           = \TermCnst{\Hd}\xmd\TermCnst{\Fd^{*}} = \TermCnst{\Kd}
\eqpnt
\notag
\end{equation}
In order to prove \clmcl{row-mat}, let~$\Kd$ in~$\DerTerE$, 
hence~$\Kd=\Hd\matmul\Fd^{*}$ with~$\Hd$ in~$\DerTerF$.
By~\equnm{ntl-prd},
\begin{equation}
\Nitl{\Kd} = \Nitl{\Hd} + \TermCnst{\Hd}\xmd\Nitl{\Fd^{*}}
           = \Nitl{\Hd} + \TermCnst{\Hd}\xmd x^{*}\xmd\Nitl{\Fd}
\eqpnt 
\label{q.prf-str}
\notag
\end{equation}
By \lemme{dtr-inc-str}, 
$\msp\DerTer{\Kd}\subseteq\DerTer{\Fd^{*}}\msp$, that is the 
dimension of~$\Nitl{\Kd}$ is contained in~$\DerTerE$.
By induction the claims imply that~$\Nitl{\Hd}$ is the row of 
index~$\Hd$ of~$\Matr{\Fd}=F$ and
$\msp\TermCnst{\Hd}=\Ermi{\Fd}_{\Hd}=U_{\Hd}\msp$.
Hence~\equnm{prf-str} tells that~$\Nitl{\Kd}$ is equal to the row of 
index~$\Kd$ of~$\Matr{\Fd^{*}}=H$.

\subsubsection{The operations addition and product}

In addition to the notation taken in~\equnm{trm-F}, let~$\Gd$ be 
another $\K$-expression and
\begin{equation}
\TermG = \StanBloc{K}{G}{y}{V}
\notag
\end{equation}
its standard term automaton, which fulfil the running claims.


\pointn
$\msp \Ed = \Fd + \Gd \msp$.
We first form
\begin{equation}
\TermF + \TermG = 
\aut{\redmatu{\lignetblblvs{1}{0}{0}},
     \redmatu{\matricettblblvs{0}{J}{K}%
                              {0}{F}{0}%
                              {0}{0}{G}},
     \redmatu{\vecteurtblblvs{x+y}{U}{V}}}\eqvrg
\notag
\end{equation}
a standard automaton of dimension
$\msp\DerTerF\sqcup\DerTerG\msp$.
Let~$\varphi$ be the map
\begin{equation}
\varphi \colon \DerTerF\sqcup\DerTerG \rightarrow
              \DerTerF\cup\DerTerG = \DerTerE
\eqvrg
\notag
\end{equation}
that maps two terms~$\Kd$ and~$\Kd'$ 
of~$\DerTerF\sqcup\DerTerG$ onto one if they are \emph{equal}, hence 
if
$\msp\Kd\in\DerTerF\msp$,
$\msp\Kd'\in\DerTerG\msp$ and 
$\msp\Kd=\Kd'\msp$, or, to state it otherwise, if
$\msp\Kd\in\DerTerF\cap\DerTerG\msp$.

By \clmcl{trm-cst} and \clmcl{row-mat} for~$\TermF$ and~$\TermG$, if 
$\msp\Kd\in\DerTerF\cap\DerTerG\msp$ then
$\msp U_{\Kd} = V_{\Kd} \msp$ and 
$\msp F_{\Kd,.} = G_{\Kd,.}=\Nitl{\Kd}\msp$. 
This is sufficient for~$\varphi$ to be a morphism of automata and 
establishes \propo{add-trm}.

With our convention, both~$J$ and~$K$ can be considered as vectors of 
dimension~$\DerTerF\cup\DerTerG=\DerTerE$.
The image
$\msp\varphi(\TermF+\TermG)\msp$ is~$\TermE$ and can be written
\begin{equation}
\TermE = \StanBloc{J+K}{H}{x+y}{W}
\notag
\end{equation}
where~$H$ is the `fusion' of~$F$ and~$G$ and~$W$ is the `fusion' 
of~$U$ and~$V$.

\clmcl{int-vec} then holds by~\equnm{ntl-sum}.
\clmcl{trm-cst} and \clmcl{row-mat} are directly inherited from the 
corresponding properties for~$\TermF$ and~$\TermG$ (and the 
convention).\\

\pointn
$\msp \Ed = \Fd \matmul \Gd \msp$.
We first form
\begin{equation}
\TermF \matmul \TermG = 
\aut{\redmatu{\lignetblblvs{1}{0}{0}},
     \redmatu{\matricettblblvs{0}{J}{x\xmd K}%
                              {0}{F}{U\matmul K}%
                              {0}{0}{G}},
     \redmatu{\vecteurtblblvs{x\xmd y}{U\xmd y}{V}}}\eqvrg
\notag
\end{equation}
a standard automaton \textit{a priori} of dimension
$\msp\DerTerF\sqcup\DerTerG\msp$, but which we consider as a standard 
automaton of dimension
$\msp\DerTerF\matmul\Gd\sqcup\DerTerG\msp$, that is, we multiply all 
indices from~$\DerTerF$ by~$\Gd$ on the right.

Let~$\psi$ be the `natural' map
\begin{equation}
\psi \colon \DerTerF\matmul\Gd\sqcup\DerTerG \rightarrow
              \DerTerF\matmul\Gd\cup\DerTerG = \DerTerE
\eqvrg
\notag
\end{equation}
that is, $\psi$ maps two terms~$\Kd$ and~$\Kd'$ 
of~$\DerTerF\matmul\Gd\sqcup\DerTerG$ onto one if they are 
\emph{equal}, hence if 
$\msp\Kd\in\DerTerF\matmul\Gd\msp$,
$\msp\Kd'\in\DerTerG\msp$ and 
$\msp\Kd=\Kd'\msp$, or, to state it otherwise, if
$\msp\Kd\in\DerTerF\matmul\Gd\cap\DerTerG\msp$.

Let~$\Kd$ be such an expression, that is,
$\Kd=\Hd\matmul\Gd$ with~$\Hd$ in~$\DerTerF$ and~$\Kd$ belongs 
to~$\DerTerG$.
We consider first the final vector of~$\TermE$.

By \clmcl{trm-cst}, we have on one hand
$\msp\Ermi{\Gd}_{\Kd}=V_{\Kd}=\TermCnst{\Kd}\msp$
and on the other hand
$\msp\TermCnst{\Kd}=\TermCnst{\Hd}\xmd\TermCnst{\Gd}
                   =U_{\Hd}\xmd y\msp$.
Hence, the two entries of index~$\Kd$ of~$U\xmd y$ and~$V$ are equal. 

We then consider~$\Matr{\Ed}$.
By \clmcl{row-mat}, the row of index~$\Kd$ in~$G$ is~$\Nitl{\Kd}$ 
which, by~\equnm{ntl-prd}, is written as
$\msp\Nitl{\Kd} = \Nitl{\Hd} + \TermCnst{\Hd}\xmd\Nitl{\Gd}\msp$, 
that is,
\begin{equation}
  \forall \Ld\in\DerTerG \quantsp
\Nitl{\Kd}_{\Ld} = \Nitl{\Hd}_{\Ld}+\TermCnst{\Hd}\xmd\Nitl{\Gd}_{\Ld}
\eqvrg 
\notag
\end{equation}
which implies in particular that the non-zero entries of~$\Nitl{\Hd}$ 
all correspond to derived terms of~$\Gd$.

The same \clmcl{row-mat} on the other hand implies that the row of 
index~$\Hd$ of~$F$ (of index~$\Hd\matmul\Gd$ in~$\TermE$) 
is~$\Nitl{\Hd}$.
The row of index~$\Hd$ of the matrix~$U\matmul K$ 
is~$\TermCnst{\Hd}\xmd\Nitl{\Gd}$.
If we sum all entries of equal index in~$\DerTerF\matmul\Gd$ on one 
hand and in~$\DerTerG$ on the other hand, we obtain a row-vector~$Z$ 
such that
\begin{equation}
  \forall \Ld\in\DerTerF\matmul\Gd\cup\DerTerG \quantsp
Z_{\Ld} = \Nitl{\Hd}_{\Ld}+\TermCnst{\Hd}\xmd\Nitl{\Gd}_{\Ld}
\eqvrg 
\notag
\end{equation}
and, with our convention,~$\msp Z=\Nitl{\Kd}\msp$.

Together with the property shown above for~$V$ and~$U\xmd y$ this 
proves that~$\psi$ is a morphism of automata and \propo{pro-trm} is 
established.

Moreover, the same computations establish Claims~1 to~3 
for~$\psi(\Term{\Fd\matmul\Gd})$ and by this fact, complete the 
definition of the standard derived-term automaton.

\subsection{The derived-term automaton}
\label{s.der-ter-aut}

Finally, let us define yet another automaton associated with an 
expression~$\Ed$ which is indeed the one we are ultimately aiming at. 
By convention, we consider that the initial state of~$\TermE$ is 
indexed by~$\Ed$.
If~$\Ed$ belongs to~$\DerTerE$, let~$\omega$ be the `natural' map
\begin{equation}
\omega\colon\Ed\sqcup\DerTerE \rightarrow\DerTerE
\eqpnt
\notag
\end{equation}

\begin{proposition}
\label{p.sta-trm-trm}%
The map~$\omega$ is a morphism of automata.
\end{proposition}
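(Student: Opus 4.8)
The plan is to read off the ordinary automaton underlying the \emph{standard} automaton $\TermE$ and then apply the intuitive morphism criterion stated right after \propo{out-mor}. Writing $\TermE = \StanBloc{J}{F}{x}{U}$ with $J=\Init{\Ed}$, $F=\Matr{\Ed}$, $U=\Ermi{\Ed}$ and $x=\TermCnst{\Ed}$, I would regard $\TermE$ as a $\K$-automaton on the state set $\{i\}\sqcup\DerTerE$, where $i$ is the initial state (indexed by~$\Ed$ by our convention). Its transition matrix and final vector are then
\begin{equation}
E = \redmatu{\matriceddblvs{0}{J}{0}{F}}
\EqVrgInt \text{and} \e
T = \redmatu{\vecteurdblvs{x}{U}}
\eqpnt
\notag
\end{equation}
The map $\omega$ is the identity on $\DerTerE$ and sends the initial state~$i$ to the derived-term state~$\Ed\in\DerTerE$, so it is surjective and the only non-trivial class of its map equivalence is $\{i,\Ed\}$.

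By the criterion following \propo{out-mor}, it then suffices to check that the rows of $E\matmul\amlg{\omega}$ of indices $i$ and~$\Ed$ coincide, and that the entries of $T$ of indices $i$ and~$\Ed$ coincide. The amalgamation matrix $\amlg{\omega}$ merges the column of index~$i$ into that of index~$\Ed$; but since $\TermE$ is standard, the column of index~$i$ of~$E$ is null, so $E\matmul\amlg{\omega}$ simply carries the row of index~$i$ to $J$ and the row of index~$\Kd$ to $\Matr{\Ed}_{\Kd,\cdot}$ for every $\Kd\in\DerTerE$.

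The heart of the argument is then a direct appeal to the running claims. On one hand, \clmcl{int-vec} gives $J=\Init{\Ed}=\NitlE$. On the other hand, \clmcl{row-mat} applied to the derived term~$\Ed$ itself says that the row of index~$\Ed$ of $\Matr{\Ed}$ equals $\Nitl{\Ed}$, and $\Nitl{\Ed}=\NitlE$ by the very notation. Hence the rows of index $i$ and~$\Ed$ of $E\matmul\amlg{\omega}$ are both equal to~$\NitlE$. For the final vector, $T_{i}=x=\TermCnst{\Ed}$ while, by \clmcl{trm-cst}, $T_{\Ed}=U_{\Ed}=\TermCnst{\Ed}$, so the two entries agree. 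Both conditions of the criterion hold, and $\omega$ is a morphism of automata.

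I do not expect a genuine obstacle here: once the full automaton is laid out, the only point requiring care is the bookkeeping that separates the initial state~$i$ from the homonymous derived-term state~$\Ed$, together with the observation that the column of index~$i$ is null (so that passing to $E\matmul\amlg{\omega}$ leaves the rows undisturbed). All the substantive content is carried by Claims~\ref{cl.int-vec}--\ref{cl.row-mat}, which were designed precisely so that the initial-state row $\Init{\Ed}$ and the $\Ed$-row of the transition matrix $\Matr{\Ed}$ are both forced to be the common vector~$\NitlE$.
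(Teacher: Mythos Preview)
Your proof is correct and follows essentially the same approach as the paper: both write $\TermE$ in block form and appeal to Claims~\ref{cl.int-vec}, \ref{cl.trm-cst}, and~\ref{cl.row-mat} to conclude that $J=\NitlE=F_{\Ed,\cdot}$ and $x=\TermCnst{\Ed}=U_{\Ed}$. The paper's version is simply terser, omitting the explicit unpacking of the full transition matrix and the observation about the null column of index~$i$ that you spell out.
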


\begin{proof}
Let
\begin{equation}
\TermE = \StanBloc{J}{F}{x}{U}
\eqpnt
\notag
\end{equation}
By \clmcl{int-vec}, $\msp J = \Nitl{\Ed}\msp$.
If~$\Ed$ is in~$\DerTerE$, then, by \clmcl{trm-cst},
$\msp U_{\Ed} = \TermCnst{\Ed} = x\msp$
and, by \clmcl{row-mat},
$\msp F_{\Ed,.} = \Nitl{\Ed}\msp$.
These equalities tell that~$\omega$ is a morphism of automata.
\end{proof}

\renewcommand{\Anti}[1]{\Dc_{#1}}

\begin{definition}
\label{d.ant-trm}%
For every valid $\K$-rational expression~$\Ed$, 
\emph{the derived-term automaton}~$\AntiE$ of~$\Ed$ is defined by
$\msp\AntiE=\omega(\TermE)\msp$ if~$\Ed$ is in~$\DerTerE$ and
$\msp\AntiE=\TermE\msp$ otherwise.
\end{definition}

We then finally can state:

\begin{theorem}
\label{t.ant-trm}%
For every valid $\K$-rational expression~$\Ed$, the
derived-term automaton~$\AntiE$ is a quotient of the standard 
automaton of~$\Ed$,~$\StanE$ (and hence realises the 
series denoted by~$\Ed$). 
\EoP
\end{theorem}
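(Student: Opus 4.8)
The plan is to obtain $\AntiE$ as a quotient of $\StanE$ by composing the morphism already furnished by \theor{trm-aut-exp} with the morphism $\omega$ of \propo{sta-trm-trm}, and then to read off the behaviour from the fact that quotients are equivalent. So the whole argument is a short corollary of the machinery developed in \secti{sta-der-ter}.

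First I would record what \theor{trm-aut-exp} gives us: for every valid expression $\Ed$ there is a morphism from $\StanE$ onto $\TermE$, that is, $\TermE$ is a quotient of $\StanE$. This is the substantial step, and it has already been discharged by the inductive construction together with \propo{mor-sta} and \propo{cnj-equ}; nothing further is needed from it here. Then I would split on the defining alternative of \defin{ant-trm}. If $\Ed\notin\DerTerE$, then $\AntiE=\TermE$ by definition, and the previous observation already exhibits $\AntiE$ as a quotient of $\StanE$. If instead $\Ed\in\DerTerE$, then $\AntiE=\omega(\TermE)$, and \propo{sta-trm-trm} asserts precisely that $\omega\colon\TermE\rightarrow\AntiE$ is a morphism of automata, so $\AntiE$ is a quotient of $\TermE$. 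Since the composition of two morphisms is again a morphism, composing the morphism $\StanE\rightarrow\TermE$ with $\omega$ yields a morphism $\StanE\rightarrow\AntiE$, whence $\AntiE$ is a quotient of $\StanE$ in this case as well.

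It then remains to deduce the parenthetical claim that $\AntiE$ realises the series denoted by $\Ed$. Being a quotient of $\StanE$, the automaton $\AntiE$ is conjugate to $\StanE$ through the relevant amalgamation matrix, so by \propo{cnj-equ} it is equivalent to $\StanE$; and \propo{sta-aut-exp} asserts that $\StanE$ realises $\CompAuto{\Ed}$. Hence $\CompAuto{\AntiE}=\CompAuto{\StanE}=\CompAuto{\Ed}$, which finishes the argument.

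As for the main obstacle: there is essentially none left at this point. All the difficulty has been absorbed into the inductive construction of $\TermE$ and the verification of the running claims (Claims~\ref{cl.int-vec}--\ref{cl.row-mat}) that underlie \theor{trm-aut-exp} and \propo{sta-trm-trm}; the statement above is a two-line consequence of those results via the functoriality of morphisms. The only point demanding any care is the (harmless) case split on whether $\Ed$ is itself one of its derived terms, which is exactly the contingency that \defin{ant-trm} was phrased to accommodate.
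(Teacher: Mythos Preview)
Your argument is correct and is exactly the one the paper intends: the theorem is stated with an end-of-proof symbol and no further text precisely because it follows at once from \theor{trm-aut-exp}, \propo{sta-trm-trm}, the closure of morphisms under composition, and \propo{cnj-equ} together with \propo{sta-aut-exp}. The only cosmetic slip is the direction of conjugacy in your last paragraph (it is $\StanE$ that is conjugate to $\AntiE$, not the other way round), but this does not affect the conclusion since \propo{cnj-equ} yields equivalence in any case.
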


\setcounter{theor-tmp}{\value{theorem}}
\setcounter{theorem}{\value{exa-exp}}
\begin{example}[Continued]
Let $\msp\Ed_{\idex}= a^*\matmul(a^* + (-1)b^*)^*\msp$.\\
We have seen that: 
$\msp\DerTer{\Ed_{\idex}}= 
\{a^*\matmul(a^* + (-1)b^*)^*,\msp b^*\matmul(a^* + (-1)b^*)^*\}\msp$
and
\begin{equation}
\Term{\Ed_{\idex}}=
\aut{\lignet{1}{0}{0},
     \matricett{0}{2a}{-b}{0}{2a}{-b}{0}{a}{0},
	 \vecteurt{1}{1}{1}}
\eqpnt
\notag
\end{equation}

It holds that~$\Ed_{\idex}$ is in~$\DerTer{\Ed_{\idex}}$ and we 
observe that the first and second lines of the matrix, as well as the 
first and second entries of the final vector, both indexed by 
instances of the derived term~$\Ed_{\idex}$, are equal.
The quotient of~$\Term{\Ed_{\idex}}$ by the morphism~$\omega$ is:
\\\centering 
$\msp
\begin{displaystyle}
\Anti{\Ed_{\idex}}=
\aut{\ligned{1}{0},
     \matricedd{2a}{-b}{a}{0},
	 \vecteurd{1}{1}}
\end{displaystyle}
\msp$
\ee drawn as \ee
\VCDraw{%
\begin{VCPicture}{(-2,-2)(5,2)}
    \State{(0,0)}{A}\State{(3,0)}{B}
    \Initial[w]{A}%
    \Final[s]{A}\Final[s]{B}%
    \ArcL{A}{B}{-b}\ArcL{B}{A}{a}%
	\LoopN{A}{2\xmd a}%
\end{VCPicture}%
}%

The expression~$\Ed_{\idex}$ has also the property that the `Thompson 
construction' (when generalised to weighted automata) applied to it 
yields a \emph{non-valid} automaton (see~\cite{LombSaka13}).
\end{example}
\setcounter{theorem}{\value{theor-tmp}}


\section{{Back to derivation}}%
\label{s.bac-der}%

Finally, we reconnect this work with the previous ones and show 
that the derived-term automaton we have just described coincides --- in 
the case where~$M$ is a free monoid --- with the automaton defined by 
the derivation of expressions process introduced in~\cite{Anti96} for 
Boolean automata and in~\cite{LombSaka05a} for weighted automata (see 
also~\cite{Saka09,Saka09b,Saka21}).

\subsection{Preparation: the differential of an expression}%
\label{s.dif-exp}%

We begin with a definition and a property that are valid in the case 
of general (graded) monoids.
The specialisation to the case of free monoids allows a particular 
writing that will be used in the sequel.

\begin{definition}
\label{d.dif-exp}%
Let~$\Ed$ be a $\K$-expression over~$M$.
The \emph{differential of~$\Ed$}, denoted by~$\DiffE$, is the 
expression 
\begin{equation}
\DiffE = \sum_{\Hd\in\DerTerE} \NitlE_{\Hd}\matmul\Hd
\eqpnt
\label{q.dif-exp}
\end{equation}
\end{definition}

\equat{dif-exp} allows to write a `first-order development' of the 
expression \textit{via} the following statement.

\begin{proposition}
\label{p.beh-dif}%
\ee
$\msp\displaystyle{%
\CompExpr{\Ed} = \TermCnst{\Ed} + \CompExpr{\DiffE}}\msp$.
\end{proposition}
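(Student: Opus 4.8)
The plan is to read the statement off the behaviour of the standard derived-term automaton. By \theor{trm-aut-exp} we have $\CompExpr{\Ed}=\CompAuto{\TermE}$, and writing $\TermE=\StanBloc{J}{F}{x}{U}$, formula \equnm{sta-aut-beh} gives $\CompAuto{\TermE}=x+J\mmul F^{*}\mmul U$, where (by \equnm{sta-aut-beh} and \propo{cst-trm}) the constant term is $x=\TermCnst{\Ed}$ and, by \clmcl{int-vec}, $J=\NitlE$. The summand $x$ already accounts for the term $\TermCnst{\Ed}$ of the claim, so the whole proposition reduces to identifying the proper part $J\mmul F^{*}\mmul U$ with $\CompExpr{\DiffE}$.

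First I would unfold the right-hand side. Since $\DiffE=\sum_{\Hd\in\DerTerE}\NitlE_{\Hd}\matmul\Hd$ and the denotation map is additive and multiplicative (\defin{den-ser}), the coefficient $\NitlE_{\Hd}\in\KM$ denoting itself, we get $\CompExpr{\DiffE}=\sum_{\Hd\in\DerTerE}\NitlE_{\Hd}\,\CompExpr{\Hd}=\sum_{\Hd\in\DerTerE}J_{\Hd}\,\CompExpr{\Hd}$. Comparing this with $J\mmul F^{*}\mmul U=\sum_{\Hd\in\DerTerE}J_{\Hd}\,(F^{*}\mmul U)_{\Hd}$, everything follows from the per-state identity
\begin{equation}
(F^{*}\mmul U)_{\Hd}=\CompExpr{\Hd}\qquad\text{for every }\Hd\in\DerTerE.
\notag
\end{equation}

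To establish this identity --- the heart of the argument --- I would exhibit the part of the core of $\TermE$ reachable from the state $\Hd$ as a copy of $\Term{\Hd}$. By \lemme{dtr-inc} we have $\DerTer{\Hd}\subseteq\DerTerE$, and for each $\Kd\in\DerTer{\Hd}$ the $\Kd$-row of $F=\Matr{\Ed}$ equals $\Nitl{\Kd}$ (\clmcl{row-mat}), whose support lies in $\DerTer{\Kd}\subseteq\DerTer{\Hd}$ (again \lemme{dtr-inc}); hence $\DerTer{\Hd}$ is an \emph{absorbing} set of states. Restricting $F$ and $U$ to $\DerTer{\Hd}$ reproduces, by Claims~\ref{cl.trm-cst} and~\ref{cl.row-mat} applied to $\Hd$, exactly the core matrix $\Matr{\Hd}$ and the final vector $\Ermi{\Hd}$ of $\Term{\Hd}$, while the $\Hd$-row of $F$ together with $U_{\Hd}=\TermCnst{\Hd}$ (\clmcl{trm-cst}) reproduce the entry vector $\Init{\Hd}=\Nitl{\Hd}$ (\clmcl{int-vec}) and the initial final weight of $\Term{\Hd}$. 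Expanding $(F^{*}\mmul U)_{\Hd}=U_{\Hd}+\sum_{\Kd\in\DerTer{\Hd}}\Nitl{\Hd}_{\Kd}\,(F^{*}\mmul U)_{\Kd}$ and evaluating each $(F^{*}\mmul U)_{\Kd}$ inside the absorbing block yields $(F^{*}\mmul U)_{\Hd}=\TermCnst{\Hd}+\Init{\Hd}\mmul(\Matr{\Hd})^{*}\mmul\Ermi{\Hd}=\CompAuto{\Term{\Hd}}=\CompExpr{\Hd}$, the last step by \theor{trm-aut-exp}.

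The step I expect to be delicate is the case $\Hd\in\DerTer{\Hd}$, where $\Hd$ is at once the entry point and an ordinary state of the block, so that the naive block on $\DerTer{\Hd}$ has one state fewer than $\Term{\Hd}$. Here I would use that Claims~\ref{cl.row-mat} and~\ref{cl.trm-cst} applied to $\Hd$ force $\Matr{\Hd}_{\Hd,\cdot}=\Init{\Hd}$ and $\Ermi{\Hd}_{\Hd}=\TermCnst{\Hd}$, so the core occurrence of $\Hd$ and the initial state of $\Term{\Hd}$ are indistinguishable (the very twins that $\omega$ merges in \defin{ant-trm}); consequently the behaviour read from the core state $\Hd$ still coincides with $\CompAuto{\Term{\Hd}}$, and the fixed-point expansion above goes through unchanged in both cases. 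An alternative, self-contained route is a direct induction on the formation of $\Ed$ from the inductive clauses defining $\NitlE$ and $\DerTerE$; there the base, dimension-invariant, and sum cases are immediate, and the only genuinely computational cases are the product and the star, the latter recombining $(\TermCnst{\Fd})^{*}$ with the proper part of $\CompExpr{\Fd}$ by means of \theor{eto-ser-qqu}.
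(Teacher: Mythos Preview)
Your argument is correct, but it takes a genuinely different route from the paper's proof. The paper establishes the proposition by a direct structural induction on the formation of~$\Ed$, using only the recursive clauses \equnm{ntl-0-1}--\equnm{ntl-prd} for~$\NitlE$ and~$\DerTerE$ together with \theor{eto-ser-qqu} for the star case --- precisely the ``alternative, self-contained route'' you sketch in your last sentence. Your main argument instead reads the identity off the behaviour of~$\TermE$: you invoke \theor{trm-aut-exp}, the three running claims, and \lemme{dtr-inc} to show that each core state~$\Hd$ of~$\TermE$ realises~$\CompExpr{\Hd}$, so that $J\mmul F^{*}\mmul U=\sum_{\Hd}\NitlE_{\Hd}\,\CompExpr{\Hd}=\CompExpr{\DiffE}$. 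This is more conceptual and explains \emph{why} the formula holds (it is literally the behaviour decomposition of the automaton), at the cost of relying on the full machinery of \secti{sta-der-ter}; the paper's induction is more elementary and keeps \propo{beh-dif} logically independent of the construction of~$\TermE$. Your handling of the case $\Hd\in\DerTer{\Hd}$ is the right observation: Claims~\ref{cl.int-vec}--\ref{cl.row-mat} force the initial row and the~$\Hd$-indexed core row of~$\Term{\Hd}$ to coincide, so $(F^{*}\mmul U)_{\Hd}$ equals $\CompAuto{\Term{\Hd}}$ whether or not~$\Hd$ lies in its own derived-term set.
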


\begin{proofss}
By induction on the formation of~$\Ed$.
\propo{beh-dif}, 
which we rather write under the form
$\msp
\CompExpr{\Ed} = \TermCnst{\Ed} + 
   \sum_{\Hd\in\DerTerE} \NitlE_{\Hd}\xmd\CompExpr{\Hd}
\msp$,
is based on Equations~(\ref{q.ntl-0-1}) 
to~(\ref{q.ntl-prd}) which have been established with the 
construction of the standard derived-term automaton.

\medskip
\noindent
\textbf{Base cases}
\smallskip

\pointn
$\msp \Ed = \zed \msp$ and $\msp \Ed = \zed \msp$
\eee
obvious by the emptyness of~$\DerTerE$.

\pointn
$\msp \Ed = m \in M \msp$
\ee 
as obvious since
$\msp\CompExpr{m} = m\msp$,
$\msp\TermCnst{m} = \zeK\msp$,
$\msp\DerTer{m} = \und\msp$ and
$\msp\Nitl{m}_{\und} = m\msp$.

\medskip
\noindent
\textbf{Induction}
\smallskip

\pointn
$\msp \Ed = k\xmd\Fd \msp$
\ee
$\msp\TermCnst{k\xmd\Fd} =k\xmd\TermCnst{\Fd}\msp$,
$\msp\DerTer{k\xmd\Fd}=\DerTerF\msp$ and
$\msp\Nitl{k\xmd\Fd} =k\xmd\Nitl{\Fd}\msp$,\\
\eee hence \ee
$\msp\TermCnst{\Ed}+\CompExpr{\DiffE}=
k\xmd\TermCnst{\Fd}+k\xmd
\sum_{\Hd\in\DerTerF} \NitlF_{\Hd}\xmd\CompExpr{\Hd}~=
k~\CompExpr{\Fd}=\CompExpr{\Ed}\msp$.\\[1ex]

\pointn
$\msp \Ed = \Fd\xmd k \msp$
\ee
$\msp\TermCnst{\Fd\xmd k} =\TermCnst{\Fd}\xmd k\msp$,
$\msp\DerTer{\Fd\xmd k}=\DerTerF\xmd k\msp$ and
$\msp\Nitl{\Fd\xmd k} = \Nitl{\Fd}\msp$,\\
\eee hence \ee
$\msp\TermCnst{\Ed}+\CompExpr{\DiffE}=
\TermCnst{\Fd}\xmd k+
\sum_{\Hd\in\DerTerF} \NitlF_{\Hd}\xmd\CompExpr{\Hd\xmd k}~=
\CompExpr{\Fd}~k=\CompExpr{\Ed}\msp$.\\[1ex]

\pointn
$\msp \Ed = \Fd\plusexp\Gd \msp$.
\ee
$\msp\TermCnst{\Fd\plusexp\Gd} =\TermCnst{\Fd}+\TermCnst{\Gd}\msp$,
$\msp\DerTer{\Fd\plusexp\Gd}=\DerTerF\cup\DerTerG\msp$\\
\eee\eee and
$\msp\Nitl{\Fd\plusexp\Gd} = \Nitl{\Fd}+\Nitl{\Gd}\msp$, hence \\
\eee  
$\msp\TermCnst{\Ed}+\CompExpr{\DiffE}=
\TermCnst{\Fd}+\TermCnst{\Gd}+
\sum_{\Hd\in\DerTerF\cup\DerTerF} 
    \left(\Nitl{\Fd}+\Nitl{\Gd}\right)\xmd\CompExpr{\Hd}~=
\CompExpr{\Fd}+\CompExpr{\Fd}=\CompExpr{\Ed}\msp$.\\

\pointn
$\msp \Ed = \Fd \prodexp \Gd \msp$.
\ee
$\msp\TermCnst{\Fd\prodexp\Gd} =\TermCnst{\Fd}\xmd\TermCnst{\Gd}\msp$,
$\msp\DerTer{\Fd\prodexp\Gd}=\DerTerF\prodexp\Gd\cup\DerTerG\msp$\\
\eee\eee and
$\msp\Nitl{\Fd\prodexp\Gd}=\Nitl{\Fd}+\TermCnst{\Fd}\xmd\Nitl{\Gd}\msp$,
more precisely:\\
\ee
$\msp\forall\Hd\in\DerTerF\quantsmsp
   \Nitl{\Fd\prodexp\Gd}_{\Hd\prodexp\Gd} =
   \Nitl{\Fd}_{\Hd}+\TermCnst{\Fd}\xmd\Nitl{\Gd}_{\Hd\prodexp\Gd}\msp$
\e and\\
\ee
$\msp\forall\Kd\in\DerTerG\bk\DerTerF\prodexp\Gd\quantsmsp
   \Nitl{\Fd\prodexp\Gd}_{\Kd}=\TermCnst{\Fd}\xmd\Nitl{\Gd}_{\Kd}\msp$.
\e It then comes
\begin{align}
\TermCnst{\Ed}+\CompExpr{\DiffE} &=
  \TermCnst{\Fd}\xmd\TermCnst{\Gd} + 
  \sum_{\Kd\in\DerTerE} \NitlE_{\Kd}\xmd\CompExpr{\Kd}
\notag
\\
&= \TermCnst{\Fd}\xmd\TermCnst{\Gd}
  + \sum_{\Hd\in\DerTerF} \NitlF_{\Hd}\xmd\CompExpr{\Hd\prodexp\Gd}~
  + \TermCnst{\Fd}\xmd \sum_{\Hd\in\DerTerF}
        \NitlG_{\Hd\prodexp\Gd}\xmd\CompExpr{\Hd\prodexp\Gd}
\eee
\notag
\\
& \eee\eee\eee\eee
  + \TermCnst{\Fd}\xmd \sum_{\Kd\in\DerTerG\bk\DerTerF\prodexp\Gd}
        \NitlG_{\Kd}\xmd\CompExpr{\Kd}
\notag
\\
&= \TermCnst{\Fd}\xmd\TermCnst{\Gd}
   + \left(\sum_{\Hd\in\DerTerF}\NitlF_{\Hd}\xmd\CompExpr{\Hd}\right)~
                                                \CompExpr{\Gd}
  +\TermCnst{\Fd}\xmd\sum_{\Kd\in\DerTerG}\NitlG_{\Kd}\xmd\CompExpr{\Kd}
\notag
\\
&= \TermCnst{\Fd}\xmd\left(\TermCnst{\Gd} +
        \sum_{\Kd\in\DerTerG}\NitlG_{\Kd}\xmd\CompExpr{\Kd}\right) 
   + 
   \left(\sum_{\Hd\in\DerTerF}\NitlF_{\Hd}\xmd\CompExpr{\Hd}\right)\xmd
                                                \CompExpr{\Gd}
\notag
\\
 &= \CompExpr{\Fd}~\CompExpr{\Gd}~=~\CompExpr{\Ed}
\eqpnt
\notag
\end{align} 
\medskip

\pointn
$\msp \Ed = \Fd^{*} \msp$
\ee
$\msp\TermCnst{\Fd^{*}} =\left(\TermCnst{\Fd}\right)^{*}\msp$,
$\msp\DerTer{\Fd^{*}}=\DerTerF\prodexp\Fd^{*}\msp$ and
$\msp\Nitl{\Fd^{*}} = \left(\TermCnst{\Fd}\right)^{*}\Nitl{\Fd}\msp$,
hence
\begin{align}
\TermCnst{\Ed}+\CompExpr{\DiffE} &=
\left(\TermCnst{\Fd}\right)^{*} +
\left(\TermCnst{\Fd}\right)^{*} 
  \sum_{\Kd\in\DerTerF}\NitlF_{\Kd}\xmd\CompExpr{\Kd\prodexp\Fd^{*}}
\notag
\\
&=
\left(\TermCnst{\Fd}\right)^{*} +
\left(\TermCnst{\Fd}\right)^{*} 
  \left(\sum_{\Kd\in\DerTerF}\NitlF_{\Kd}\xmd\CompExpr{\Kd}\right)
  ~\CompExpr{\Fd^{*}}
\eqpnt
\notag
\\
\intertext{The term
$\msp\sum_{\Kd\in\DerTerF}\NitlF_{\Kd}\xmd\CompExpr{\Kd}\msp$
is the \emph{proper part}~$\PartProp{\CompExpr{\Fd}\,}$ 
of~$\CompExpr{\Fd}~$.
Let us write $\msp x=\TermCnst{\Fd}\msp$.
It then comes
}
\TermCnst{\Ed}+\CompExpr{\DiffE} &=
x^{*} + x^{*} \xmd \PartProp{\CompExpr{\Fd}\,}\xmd 
                          \left(\CompExpr{\Fd}\right)^{*} =
x^{*} + x^{*} \xmd \PartProp{\CompExpr{\Fd}\,}\xmd 
        x^{*}\left(\PartProp{\CompExpr{\Fd}\,}\xmd x^{*}\right)^{*} =
x^{*} \left(\unK + \PartProp{\CompExpr{\Fd}\,}\xmd 
    x^{*}\left(\PartProp{\CompExpr{\Fd}\,}\xmd x^{*}\right)^{*}\right)
\notag
\\
&= x^{*} \left(\PartProp{\CompExpr{\Fd}\,}\xmd x^{*}\right)^{*} =
   \left(\CompExpr{\Fd}\right)^{*} = \CompExpr{\Ed}
\eqpnt
\tag*{\qedsymbol}
\end{align} 
\end{proofss}

\medskip
\medskip

\emph{If~$M=A^{*}$ is a free monoid}, every entry of~$\NitlE$ is a linear 
combination of letters in~$A$.
In~\equnm{dif-exp}, we can reorder the terms and see the
vector~$\NitlE$ as the sum of $\jsCard{A}$ $\K$-vectors of
dimension~$\DerTerE$ multiplied by the letters of~$A$:
\begin{equation}
\NitlE = 
\sum_{a\in A}\msp \dervEa\matmul a
\eqvrg
\notag
\end{equation}
and the differential becomes
\begin{equation}
\DiffE = 
\sum_{a\in A}\msp a\matmul\!\!\sum_{\Hd\in\DerTerE} \dervEa_{\Hd}\xmd\Hd
\eqpnt
\label{q.dif-exp-2}
\end{equation}

As recalled in the introduction, the \emph{quotient operation} may be 
defined on languages
\begin{equation}
    \fa L\in\jsPart{\Ae}\quantvrg\fa u\in\Ae\quantsp
     u^{-1}L = \Defi{v\in\Ae}{u\xmd v\in L}
     \eqpnt
     \eee
\notag
\end{equation}
and on series over a free monoid
\begin{equation}
    \fa s\in\KA\quantvrg\fa u\in\Ae\quantsp
	u^{-1}s
	\e\text{is defined by}\e
	\fa v\in\Ae\quantsmsp
     \bra{u^{-1}s,v} = \bra{s,u\xmd v}
     \eqpnt
\notag
\end{equation}
From \propo{beh-dif} and \equnm{dif-exp-2}, directly follows then:
\begin{equation}
a^{-1}\CompExpr{\Ed} = \sum_{\Hd\in\DerTerE} \dervEa_{\Hd}\xmd\Hd
\eqpnt
\notag
\end{equation}

\subsection{The derivation of an expression}%
\label{s.der-exp}%

The result of the \emph{derivation of a (Boolean) expression}, as
defined by Antimirov in~\cite{Anti96} after modification of the
definition of \emph{derivatives} by Brzozowski~\cite{Brzo64}, is a 
\emph{set of expressions}.
The result of the \emph{derivation of a weighted expression}, which 
we have defined in~\cite{LombSaka05a} as a direct generalisation of 
the former, is a \emph{linear combination of (weighted) expressions}.

\begin{definition}
\label{d.der-exp}
Let~$\Ed$ be a $\K$-expression over~$\Ae$ and~$a$ in~$A$.
The \emph{derivation} of~$\Ed$ with respect to~$a$, 
denoted by~${\ExpDer{\Ed}}$, 
is a linear combination of expressions in~$\KRA$,
inductively defined by the following formulas.

\medskip

{\allowdisplaybreaks
\noindent
\textbf{Base cases}
\begin{alignat}{2}
\pointn&\ee&
\ExpDer{\zed} &= \ExpDer{\und} = \zeK \eqpnt
\hskip25.3em
\label{q.der-0-1}
\\[1ex]
\pointn&\ee&
\ExpDer{b} &= 
\left \{
\begin{array}{cl}
\unK & \quad \text{if \quad}  b = a \msp, \\
\zeK & \quad \text{otherwise}.
\end{array} \right .
\label{q.der-b-1}
\end{alignat}
\textbf{Induction}
\begin{alignat}{2}
\pointn&\ee&
\ExpDer{(k\xmd\Fd)}  &= k\xmd\ExpDerF \eqpnt
\hskip24.7em
\label{q.der-k-lft}
\\[1ex]
\pointn&\ee&
\ExpDer{(\Fd\xmd k)} &= \left(\left[\ExpDerF\right]\xmd k\right) \eqpnt
\label{q.der-k-rgt}
\\[1ex]
\pointn&\ee&
\ExpDer{(\Fd\plusexp\Gd)} &= \ExpDerF \plusK \ExpDerG \eqpnt 
\label{q.der-sum}
\\[1ex]
\pointn&\ee&
\ExpDer{(\Fd\prodexp\Gd)} &= 
   \left(\left[\ExpDerF\right] \prodexp \Gd\right)\plusK 
   \TermCst{\Fd}\xmd\ExpDerG \eqpnt
\label{q.der-prd}
\\[1ex]
\pointn&\ee&
\ExpDer{(\Fd^{*})} &= \TermCst{\Fd}^{*}\xmd 
   \left(\left[\ExpDerF\right]\prodexp \Fd^{*}\right) \eqpnt
\label{q.der-str}
\end{alignat}
}
\end{definition}

\subsection{The reconciliation}%
\label{s.rec-onc}%

\begin{theorem}
\label{t.rec-onc}%
Let~$\Ed$ be a $\K$-expression over~$\Ae$ and~$a$ in~$A$.
The derivation of~$\Ed$ with respect to~$a$ is the coefficient of~$a$ 
in~$\DiffE$:
\begin{equation}
\ExpDerE = \sum_{\Hd\in\DerTerE} \dervEa_{\Hd}\xmd\Hd
\eqpnt
\label{q.rec-onc}
\end{equation}
\end{theorem}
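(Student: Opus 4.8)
The plan is to prove \eqref{q.rec-onc} by induction on the formation of~$\Ed$, comparing the inductive definition of the derivation~$\ExpDerE$ (equations \eqref{q.der-0-1}--\eqref{q.der-str}) with the inductive definition of the vector~$\NitlE$ (equations \eqref{q.ntl-0-1}--\eqref{q.ntl-prd} and \eqref{q.ntl-str}), which has already been established alongside the construction of the standard derived-term automaton. Since $M=\Ae$ is free, every entry $\NitlE_{\Hd}$ is a $\K$-linear combination of letters, and $\dervEa_{\Hd}$ is \emph{by definition} the coefficient of~$a$ in $\NitlE_{\Hd}$. Writing $\bra{\cdot,a}$ for this coefficient-extraction map, the whole argument rests on the single remark that $\bra{\cdot,a}$ is $\K$-linear and therefore commutes with the left exterior multiplications and with the additions occurring in the $\NitlE$-formulas; the right multiplications and the products instead move into the derived terms themselves, exactly as they do in the proof of \propo{beh-dif}. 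This last proof is the template to follow throughout.

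For the base cases, $\DerTer{\zed}$ and $\DerTer{\und}$ are empty, so the right-hand side of \eqref{q.rec-onc} is the empty sum~$\zeK$, matching \eqref{q.der-0-1}; for $\Ed=b$ a letter, $\DerTer{b}=\{\und\}$ and $\Nitl{b}_{\und}=b$ by \eqref{q.ntl-m}, whence the right-hand side is $\bra{b,a}\,\und$, equal to $\unK\,\und$ if $b=a$ and $\zeK$ otherwise, in agreement with \eqref{q.der-b-1} once the scalar $\unK$ is read as $\unK\,\und$. The dimension-invariant cases are then one-liners. For $\Ed=k\xmd\Fd$ we have $\DerTerE=\DerTerF$ and $\NitlE=k\xmd\NitlF$ by \eqref{q.dtr-k-lft} and \eqref{q.ntl-k-lft}, so $\bra{\cdot,a}$ pulls out the left scalar and the induction hypothesis gives $k\xmd\ExpDerF$, which is \eqref{q.der-k-lft}. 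For $\Ed=\Fd\xmd k$, instead, $\NitlE=\NitlF$ is unchanged while the $k$ is carried on the right of each derived term, so $\dervEa_{\Hd\xmd k}=\bra{\NitlF_{\Hd},a}$ and the sum factors as $(\sum_{\Hd\in\DerTerF}\bra{\NitlF_{\Hd},a}\xmd\Hd)\xmd k=[\ExpDerF]\xmd k$, which is \eqref{q.der-k-rgt}. The star case is identical: by \eqref{q.ntl-str}, $\dervEa_{\Kd\matmul\Fd^{*}}=(\TermCnst{\Fd})^{*}\xmd\bra{\NitlF_{\Kd},a}$, and factoring out $(\TermCnst{\Fd})^{*}$ on the left and $\Fd^{*}$ on the right reproduces \eqref{q.der-str}. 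For the sum $\Ed=\Fd+\Gd$, \eqref{q.ntl-sum} gives $\dervEa_{\Kd}=\bra{\NitlF_{\Kd},a}+\bra{\NitlG_{\Kd},a}$ over $\DerTerF\cup\DerTerG$, and regrouping the two families (using the convention that the coefficients vanish off the respective supports) yields $\ExpDerF\plusK\ExpDerG$, which is \eqref{q.der-sum}.

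The product $\Ed=\Fd\matmul\Gd$ is the only delicate case, precisely because the union $\DerTerF\matmul\Gd\cup\DerTerG$ need not be disjoint --- this is the recurring subtlety of the paper. Here I would reuse the splitting already carried out in \propo{beh-dif}: applying $\bra{\cdot,a}$ to the ``more precisely'' form of \eqref{q.ntl-prd} gives $\dervEa_{\Hd\matmul\Gd}=\bra{\NitlF_{\Hd},a}+\TermCnst{\Fd}\xmd\bra{\NitlG_{\Hd\matmul\Gd},a}$ for $\Hd\in\DerTerF$ and $\dervEa_{\Kd}=\TermCnst{\Fd}\xmd\bra{\NitlG_{\Kd},a}$ for $\Kd\in\DerTerG\bk\DerTerF\matmul\Gd$. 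Summing over the disjoint decomposition $\DerTerE=\DerTerF\matmul\Gd\sqcup(\DerTerG\bk\DerTerF\matmul\Gd)$, the $\NitlF$-part factors as $(\sum_{\Hd\in\DerTerF}\bra{\NitlF_{\Hd},a}\xmd\Hd)\matmul\Gd=[\ExpDerF]\matmul\Gd$, while the two $\TermCnst{\Fd}\xmd\NitlG$-parts recombine into $\TermCnst{\Fd}\xmd\sum_{\Kd\in\DerTerG}\bra{\NitlG_{\Kd},a}\xmd\Kd=\TermCnst{\Fd}\xmd\ExpDerG$; the key point making this work is that any term $\Hd\matmul\Gd$ lying \emph{outside} $\DerTerG$ contributes $\bra{\NitlG_{\Hd\matmul\Gd},a}=\zeK$ and drops out, so nothing is double-counted and nothing is lost. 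By the induction hypothesis the total is $([\ExpDerF]\prodexp\Gd)\plusK\TermCnst{\Fd}\xmd\ExpDerG$, which is exactly \eqref{q.der-prd}. The main obstacle is therefore purely the bookkeeping of this overlap of index families in the product step; once it is handled as in \propo{beh-dif}, every other case is a direct application of the linearity of $\bra{\cdot,a}$ together with the induction hypothesis.
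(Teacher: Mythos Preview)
Your proposal is correct and follows essentially the same approach as the paper: an induction on the formation of~$\Ed$ that matches, case by case, the recursive clauses \eqref{q.der-0-1}--\eqref{q.der-str} for the derivation against the clauses \eqref{q.ntl-0-1}--\eqref{q.ntl-prd} and \eqref{q.ntl-str} for~$\NitlE$, exactly in the spirit of the proof of \propo{beh-dif}. The paper's own proof is in fact terser than yours --- it presents each case as a ``mere verification'' by juxtaposing the two formulas --- whereas you spell out the product case and the handling of the overlap $\DerTerF\matmul\Gd\cap\DerTerG$ in full; but the underlying argument is the same.
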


A direct consequence of this statement is the fact that derivation is 
the lifting of the quotient of series at the level of expressions.

\begin{corollary}
\label{c.rec-onc}%
\ee
$\msp\displaystyle{\CompExpr{\ExpDerE} = a^{-1}\CompExpr{\Ed}}\msp$.
\end{corollary}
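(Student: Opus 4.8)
The plan is to prove the reconciliation identity~\equnm{rec-onc} by structural induction on the formation of~$\Ed$, exploiting the fact that \emph{both} sides are built by parallel inductions over the same grammar. The right-hand side is the formal linear combination $\sum_{\Hd\in\DerTerE}\dervEa_{\Hd}\xmd\Hd$ of~$\KRA$, in which, by the free-monoid decomposition $\NitlE=\sum_{a\in A}\dervEa\matmul a$, the coefficient $\dervEa_{\Hd}$ is exactly the coefficient of the letter~$a$ in the entry~$\NitlE_{\Hd}$. The whole argument rests on the observation that this operation of \emph{extracting the coefficient of~$a$} is $\K$-linear and therefore commutes with every one of the inductive rules \equnm{ntl-0-1}--\equnm{ntl-prd} defining~$\NitlE$; accordingly, my task reduces to checking, operator by operator, that the image of each $\Nitl{}$-rule under this extraction reproduces the matching rule \equnm{der-0-1}--\equnm{der-str} for~$\ExpDerE$. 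Note that the supports are automatically compatible, since $\DerTerE$ contains $\DerTerF\matmul\Gd$, $\DerTerG$, $\DerTerF\matmul\Fd^{*}$, etc., by \defin{der-ter}, so equality may be tested coefficientwise on~$\DerTerE$.

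I would then run the induction. For $\Ed=\zed,\und$ both sides vanish because $\DerTerE=\emptyset$; for $\Ed=b$ with $b\in A$ the only derived term is~$\und$ and $\Nitl{b}_{\und}=b$, so the coefficient of~$a$ is $\unK$ or $\zeK$ according as $b=a$ or not, which matches~\equnm{der-b-1} under the usual identification of $k\xmd\und$ with the scalar~$k$. For the dimension-invariant operators the correspondence is immediate: \equnm{ntl-k-lft}, \equnm{ntl-k-rgt} and \equnm{ntl-str} yield $\dervEa=k\xmd\derv{\Fd}{a}$, $\dervEa_{\Hd\xmd k}=\derv{\Fd}{a}_{\Hd}$, and $\dervEa_{\Hd\matmul\Fd^{*}}=(\TermCnst{\Fd})^{*}\xmd\derv{\Fd}{a}_{\Hd}$ respectively, and pulling the left scalar, the right factor~$k$, or the right factor~$\Fd^{*}$ out of the sum $\sum_{\Hd}\dervEa_{\Hd}\xmd\Hd$ reproduces precisely \equnm{der-k-lft}, \equnm{der-k-rgt} and \equnm{der-str}, using the induction hypothesis $\sum_{\Hd}\derv{\Fd}{a}_{\Hd}\xmd\Hd=\ExpDerF$.

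The addition and product cases are the substantive ones, and here the main obstacle I anticipate is purely one of bookkeeping: the index sets $\DerTerF\cup\DerTerG$ and $\DerTerF\matmul\Gd\cup\DerTerG$ need not be disjoint (as stressed right after \lemme{dtr-car}), and the product moreover reindexes $\DerTerF$ by right-multiplication with~$\Gd$. The device that dissolves this obstacle is the paper's convention of extending the $\Nitl{}$-vectors by~$\zeK$ to the common index set, which is exactly what makes coefficient extraction additive: \equnm{ntl-sum} gives $\dervEa_{\Hd}=\derv{\Fd}{a}_{\Hd}+\derv{\Gd}{a}_{\Hd}$ and \equnm{ntl-prd} gives $\dervEa_{\Kd}=\derv{\Fd}{a}_{\Kd}+\TermCnst{\Fd}\xmd\derv{\Gd}{a}_{\Kd}$ for each common index. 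Summing against the derived terms and splitting the possibly overlapping index set then gives $\sum_{\Hd}\dervEa_{\Hd}\xmd\Hd=\ExpDerF\plusK\ExpDerG$ in the first case, and, using $\bigl(\sum_{\Hd\in\DerTerF}\derv{\Fd}{a}_{\Hd}\xmd\Hd\bigr)\matmul\Gd=\sum_{\Hd}\derv{\Fd}{a}_{\Hd}\xmd(\Hd\matmul\Gd)$, the combination $[\ExpDerF]\prodexp\Gd\plusK\TermCst{\Fd}\xmd\ExpDerG$ in the second; these are exactly \equnm{der-sum} and \equnm{der-prd}. The overlap causes no double-counting precisely because \emph{both} the derivation rule and the extracted $\Nitl{}$-rule add the two contributions coefficientwise on the shared indices, so the two descriptions stay in lockstep.

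Finally, \corol{rec-onc} would follow at once: applying the $\K$-linear denotation map $\Ed\mapsto\CompExpr{\Ed}$ (additive and scalar-homogeneous by \defin{den-ser}) to~\equnm{rec-onc} gives $\CompExpr{\ExpDerE}=\sum_{\Hd\in\DerTerE}\dervEa_{\Hd}\xmd\CompExpr{\Hd}$, which is the very identity for $a^{-1}\CompExpr{\Ed}$ deduced from \propo{beh-dif} and \equnm{dif-exp-2} at the close of \secti{dif-exp}.
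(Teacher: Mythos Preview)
Your proposal is correct and follows essentially the same route as the paper: you establish \theor{rec-onc} by structural induction on~$\Ed$, checking case by case that the derivation rules \equnm{der-0-1}--\equnm{der-str} match the $\Nitl{}$-rules \equnm{ntl-0-1}--\equnm{ntl-prd} after extracting the coefficient of~$a$, and then derive \corol{rec-onc} by applying~$\CompExpr{\cdot}$ and invoking the identity $a^{-1}\CompExpr{\Ed}=\sum_{\Hd\in\DerTerE}\dervEa_{\Hd}\xmd\CompExpr{\Hd}$ obtained from \propo{beh-dif} and~\equnm{dif-exp-2}. Your treatment of the overlap in the addition and product cases is in fact more explicit than the paper's, which simply records the relevant $\DerTer{}$- and $\Nitl{}$-identities and declares the verification complete.
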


\begin{proof}[Proof of \theor{rec-onc}]
It is less a proof than a mere verification without mystery,
by induction on the formation of~$\Ed$, and based on
Equations~(\ref{q.ntl-0-1}) to~(\ref{q.ntl-prd}).

\medskip

\noindent
\textbf{Base cases}

\medskip

\pointn
$\msp \Ed = \zed \msp$ and $\msp \Ed = \zed \msp$
\eee
obvious by the emptyness of~$\DerTerE$.

\pointn
$\msp \Ed = a \in A \msp$
\ee 
as obvious since
$\msp\DerTer{a} = \und\msp$ and
$\msp\Nitl{a}_{\und} = a\msp$.

\medskip

\noindent
\textbf{Induction}

\medskip

\pointn
$\msp \Ed = k\xmd\Fd \msp$
\ee
$\msp\displaystyle{\ExpDer{(k\xmd\Fd)} = k\xmd\ExpDerF}\msp$ 
on one hand-side,\\[1.4ex]
\eee\ee\e\ 
$\DerTer{k\xmd\Fd}=\DerTerF\msp$ and
$\msp\Nitl{k\xmd\Fd} =k\xmd\Nitl{\Fd}\msp$ 
on the other;\\
\eee\ee\e\ 
if~\equnm{rec-onc} holds for~$\Fd$, it holds for~$k\xmd\Fd$.\\

\pointn
$\msp \Ed = \Fd\xmd k \msp$
\ee
$\msp\displaystyle{\ExpDer{(\Fd\xmd k)} = \ExpDerF\xmd k}\msp$ 
on one hand-side,\\[1.4ex]
\eee\ee\e\ 
$\DerTer{\Fd\xmd k}=\DerTerF\xmd k\msp$ and
$\msp\Nitl{\Fd\xmd k} =\Nitl{\Fd}\msp$ 
on the other;\\
\eee\ee\e\ 
if~\equnm{rec-onc} holds for~$\Fd$, it holds for~$\Fd\xmd k$.\\

\pointn
$\msp \Ed = \Fd\plusexp\Gd \msp$.
\ee
$\msp\displaystyle{\ExpDer{(\Fd\plusexp\Gd)}=\ExpDerF\plusK\ExpDerG}\msp$ 
on one hand-side,\\[1.4ex]
\eee\ee\e\ 
$\DerTer{\Fd\plusexp\Gd}=\DerTerF\cup\DerTerG\msp$ and
$\msp\Nitl{\Fd\plusexp\Gd} = \Nitl{\Fd}+\Nitl{\Gd}\msp$ 
on the other;\\
\eee\ee\e\ 
if~\equnm{rec-onc} holds for~$\Fd$ and~$\Gd$, it holds for~$\Fd\plusexp\Gd$.\\

\pointn
$\msp \Ed = \Fd\prodexp\Gd \msp$.
\ee
$\msp\displaystyle{\ExpDer{(\Fd\prodexp\Gd)}= 
   \left(\left[\ExpDerF\right] \prodexp \Gd\right)\plusK 
   \TermCst{\Fd}\xmd\ExpDerG}\msp$ 
on one hand-side,\\[1.6ex]
\ee\ee 
$\DerTer{\Fd\prodexp\Gd}=\DerTerF\prodexp\Gd\cup\DerTerG\msp$
and
$\msp\Nitl{\Fd\prodexp\Gd}=\Nitl{\Fd}+\TermCnst{\Fd}\xmd\Nitl{\Gd}\msp$ 
on the other;\\[1.4ex]

\ee\ee 
more precisely:\\[1.4ex]
\ee\ee
$\forall\Hd\in\DerTerF\quantsmsp
   \Nitl{\Fd\prodexp\Gd}_{\Hd\prodexp\Gd} =
   \Nitl{\Fd}_{\Hd}+\TermCnst{\Fd}\xmd\Nitl{\Gd}_{\Hd\prodexp\Gd}\msp$
\e and\\
\ee\ee
$\forall\Kd\in\DerTerG\bk\DerTerF\prodexp\Gd\quantsmsp
   \Nitl{\Fd\prodexp\Gd}_{\Kd}=\TermCnst{\Fd}\xmd\Nitl{\Gd}_{\Kd}\msp$;\\
\ee\ee 
if~\equnm{rec-onc} holds for~$\Fd$ and~$\Gd$, it holds for~$\Fd\prodexp\Gd$.\\

\pointn
$\msp \Ed = \Fd^{*} \msp$
\ee
$\msp\displaystyle{\ExpDer{(\Fd^{*})}=\TermCst{\Fd}^{*}\xmd 
   \left(\left[\ExpDerF\right]\prodexp \Fd^{*}\right)}\msp$ 
on one hand-side,\\[1.4ex]
\eee\ee\e\ 
$\DerTer{\Fd^{*}}=\DerTerF\prodexp\Fd^{*}\msp$ and
$\msp\Nitl{\Fd^{*}} = \left(\TermCnst{\Fd}\right)^{*}\Nitl{\Fd}\msp$,
on the other;\\
\eee\ee\e\ 
if~\equnm{rec-onc} holds for~$\Fd$, it holds for~$\Fd^{*}$.
\end{proof}

This conclude the proof that the derived-term automaton we have
defined in this paper coincides with the one that was defined in the
previous work dealing with expressions over the free monoids.

It is noteworthy that other works that dealt with the derivation of
expressions outside from the scope of the free
monoid~\cite{Dema17,KonsEtAl21} have considered entities which are closed
to ours.
In particular, the differential of an expression is called the
\emph{linear form} in~\cite{KonsEtAl21}, and the sum of the
differential and the constant term is the \emph{expansion}
in~\cite{Dema17}.

Nevertheless, we have taken here the formalism to its logical
conclusion and designed a construction of the derived-term automaton
that gets rid of the derivation, derivatives or their analogues.



\addcontentsline{toc}{section}{References}%

{\small
\bibliographystyle{\BibDir eta.en}
\bibliography{\BibDir DTWD-bib}
}


\end{document}